\newcommand{\optionaldesc}[2]{%
  \phantomsection
  #1\protected@edef\@currentlabel{#1}\label{#2}%
}
\theoremstyle{plain}
\newtheorem{thm}{Theorem}[section]
\newtheorem{lem}[thm]{Lemma}
\newtheorem{prop}[thm]{Proposition}
\newtheorem{cor}[thm]{Corollary}
\theoremstyle{remark}
\newtheorem{defn}[thm]{Definition}
\newtheorem{exmp}[thm]{Example}
\newtheorem{rem}[thm]{Remark}
\numberwithin{equation}{section}
\newcommand*{\defeq}{\mathrel{\vcenter{\baselineskip0.5ex \lineskiplimit0pt\hbox{\scriptsize.}\hbox{\scriptsize.}}}=}
\newcommand{\var}{\text{Var}}
\newcommand{\Cov}{\text{Cov}}
\newcommand{\ubar}[1]{\text{\b{$#1$}}}
\newcommand\smallO{
  \mathchoice
    {{\scriptstyle\mathcal{O}}}
    {{\scriptstyle\mathcal{O}}}
    {{\scriptscriptstyle\mathcal{O}}}
    {\scalebox{.7}{$\scriptscriptstyle\mathcal{O}$}}
}
\newcommand{\rank}{\text{rank}}
\title[Testing Many Constraints in Possibly Irregular Models]{Testing Many Constraints in Possibly Irregular Models Using Incomplete U-Statistics}
\author[N.~Sturma]{Nils Sturma}
\email{nils.sturma@tum.de}
\author[M.~Drton]{Mathias Drton} 
\address[N.~Sturma, M.~Drton]{
Munich Center for Machine Learning and Department of Mathematics,
School of Computation, Information and Technology, Technical University of Munich
}
\email{mathias.drton@tum.de}
\author[D.~Leung]{Dennis Leung}
\address[D.~Leung]{University of Melbourne, Australia; School of Mathematics and Statistics}
\email{dennis.leung@unimelb.edu.au}
\begin{document}
\sloppy

\begin{abstract}
We consider the problem of testing a null hypothesis defined by equality and inequality constraints on a statistical parameter. Testing such hypotheses can be challenging because the number of relevant constraints may be on the same order or even larger than the number of observed samples. Moreover, standard distributional approximations may be invalid due to irregularities in the null hypothesis. We propose a general testing methodology that aims to circumvent these difficulties. The constraints are estimated by incomplete U-statistics, and we derive critical values by Gaussian multiplier bootstrap. We show that the bootstrap approximation of incomplete U-statistics is valid for kernels that we call mixed degenerate when the number of combinations used to compute the incomplete U-statistic is of the same order as the sample size. It follows that our test controls type I error even in irregular settings. Furthermore, the bootstrap approximation covers high-dimensional settings making our testing strategy applicable for problems with many constraints. The methodology is applicable, in particular, when the constraints to be tested are polynomials in U-estimable parameters.  As an application, we consider goodness-of-fit tests of latent tree models for multivariate data.
\end{abstract}

\keywords{Gaussian approximation, high dimensions, incomplete U-statistics, latent tree model, multiplier bootstrap, nonasymptotic bound}

\subjclass[2020]{62F03, 62R01, 62E17}

\maketitle

\section{Introduction} \label{sec:introduction}

Let $\{P_{\theta}: \theta \in \Theta\}$ be a statistical model with parameter space $\Theta \subseteq \mathbb{R}^d$. Given i.i.d.~samples $X_1, \ldots, X_n$ from an unknown distribution $P_{\theta}$,  we are interested in testing
\begin{equation} \label{eq:testing-problem}
    H_0 : \theta \in \Theta_0 \quad \textrm{ vs. } \quad H_1 : \theta \in \Theta \setminus \Theta_0
\end{equation}
for a subset $\Theta_0 \subseteq \Theta$. In this paper we consider the situation where the null hypothesis $\Theta_0$ is defined by constraints, that is,
\begin{equation} \label{eq:nullhypothesis}
\Theta_0 = \{\theta \in \Theta: \ f_j(\theta) \leq 0 \textrm{ for all } j=1,\ldots,p\},
\end{equation}
where each constraint $f_j$ is a function $f_j: \mathbb{R}^d \rightarrow \mathbb{R}$. The description of the null hypothesis also allows for equality constraints since $f_j(\theta)=0$ can be equivalently described by $f_j(\theta) \leq 0$ and $-f_j(\theta)\leq 0$. These types of hypotheses are an important and general class and appear in a variety of statistical problems. Our work is in particular motivated by polynomial hypotheses where each $f_j$ belongs to the ring $\mathbb{R}[\theta_1, \ldots, \theta_d]$ of polynomials in the indeterminates $\theta_1, \ldots, \theta_d$ with real coefficients. Examples of polynomial hypotheses feature in graphical modeling \citep{sullivant2010trek, Chen2014testable, shiers2016thecorrelation, chen2017identification}, testing causal effects \citep{spirtes2000causation, steyer2005analyzing, pearl2009causality, strieder2021confidence},  testing  subdeterminants, tetrads, pentads and more in factor analysis models \citep{bollen2000atetrad, gaffke2002on, silva2006learning, drton2007algebraic,drtonmassamolkin2008, dufour2013wald, drton2016wald, leung2018algebraic}, and in constraint-based causal discovery algorithms \citep{pearl1995atheory, spirtes2000causation, claassen2012abayesian}.

The standard method for dealing with testing problems like~\eqref{eq:testing-problem} is the likelihood ratio test. However, a likelihood function may be multimodal and difficult to maximize. If the likelihood ratio test is not suitable, then one might instead make use of the implicit characterization of $\Theta_0$ given by~\eqref{eq:nullhypothesis}. In Wald-type tests, for example, the strategy is to form estimates of the involved functions $f_1, \ldots, f_p$ and aggregate them in a test statistic.  Standard Wald tests require the number of restrictions $p$ to be smaller than or equal to the dimension $d$. However, in many of the above examples this might not be the case.

Another challenge for the classical likelihood ratio and Wald test is that the null hypothesis $\Theta_0$ may contain irregular points. For example, when the hypothesis is polynomial, it may contain singularities; a rigorous definition of singularities can be found in \citet[Section 4.1]{drton2009likelihood} or \citet[Section 9]{cox2015ideals}. 
At singularities the rank of the Jacobian of the constraints being tested drops and the asymptotic behaviour of the likelihood ratio test and the Wald-type test can be different than at regular points, resulting in an (also asymptotically) invalid test \citep{gaffke1999on, gaffke2002on, drton2009likelihood, dufour2013wald, drton2016wald}. In practice, it is unknown whether the true parameter $\theta$ is an irregular point, and it is therefore desirable to construct a test statistic for which one can give asymptotic approximations that accommodate and remain valid in irregular settings. 

In this work we propose a testing strategy that aims to cover setups where the number of restrictions $p$ can be much larger than the sample size $n$ and where the true parameter may be an irregular point. A precise definition of regular and irregular points is given later. Our method incorporates estimating the constraints $f_1, \ldots, f_p$ by an \emph{incomplete} $U$-statistic. By first considering the commonly used \emph{complete} $U$-statistic, we now give intuition for how this allows for high dimensionality and irregular points. Complete $U$-statistics provide an efficient method for unbiased estimation of $f \defeq (f_1, \ldots, f_p)$. We assume that $f(\theta)$ is ($U$-)estimable, i.e., for some integer $m$ there exists a $\mathbb{R}^p$-valued measurable symmetric function $h(x_1, \ldots, x_m)$ such that 
\[
\mathbb{E}_{\theta}[h(X_1, \ldots, X_m)] = f(\theta)\quad \textrm{ for all } \theta \in \Theta,
\]
when $X_1, \ldots, X_m$ are i.i.d.~with distribution $P_{\theta}$. The $U$-statistic with kernel $h$ is the average of $h(X_{i_1}, \ldots, X_{i_m})$ over all distinct $m$-tuples $(i_1, \ldots, i_m)$ from $\{1, \ldots, n\}$, in formulas
\begin{equation} \label{eq:complete-Ustat}
    U_n = \frac{1}{|I_{n,m}|} \sum_{(i_1, \ldots, i_m) \in I_{n,m}}  h(X_{i_1}, \ldots, X_{i_m}),
\end{equation}
where $I_{n,m} = \{(i_1,\ldots, i_m): 1 \leq i_1 < \ldots < i_m \leq n\}$. Due to the form of the null hypothesis $\Theta_0$, it is natural to define the test statistic as the maximum of the studentized $U$-statistic, i.e.,
\begin{equation} \label{eq:teststat-complete}
    \max_{1 \leq j \leq p} \sqrt{n} \, U_{n,j} / \hat{\sigma}_{j}
\end{equation}
and reject $H_0$ for ``large'' values of it. Here, $U_{n,j}$ refers to the $j$-th coordinate of $U_n$ for all $j=1, \ldots, p$ and $\hat{\sigma}_{j}^{2}$ is a ``good'' estimator of the asymptotic variance of $U_{n,j}$.

\begin{exmp} \label{exmp:tetrad-kernel}
As a leading example we consider testing of so-called ``tetrad constraints'', a problem of particular relevance in factor analysis \citep{ bollen2000atetrad, spirtes2000causation, hipp2003model, drton2016wald, leung2016identifiability} that can be traced back to \citet{spearman1904general} and \citet{wishart1928sampling}. For a given symmetric matrix $\Sigma=(\sigma_{uv})$, a tetrad is an off-diagonal $2 \times 2$ sub-determinant. An example is $f_j(\Sigma) = \sigma_{uv} \sigma_{wz} - \sigma_{uz} \sigma_{vw}$ with four different indices $u,v,w,z$. If an $l$-dimensional normal distribution $N_l(0, \Sigma)$ follows a one-factor analysis model, where it is assumed that all variables are independent conditioned on one hidden factor, then all tetrads that can be formed from the covariance matrix $\Sigma$ vanish. Thus, for given i.i.d.~samples  $X_1, \ldots, X_n \sim N_l(0, \Sigma)$, one might be interested in testing whether all tetrads vanish simultaneously. It is easy to see that a tetrad $f_j(\Sigma) = \sigma_{uv} \sigma_{wz} - \sigma_{uz} \sigma_{vw}$ is estimable by the kernel  \looseness=-1
\begin{align*}
    h_j(X_1,X_2)= \frac{1}{2}\{&(X_{1u}X_{1v}X_{2w}X_{2z}-X_{1u}X_{1z}X_{2v}X_{2w}) + (X_{2u}X_{2v}X_{1w}X_{1z}-X_{2u}X_{2z}X_{1v}X_{1w})\}.
\end{align*}
Interestingly, the resulting $U$-statistic corresponds to the ``plug-in'' estimate, i.e., $U_{n,j} = \frac{n}{n-1}f_j(S)$  where $S=\frac{1}{n}\sum_{i=1}^n X_i X_i^{\top}$ is the sample covariance. The plug-in estimate is considered in previous work on testing tetrads; see for example \citet{shiers2016thecorrelation}. \looseness=-1
\end{exmp}

Critical values for the test statistic~\eqref{eq:teststat-complete} can be derived by bootstrap methods that approximate the sampling distribution of $U_n$. Crucial to the validity of the bootstrap is the approximation by a Gaussian distribution. Recent progress in high-dimensional central limit theory yields valid Gaussian approximation of $U$-statistics in settings where $p \gg n$ is allowed. In particular, \citet{chen2018gaussian} and \citet{chen2020jackknife} derive finite-sample Berry-Esseen type bounds on the Gaussian approximation in a two-step procedure: In the first step, the centered $U$-statistic is approximated by the linear component in the Hoeffding decomposition (a.k.a. the H\'ajek projection) 
\begin{equation} \label{eq:hajek-proj}
    \frac{m}{n}\sum_{i=1}^n (g(X_i) - f(\theta)),
\end{equation}
where $g(x) = \mathbb{E}_{\theta}[h(x,X_2,\ldots, X_m)]$ and the expectation is taken with respect to the unknown distribution $P_{\theta}$. In the second step, the linear term~\eqref{eq:hajek-proj} is further approximated by a Gaussian random vector using the ``classical'' central limit theorem for high-dimensional independent sums  \citep{chernozhukov2013gaussian, chernozhukov2017central}. Crucially, this procedure assumes that the $U$-statistic is \emph{non-degenerate}, that is, the individual variances $\sigma_{g,\theta, j}^2 = \var_{\theta}[g_j(X_1)]$ of the linear component do not vanish. 
Hence, the Berry-Esseen type bound on the approximation relies on the standard assumption that the minimum $\ubar{\sigma}_{g, \theta} = \min_{1 \leq j \leq p} \sigma_{g,\theta,j}^2$ is bounded away from zero. However, the minimum  $\ubar{\sigma}_{g, \theta}$  depends on the unknown parameter $\theta$, and there may be certain points in $\Theta_0$ where we have $\ubar{\sigma}_{g,\theta}=0$. 
\begin{defn} \label{def:regular-irregular}
We say that a point $\theta \in \Theta$ is \emph{regular} with respect to the kernel $h$ if $\ubar{\sigma}_{g,\theta} = \min_{1 \leq j \leq p} \sigma_{g,\theta,j}^2 > 0$. Otherwise, we say that $\theta$ is an \emph{irregular} point. 
\end{defn}
If $\theta$ is an irregular point, then the Gaussian approximation of the $U$-statistic is not valid any more. This is illustrated in Example~\ref{ex:tetrad-degeneracy}. Even if the parameter $\theta$ is only ``close" to an irregular point, the minimum $\ubar{\sigma}_{g,\theta}$ can be very small. In this case, a very large sample size may be required for the Gaussian limiting distribution to provide a good approximation of the $U$-statistic since convergence is not uniform and the rate depends on the minimum $\ubar{\sigma}_{g,\theta}$. \looseness=-1

\begin{exmp} \label{ex:tetrad-degeneracy}
Recall the kernel $h_j$ of the tetrad $f_j(\Sigma) = \sigma_{uv} \sigma_{wz} - \sigma_{uz} \sigma_{vw}$ from Example~\ref{exmp:tetrad-kernel}. The corresponding random variable $g_j(X_1)$ in the H\'ajek projection~\eqref{eq:hajek-proj} is given by \looseness=-1
\begin{equation} \label{eq:tetrad-hajek-proj}
    g_j(X_1) =  \frac{1}{2}\left\{(X_{1u}X_{1v}\sigma_{wz}-X_{1u}X_{1z}\sigma_{vw}) + (\sigma_{uv}X_{1w}X_{1z}-\sigma_{uz}X_{1v}X_{1w})\right\}.
\end{equation}
By inspecting~\eqref{eq:tetrad-hajek-proj} we see that $g_j(X_1)$ is degenerate if $\sigma_{wz}= \sigma_{vw} = \sigma_{uv} = \sigma_{uz} = 0$ and, in general, non-degenerate if at least one of the covariances is non-zero. Thus, the covariance matrices $\Sigma=(\sigma_{uv})$ in the one-factor analysis model that have $\sigma_{wz}= \sigma_{vw} = \sigma_{uv} = \sigma_{uz} = 0$ correspond to irregular points. Moreover, when the covariance matrix is only close to irregular, then the variance $\sigma_{g,\theta,j}^2$ might already be very small. \looseness=-1
\end{exmp}

To accommodate irregularity, we propose to use \emph{randomized incomplete $U$-statistics} instead of the usual, complete $U$-statistic from~\eqref{eq:complete-Ustat}.  That is, for a \emph{computational budget parameter} $N \leq |I_{n,m}|$ we randomly choose on average $N$ indices from  $I_{n,m}$. Then the incomplete $U$-statistic is defined as the sample average of $h(X_{i_1}, \ldots, X_{i_m})$ taken only over the subset of chosen indices. The test statistic is then formed  as in~\eqref{eq:teststat-complete} by replacing $U_n$ with the incomplete counterpart.  

The main theoretical contribution of this work is to show that the high-dimensional Gaussian approximation of incomplete $U$-statistics remains valid under \emph{mixed degeneracy},  that is, for each index $j$, the variance of the H\'ajek projection
$\sigma_{g,\theta,j}^2$ is allowed to take more or less arbitrary values, including zero. Therefore, our result covers  testing hypotheses as in~\eqref{eq:nullhypothesis} when the underlying true parameter may be irregular or close to irregular.
The setting of mixed degeneracy constitutes a further development of prior results in the literature which prove validity of the Gaussian approximation of incomplete $U$-statistics in either the fully non-degenerate case or the fully degenerate case \citep{chen2019randomized, song2019approximating}. That is, either \emph{all} variances $\sigma_{g,\theta, j}^2$ are bounded away from zero  or \emph{all} of them are equal to zero. Our result is intermediate since we allow a different status of degeneracy for each index $j$. The crucial fact we exploit is that the asymptotic variance in the Gaussian approximation of the incomplete $U$-statistic is a weighted sum of the variance of the H\'ajek projection and the variance of the kernel itself. Hence, vanishing of the variance of the H\'ajek projection need not cause degeneracy of the asymptotic distribution. Indeed, the approximation is valid when we choose the computational budget parameter appropriately, typically of the same order as the sample size, i.e., $N = \mathcal{O}(n)$. 

\begin{figure}[t]
\captionsetup[subfigure]{labelformat=empty}
\centering
\begin{subfigure}{.32\linewidth}
\centering
\includegraphics[width=\linewidth]{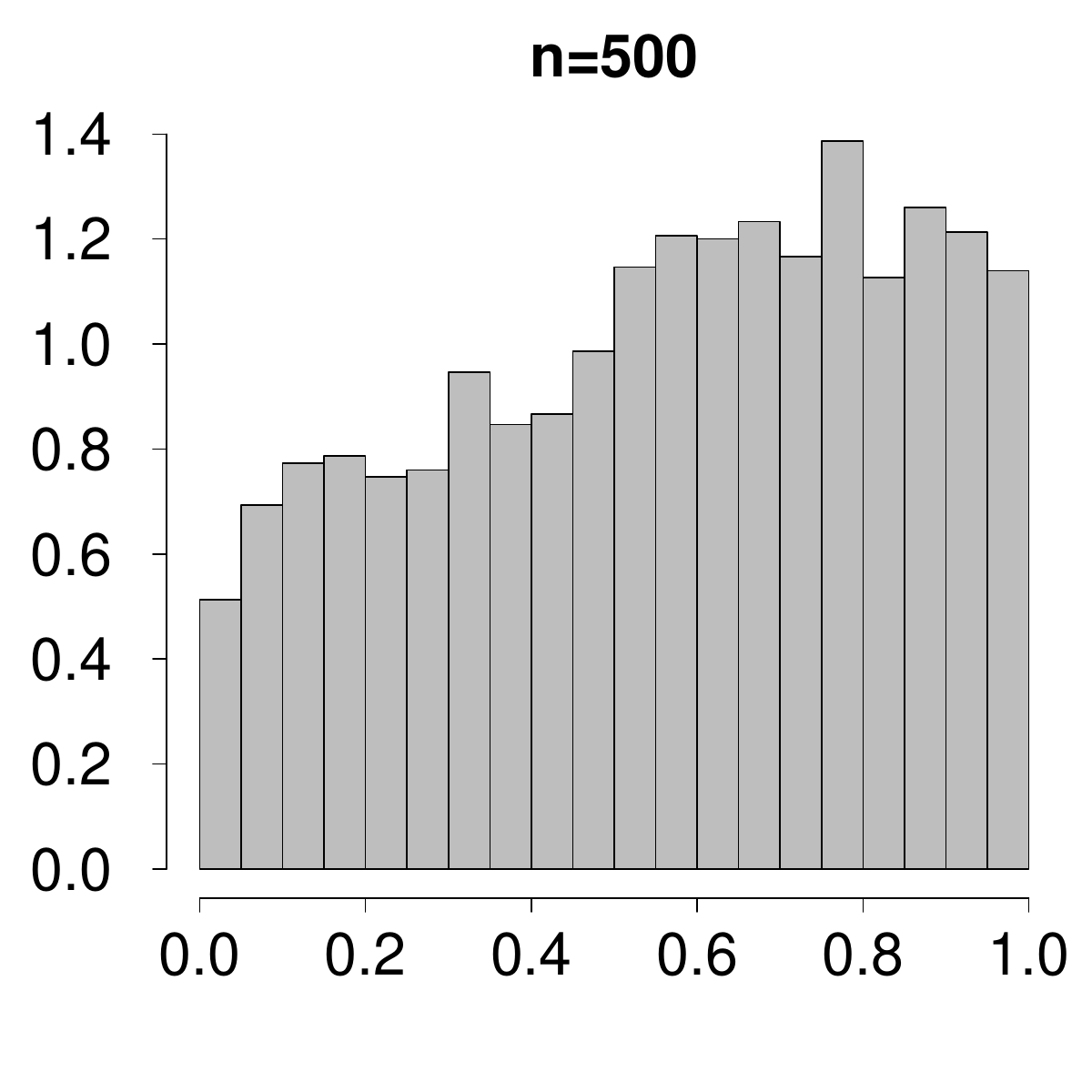}
\vspace{-0.9cm}
\caption{incomplete $U$-statistic}
\end{subfigure}
\hfill
\begin{subfigure}{.32\linewidth}
\centering
\includegraphics[width=\linewidth]{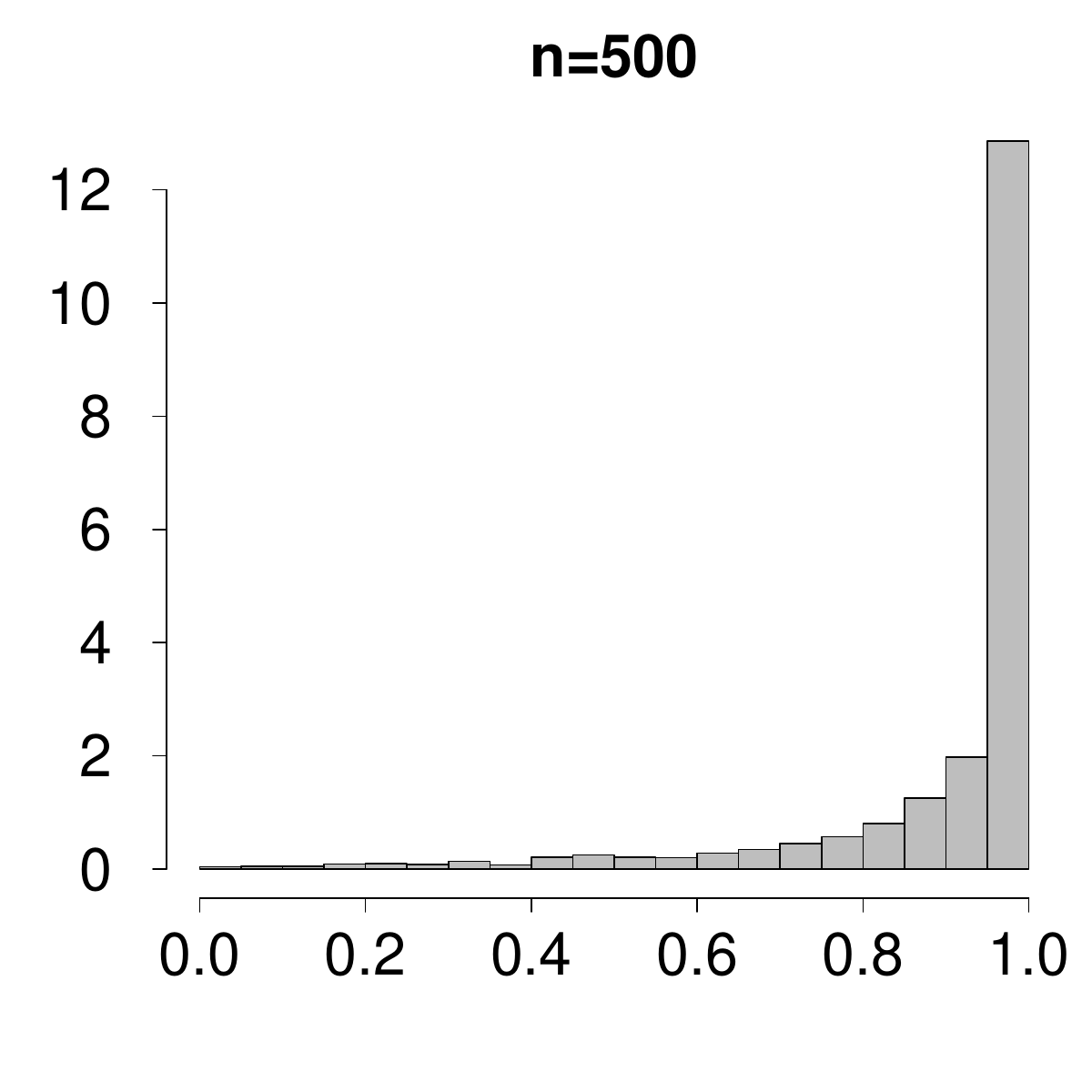}
\vspace{-0.9cm}
\caption{complete $U$-statistic}
\end{subfigure}
\hfill
\begin{subfigure}{.32\linewidth}
\centering
\includegraphics[width=\linewidth]{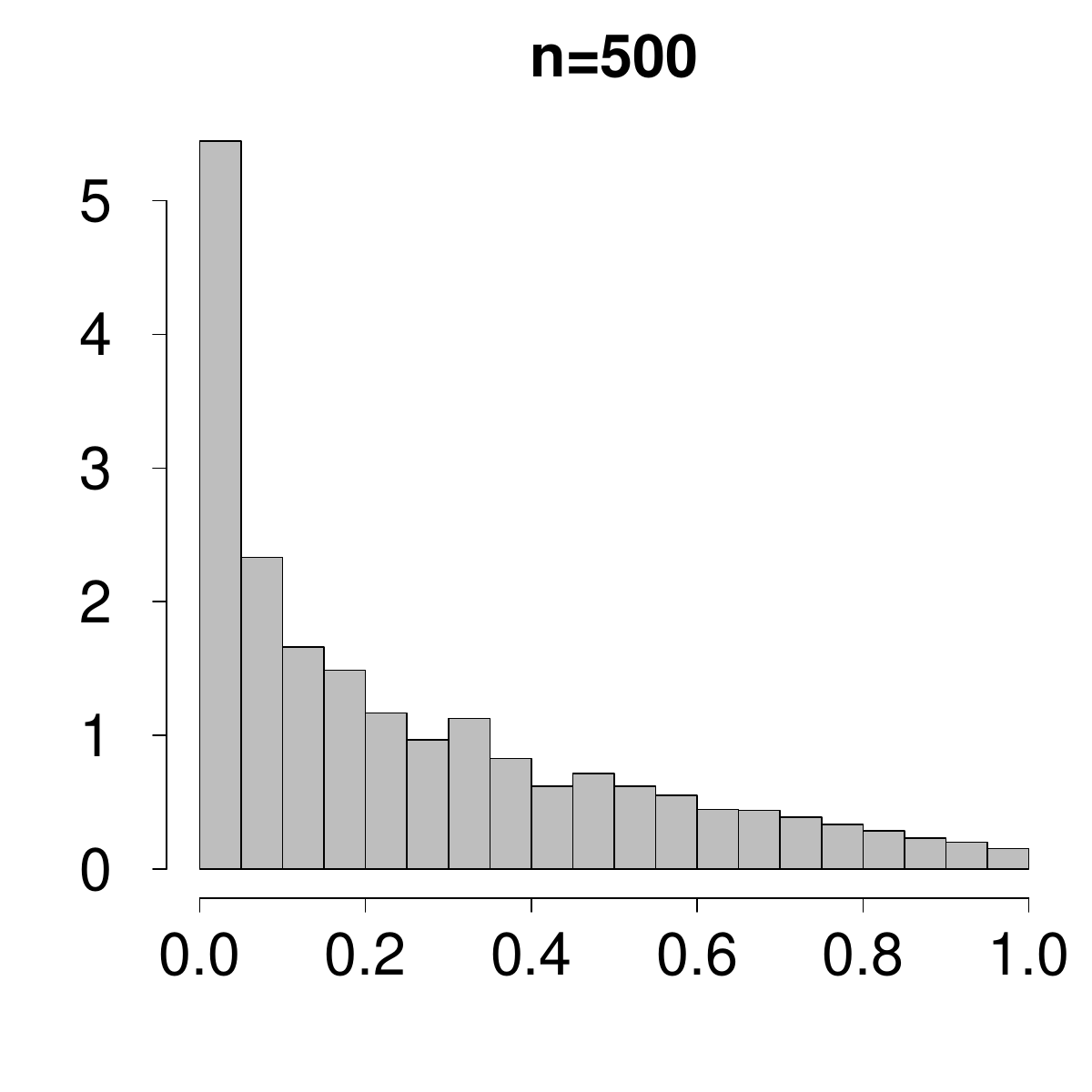}
\vspace{-0.9cm}
\caption{likelihood ratio test}
\end{subfigure}
\par\bigskip
\begin{subfigure}{.32\linewidth}
\centering
\includegraphics[width=\linewidth]{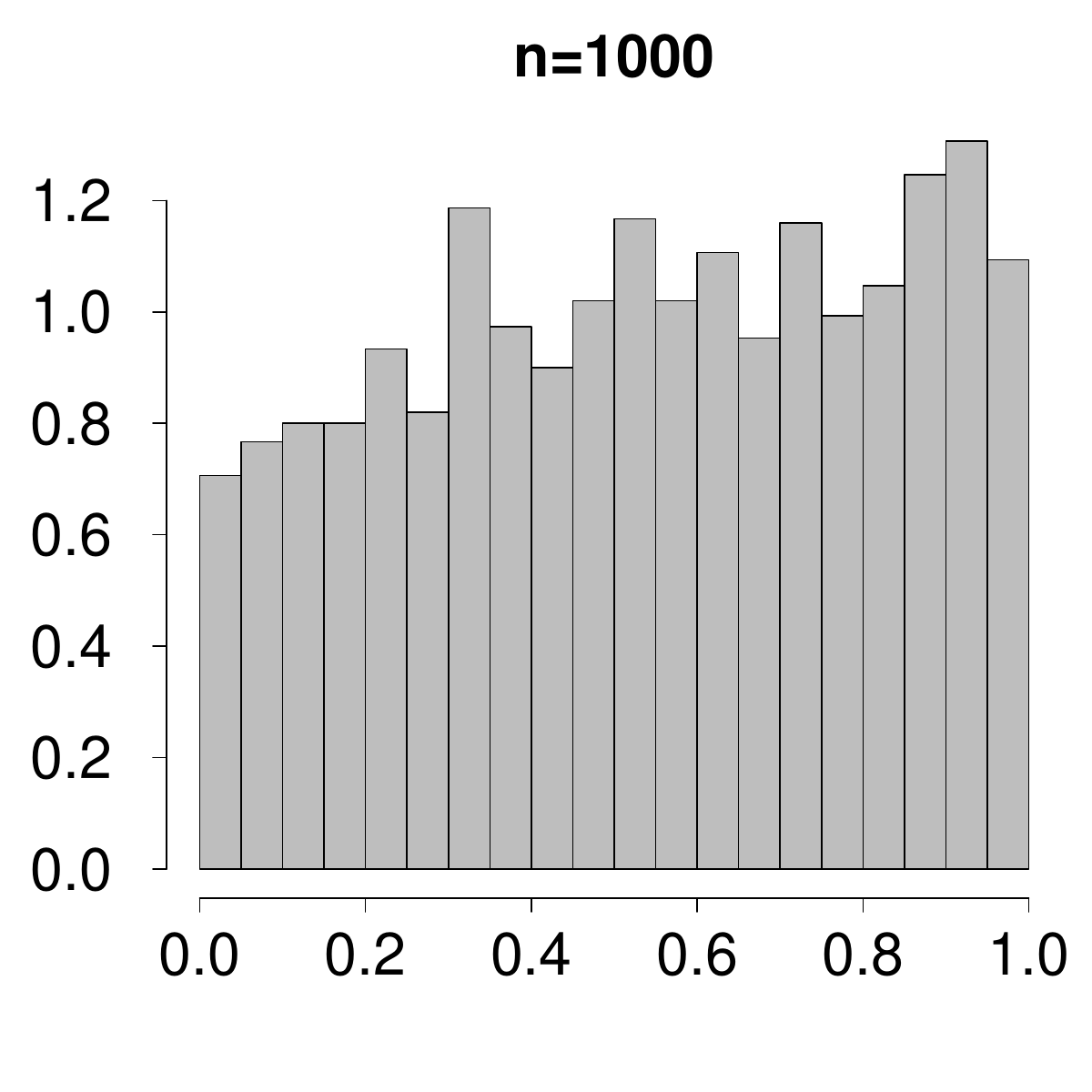}
\vspace{-0.9cm}
\caption{incomplete $U$-statistic}
\end{subfigure}
\hfill
\begin{subfigure}{.32\linewidth}
\centering
\includegraphics[width=\linewidth]{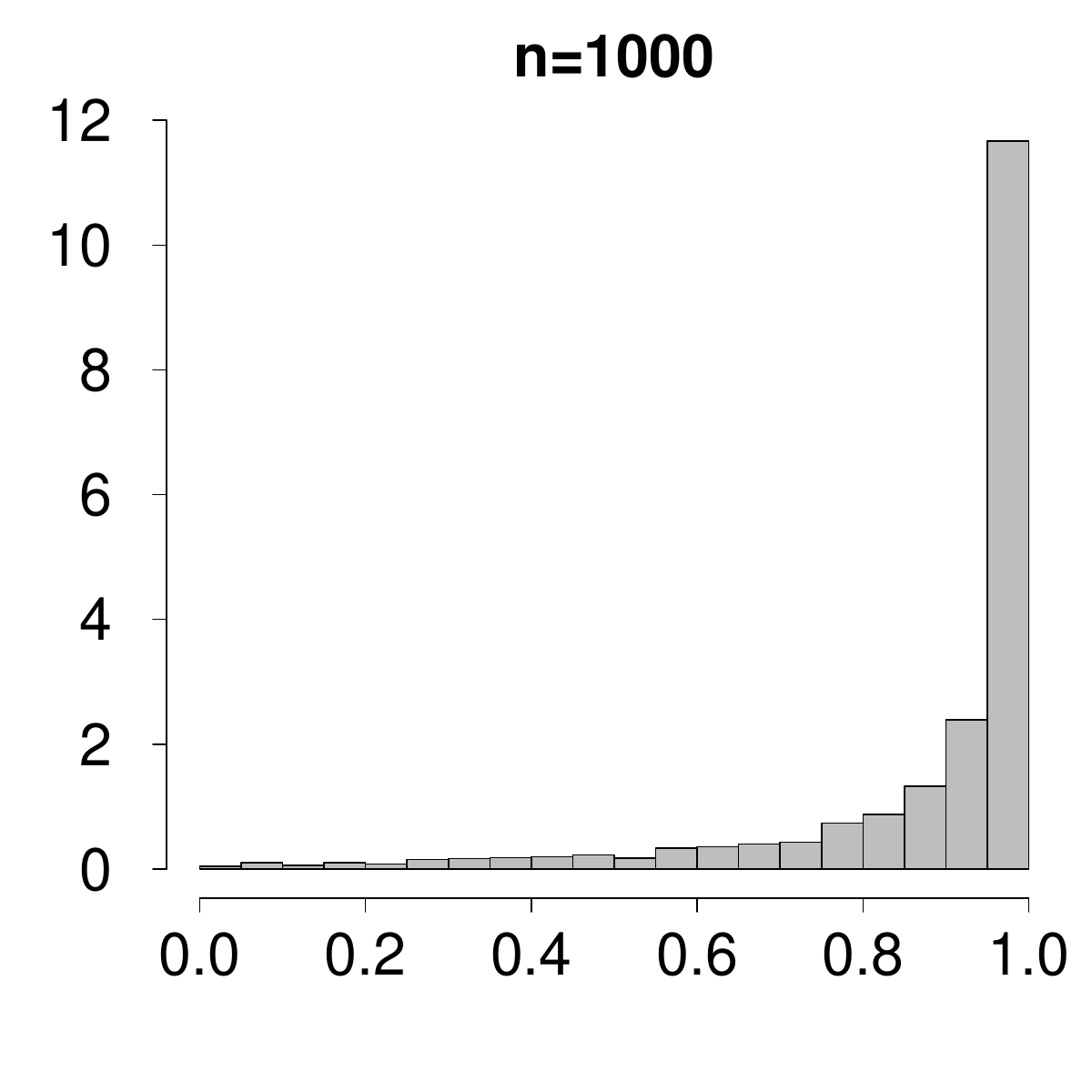}
\vspace{-0.9cm}
\caption{complete $U$-statistic}
\end{subfigure}
\hfill
\begin{subfigure}{.32\linewidth}
\centering
\includegraphics[width=\linewidth]{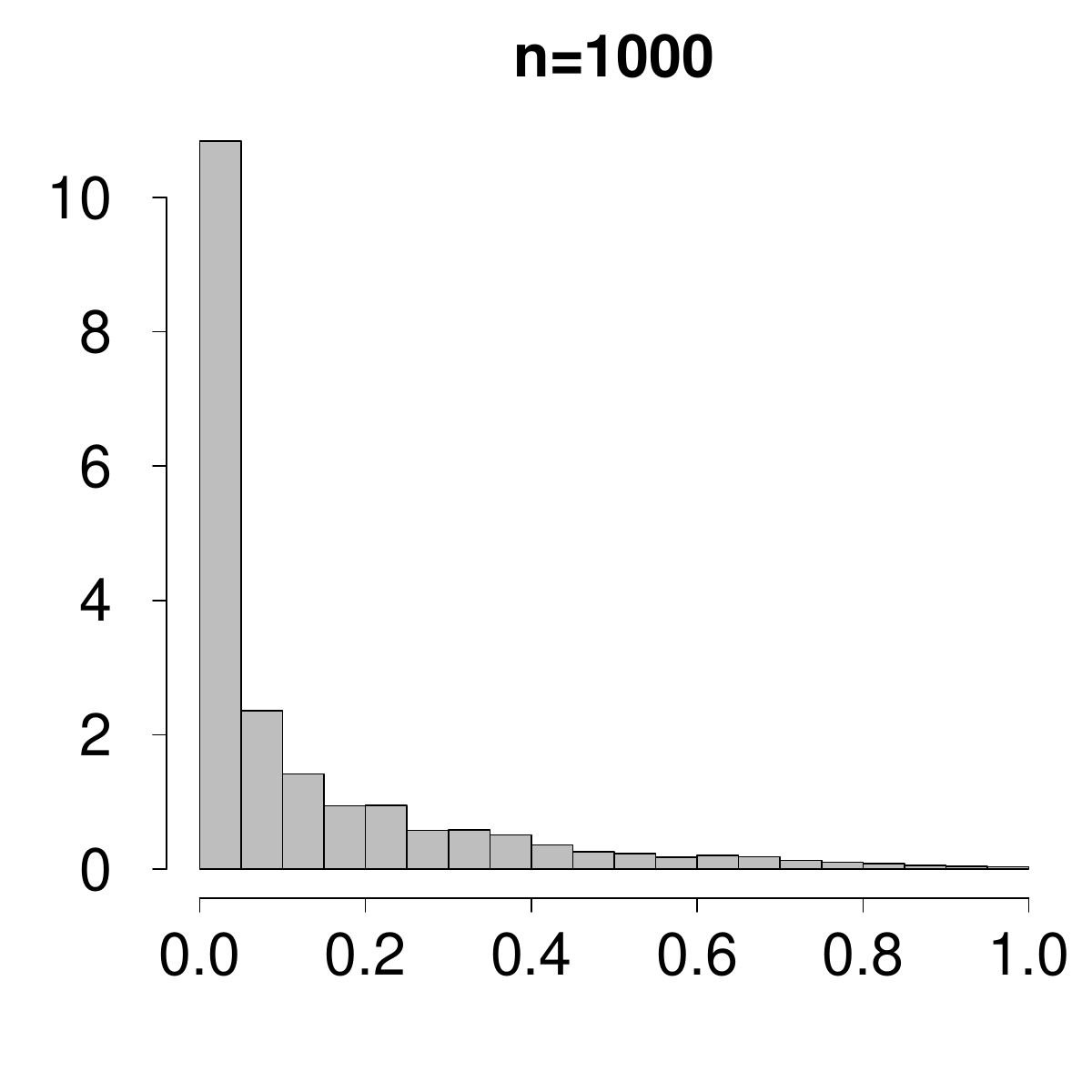}
\vspace{-0.9cm}
\caption{likelihood ratio test}
\end{subfigure}
\caption{Histograms of $5,000$ simulated $p$-values for simultaneously testing $2730$ tetrads constraints implied by the one-factor model with $l=15$ observed variables. The computational budget parameter for the incomplete $U$-statistic is $N=2n$ and the true covariance matrix is close to an irregular point, for exact parameter values see Section~\ref{sec:tree-models}, setup (b).}
\label{fig:lr-pvals}
\end{figure}

To derive critical values of our test statistic, we further approximate the limiting Gaussian distribution via a data-dependent Gaussian multiplier bootstrap as proposed in \citet{chen2019randomized}. We show that when choosing $N$ appropriately,  the bootstrap approximation remains trustworthy under mixed degeneracy and yields asymptotically valid critical values. The bootstrap is computationally feasible even for a large number of constraints $p$, as incomplete $U$-statistics also offer  computational advantages over complete $U$-statistics. The computation of the complete $U$-statistic~\eqref{eq:complete-Ustat} requires $\mathcal{O}(n^m p)$ operations, which can be challenging for $m\geq 3$, while the incomplete $U$-statistic only requires $\mathcal{O}(N p)$ operations. We would like to highlight that these computational advantages were the main motivaton for consideration of incomplete $U$-statistics in prior literature; in contrast, our work raises statistical advantages.
\looseness=-1

\begin{exmp}
Figure~\ref{fig:lr-pvals} shows histograms of simulated p-values for testing a large number of tetrad constraints in the one-factor analysis model when the true parameter matrix is close to an irregular point. 
In addition to our proposed strategy of using incomplete $U$-statistics, we  include two other strategies: one that is based on complete $U$-statistics and another based on the likelihood ratio test. The method using incomplete $U$-statistics yields $p$-values that are only slightly conservative (nearing a uniform distribution), while the other two methods fail drastically. \looseness=-1
\end{exmp}

Our strategy can be applied for general  hypotheses ~\eqref{eq:nullhypothesis} as long as the functions $f_j(\theta)$ are estimable using a kernel function $h$. This is an advancement of \citet{leung2018algebraic}, where only tetrads  
are considered. \citet{leung2018algebraic} also used a kernel function to estimate the tetrads, but the test statistic is an $m$-dependent average over the kernels instead of an incomplete $U$-statistic. The  randomized incomplete $U$-statistic proposed in this work is much more flexible, supports broad application, and yields better results in simulations. Moreover, we show useful theoretical guarantees of incomplete $U$-statistics that yield an asymptotically valid test even at irregularities. 

\begin{rem}[Nonparametric setups] \label{rem:nonparmetric-setup}
The setting we consider is formulated as pertaining to a parametric model with $d$-dimensional parameter space.  However, our testing methodology applies without change to  settings where we test constraints on a $d$-dimensional parameter $\theta(P)$ of the distributions in a nonparametric model.  For example, one could consider testing tetrad constraints in covariance matrices of non-Gaussian distributions. \looseness=-1
\end{rem}

\begin{rem}[Comparison to literature on shape restrictions] \label{rem:shape-restrictions}
It is natural to compare our work to recent progress in the literature on shape restrictions that also considers testing equality and inequality constraints; see \citet{chetverikov2018theeconometrics} for a review. In this line of work, the restricted parameter space $\Theta_0 \subseteq \Theta$ is usually considered to be infinite dimensional, so that more general parameters such as entire function classes are covered. On the other hand, to the best of our knowledge, the methods do not explicitly focus on setups with a large amount of constrains and possibly irregular points. For example, in  \citet{chernozhukov2023constrained}, the authors consider a test statistic that minimizes a generalized method of moments objective function over the restricted and the whole parameter space and compares the difference.
It is the main goal to study the behaviour of the test statistic in regions near the boundary of the restricted parameter space $\Theta_0$. 
The different focus is reflected in the conditions the authors of \citet{chernozhukov2023constrained} assume for their theoretical analysis. In particular, they assume that the Jacobian matrix of the equality constraints has full row rank in a neighborhood of the true parameter. This implies that the maximal number of equality constraints is smaller than the dimension of the parameter space and that algebraic singularities are excluded since the Jacobian is not allowed to drop rank. Thus, the conditions do not allow for \emph{many} equality constraints and \emph{irregular} points. 
Moreover, the test statistic requires an optimization over the restricted and the whole parameter space which requires extra assumptions such as convexity and compactness and can be difficult to implement in practice. In contrast, our method is optimization free and does not require any assumptions on the parameter space. 
\end{rem}

\subsection{Organization of the paper}

In Section~\ref{sec:incomplete-U-stat} we give non-asymptotic Berry-Esseen type bounds for high-dimensional Gaussian and bootstrap approximation of incomplete $U$-statistics. Importantly, the incomplete $U$-statistic is assumed to be of mixed degeneracy.
In Section~\ref{sec:methodology} we propose our testing methodology by formally defining the test statistic and showing how to derive critical values.  Our results on incomplete $U$-statistics yield that the test is asymptotically valid and consistent even in irregular settings.
In Section~\ref{sec:polynomial-hypotheses} we show that our test is applicable to  polynomial hypotheses by explaining a general method for constructing a kernel $h$. 
In Section~\ref{sec:tree-models} we then apply our strategy for testing the goodness-of-fit of  latent tree models of which the one-factor analysis model is a special case. In numerical experiments we compare our strategy with the likelihood ratio test. The supplement \citep{supplemental} contains additional material such as all technical proofs (Appendix~\ref{sec:proofs}), additional lemmas (Appendix~\ref{sec:useful-lemmas}), properties of sub-Weibull random variables (Appendix~\ref{sec:sub-Weibull}) and additional simulation results for Gaussian latent tree models (Appendix~\ref{sec:additional-simulations}). Moreover, we provide a second application of our methodology in Appendix~\ref{sec:2-factor}, where we test minors in two-factor analysis models.

\subsection{Notation}
Given $\beta \in (0, \infty)$, we define the function $\psi_{\beta}(x) = \exp(x^{\beta})-1$ for $x > 0$. A real random variable $Y$ is said to be \textit{sub-Weibull of order $\beta$} if  $\|Y\|_{\psi_{\beta}} \defeq  \inf\left\{t > 0 : \mathbb{E}\left[\psi_{\beta}\left(|Y| / t\right)\right] \leq 1 \right\}$ is finite. We use the usual convention that $\inf \{\emptyset\} = \infty$. For $\beta \geq 1$ we have that $\|Y\|_{\psi_{\beta}}$ is a norm, while for $\beta \in (0,1)$ it is only a quasinorm. If $\|Y\|_{\psi_{1}} < \infty$, then Y is called a \textit{sub-Exponential} random variable and if $\|Y\|_{\psi_{2}} < \infty$, then Y is called  \textit{sub-Gaussian}. For a random element $Y$, let $P|_Y(\cdot)$ and $\mathbb{E}|_Y[\cdot]$ denote the conditional probability and expectation given $Y$. We denote a sequence of random variables $Y_i, \ldots, Y_{i^{\prime}}$ by $Y_i^{i^{\prime}}$ for $i \leq i^{\prime}$ and for a tuple of indices $\iota = (i_1, \ldots, i_m)$ we write $Y_{\iota} = (Y_{i_1}, \ldots, Y_{i_m})$. 

For $a,b \in \mathbb{R}$, define $a \vee b = \max\{a,b\}$ and $a \wedge b = \min\{a,b\}$. For $a,b \in \mathbb{R}^p$, we write $a \leq b$ if $a_j \leq b_j$ for all $j=1, \ldots, p$, and we write $[a,b]$ for the hyperrectangle $\prod_{j=1}^p [a_j,b_j]$. If $a \leq b$, then the hyperrectangle $[a,b]$ is nonempty but if there is at least one index $j \in \{1, \ldots, p\}$ such that $a_j > b_j$, then the hyperrectangle is equal to the empty set.  The class of hyperrectangles in $\mathbb{R}^p$ is denoted by $\mathbb{R}^p_{\mathrm{re}} = \{\prod_{j=1}^p [a_j,b_j] : a_j, b_j \in \mathbb{R} \cup \{-\infty, \infty\}\}$. For a vector $a \in \mathbb{R}^p$ and $r,t \in \mathbb{R}$, we write $ra+t$ for the vector in $\mathbb{R}^p$ with $j$-th component $ra_j+t$. Finally, for a vector $a \in \mathbb{R}^p$ and two integers $p_1, p_2$ such that $p_1 + p_2 = p$ we write $a=(a^{(1)}, a^{(2)})$, where $a^{(1)}_j = a_j$ for all $j=1, \ldots, p_1$ and $a^{(2)}_j = a_{p_1+j}$ for all $j=1, \ldots, p_2$. Let $\|A\|_{\infty} = \max_{i,j}|a_{ij}|$ be the element-wise maximum norm of a matrix $A=(a_{ij})$.

\section{Incomplete U-Statistics under Mixed Degeneracy} \label{sec:incomplete-U-stat}
Suppose we are given i.i.d.~samples $X_1, \ldots, X_n$ from an unknown distribution $P$ in a statistical model. For some integer $m$, let $h(x_1, \ldots, x_m)$ be a fixed $\mathbb{R}^p$-valued measurable function that is symmetric in its arguments. In this section, we consider the general case of inference on the mean vector $\mathbb{E}[h(X_1, \ldots, X_m)] = (\mu_1, \ldots, \mu_p)^{\top} = \mu$, where $\mu$ is an arbitrary estimable parameter of the underlying distribution $P$. In our setup, where we want to test hypotheses characterized by constraints, the distribution $P$ depends on $\theta$ and $\mu$ is given by the  constraints $f(\theta)=(f_1(\theta), \ldots, f_p(\theta))$.

We begin with the formal definition of randomized incomplete $U$-statistics using similar notation as in \citet{chen2019randomized} and \citet{song2019approximating}. 
Let $N \leq \binom{n}{m}$ be a computational budget parameter and generate i.i.d.~Bernoulli random variables $\{Z_{\iota}: \iota \in I_{n,m}\}$ with success probability $\rho_n = N/\binom{n}{m}$. Then the incomplete $U$-statistics based on Bernoulli sampling is defined by \looseness=-1
\begin{equation} \label{eq:def-incomplete-Ustat}
    U_{n,N}^{\prime} = \frac{1}{\hat{N}} \sum_{\iota \in I_{n,m}} Z_{\iota}h(X_{\iota}),
\end{equation}
where $\hat{N} = \sum_{\iota \in I_{n,m}} Z_{\iota}$ is the number of successes. The variable $\hat{N}$ follows a Binomial distribution with parameters ($|I_{n,m}|, \rho_n$). Therefore, $\mathbb{E}(\hat{N})=|I_{n,m}| \rho_n = N$, and the incomplete $U$-statistic is on average a sum over $N$ objects. Thus, we may view the computational budget $N$ as a sparsity parameter for the incomplete $U$-statistic. We denote by $\sigma_{h,j}^2 = \mathbb{E}[(h_j(X_1^m) - \mu_j)^2]$ the variance of the $j$-th coordinate of the kernel and by $\sigma_{g,j}^2 =  \mathbb{E}[(g_j(X_1) - \mu_j)^2]$ the variance of $g_j(X_1)$; recall that the  H\'ajek projection is given by $g(x)=\mathbb{E}[h(x, X_2, \ldots, X_m)]$. Note that $\sigma_{h,j}^2$ and $\sigma_{g,j}^2$ depend on the underlying distribution $P$ as emphasized in the introduction, however, in the rest of the paper we omit the explicit dependence to simplify notation.

\subsection{Gaussian Approximation} \label{subsec:Gaussian-approx}
We will derive non-asymptotic Gaussian approximation error bounds for the incomplete $U$-statistic $U_{n,N}^{\prime}$ that allow for mixed degenerate kernels $h$ when choosing the computational budget parameter $N$ appropriately. To state the formal approximation results, we assume $2 \leq m \leq \sqrt{n}$, $n \geq 4$, $p \geq 3$ and $\rho_n = N/|I_{n,m}| < 1/2$. We start by making assumptions on the moment structure of the kernel $h$ before formally introducing mixed degeneracy. Let $\beta \in (0,1]$ and suppose there exists a constant $D_n \geq 1$ such that:
\vspace{0.25cm}
\begin{enumerate}[label=(C\arabic*)]\setlength{\itemsep}{0.25cm} 
    \item $\mathbb{E}[|h_j(X_1^m) - \mu_j|^{2+l}] \leq \sigma_{h,j}^2 D_n^{l}$ for all $j = 1, \ldots, p$ and $l=1,2$. \label{C1}
    \item $\|h_j(X_1^m) - \mu_j\|_{\psi_{\beta}} \leq D_n$ for all $j = 1, \ldots, p$. \label{C2}
    \item There exists $\ubar{\sigma}_h^2>0$ such that $\ubar{\sigma}_h^2 \leq \min_{1 \leq j\leq p} \sigma_{h,j}^2$. \label{C3}
\end{enumerate}
\vspace{0.25cm}

Assuming conditions similar to~\ref{C1}-\ref{C3} is standard in high-dimensional Gaussian approximation theory. Condition~\ref{C2} assumes that the kernel $h$ is sub-Weibull. In prior work on Gaussian approximation of high-dimensional $U$-statistics \citep{chen2018gaussian, chen2019randomized}, the authors usually consider sub-Exponential kernels with $\beta=1$. However, the kernel we propose in Section~\ref{sec:polynomial-hypotheses} for testing polynomial hypotheses will typically be sub-Weibull and not sub-Exponential, also see Example~\ref{ex:tetrad-sub-weibull}. 
We discuss important properties of sub-Weibull random variables in Appendix~\ref{sec:sub-Weibull}. Note that we allow the bound $D_n$ to depend on the sample size $n$ since, in the high-dimensional setting, the distribution $P$ may depend on $n$.
Condition~\ref{C1} is of a more technical nature and serves for clear presentation. In principle, it would be possible to omit this assumption, but the resulting error bound would be more complicated. 
Finally, Condition~\ref{C3} requires that the minimal variance of the individual kernels $h_j$ is bounded away from zero, even for large $p$. Put differently, we only considers kernels $h_j$ such that $h_j(X_1^m)$ is not almost surely constant.
It remains to make assumptions with respect to the degeneracy of the H\'ajek projection, i.e., we formally define mixed degeneracy.

\begin{defn} \label{def:mixed-degenerate}
Let $p_1, p_2 \in \mathbb{N}$ such that $p_1$ + $p_2 = p$. We say that the kernel $h$, or also simply the incomplete $U$-statistic, is \emph{mixed degenerate} for distribution $P$ if the following two conditions are satisfied:
\vspace{0.25cm}
\begin{enumerate}[resume, label=(C\arabic*)]\setlength{\itemsep}{0.25cm} 
    \item There exists $\ubar{\sigma}_{g^{(1)}}^2 > 0$ such that $\ubar{\sigma}_{g^{(1)}}^2 \leq \min_{1 \leq j \leq p_1} \sigma_{g,j}^2$. \label{C4}
    \item There exists $k>0$ such that $\|g_j(X_1) - \mu_j\|_{\psi_{\beta}} \leq n^{-k} D_n$ for all $j=p_1+1, \ldots, p$.\label{C5}
\end{enumerate}
\end{defn}

In other words, mixed degeneracy of a kernel requires that each index $j=1, \ldots, p$ either satisfies condition~\ref{C4} or~\ref{C5}. By rearranging the indices, we then find that the first $p_1$ indices satisfy Condition~\ref{C4} and the remaining indices satisfy~\ref{C5}.
Note that whether a kernel is mixed degenerate or not depends on the underlying distribution $P$ from which the samples $X_1, \ldots, X_n$ are drawn.

Assuming mixed degenerate kernels is the main difference in comparison to the existing literature on Gaussian approximation of high-dimensional incomplete $U$-statistics \citep{chen2019randomized, song2019approximating}. Usually, either the non-degenerate case where $p_2=0$ or the fully degenerate case where $p_1=0$ and $\sigma_{g,j}^2 = 0$ for all $j=1, \ldots, p$ are treated. In contrast, we allow for a different status of degeneracy of the H\'ajek projection for each index $j$. In particular, Condition~\ref{C5} covers degenerate cases where $\sigma_{g,j}^2=0$ since zero variance implies that $\|g_j(X_1) - \mu_j\|_{\psi_{\beta}} = 0$ almost surely.
But our assumptions even allow for more flexibility. For example, it may be the case that the variance $\sigma_{g,j}^2$ decreases with the sample size $n$. In this case, Condition~\ref{C5} is an assumption on the rate of convergence to degeneracy, i.e., the rate is polynomial in $n$.

\begin{rem}[Discussion on mixed degeneracy] \label{rem:any-distribution}
The notion of mixed degeneracy is particularly interesting for kernels $h$, where we do not know whether the individual components $h_j$  are degenerate or non-degenerate, for example when the underlying distribution is unknown. If the number of constraints $p$ does \emph{not} grow with the sample size, then mixed degeneracy holds for any distribution $P$. Indeed, by letting $\ubar{\sigma}_{g^{(1)}}^2$ be the minimum of the nonzero variances $\sigma^2_{g,j}$, we see that~\ref{C4} is satisfied. All other indices $j$ have $\sigma^2_{g,j}=0$ which implies that~\ref{C5} is also satisfied. However, we emphasize that mixed degeneracy is a more subtle condition if the number of constraints $p$ is growing. 
\end{rem}

\begin{rem}[Parametric families and irregular points] \label{rem:irregular-setups}
In our testing problem~\eqref{eq:testing-problem}, we are considering a parametric family of distributions $\{P_{\theta}, \theta \in \Theta\}$ but the true parameter $\theta$ is unknown.  In this case, the variances $\sigma_{g,j}^2$ of the individual H\'ajek projections depend on $\theta$ as outlined in the introduction. If the number of constraints $p$ does \emph{not} grow with the sample size, then mixed degeneracy holds uniformly over the whole parameter space $\Theta$, as we have seen in Remark~\ref{rem:any-distribution}. 
However, the rate of convergence of the incomplete $U$-statistic depends on the minimum $\ubar{\sigma}_{g^{(1)}}^2$. 
If $\ubar{\sigma}_{g^{(1)}}^2$ is really small, the index corresponding to the minimum may already satisfy~\ref{C5} for large sample sizes $n$, so $\ubar{\sigma}_{g^{(1)}}^2$ can in fact be chosen larger. Therefore, mixed-degeneracy is also suitable for points $\theta \in \Theta$ that are close to irregular, see Corollary \ref{cor:Gaussian-case} for a precise statement.
\end{rem}

In principle, it would be possible to extend our results on more general sequences $\gamma_n$ converging to zero instead of $n^{-k}$ in Condition~\ref{C5}, but this would result in more involved error bounds. For simplicity, we also assume $p_1, p_2 \geq 3$, even though one could specify the bounds for arbitrary $p_1$ and $p_2$. Our last  technical assumption is similar to~\ref{C1} and is also necessary for the sake of clear presentation: 
\vspace{0.25cm}
\begin{enumerate} 
    \item[{\crtcrossreflabel{(C6)}[C6]}] $\mathbb{E}[|g_j(X_1) - \mu_j|^{2+l}] \leq \sigma_{g,j}^2 D_n^{l}$ for all $j = 1, \ldots, p$ and $l=1,2$. 
\end{enumerate}
\vspace{0.25cm}

Now, we state our main result that specifies a nonasymptotic error bound on the Gaussian approximation of incomplete $U$-statistics. We define $\alpha_n = n/N$, $\Gamma_h = \textrm{Cov}[h(X_1^m)]$ and $\Gamma_g = \textrm{Cov}[g(X_1)]$. For notational convenience, we further define the quantities
\[
    \omega_{n,1} = \left(\frac{m^{2/\beta} D_n^2 \log(pn)^{1+6/\beta}}{(\ubar{\sigma}_{g^{(1)}}^2 \wedge \ubar{\sigma}_h^2 \wedge 1) \ (n \wedge N)}\right)^{1/6},  \quad
    \omega_{n,2} = \frac{N^{1/2} m^2 D_n \log(p n)^{1/2 + 2/\beta}}{\ubar{\sigma}_h n^{\min\{1/2+k,\, 5/6\}}},
\]
\[
    \omega_{n,3} = \left(\frac{N m^2 D_n^2 \log(p)^{2}}{(\ubar{\sigma}_h^2 \wedge 1) n^{\min\{1+k,m\}}}\right)^{1/3}.
\]

\begin{thm} \label{thm:incompleteUstat}
Assume~\ref{C1} -~\ref{C6} hold. Then there is a constant $C_{\beta} > 0$ only depending on $\beta$ such that
\begin{align*} 
    \sup_{R \in \mathbb{R}^p_{\mathrm{re}}} &|P(\sqrt{n}(U_{n,N}^{\prime} - \mu) \in R) - P(Y \in R)| \leq C_{\beta} \{\omega_{n,1} + \omega_{n,2} + \omega_{n,3}\},
\end{align*}
where $Y \sim N_p(0,m^2 \Gamma_g + \alpha_n \Gamma_h)$. 
\end{thm}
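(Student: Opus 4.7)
The plan is to reduce $\sqrt{n}(U_{n,N}^{\prime} - \mu)$ to the sum of a linear H\'ajek piece (a function of $X_1^n$) and an approximately Gaussian Bernoulli-sampling piece, plus a negligible remainder. Starting from the Hoeffding decomposition $U_n - \mu = H_n + U_n^{(r)}$ with $H_n = (m/n)\sum_{i=1}^n (g(X_i) - \mu)$ and $U_n^{(r)}$ the sum of the completely degenerate Hoeffding components of degrees $\ge 2$, and combining it with the identity $U_{n,N}^{\prime} - U_n = \widehat{N}^{-1} \sum_{\iota} Z_{\iota}(h(X_{\iota}) - U_n)$, I get the decomposition
\begin{equation*}
\sqrt{n}(U_{n,N}^{\prime} - \mu) \;=\; \sqrt{n}\, H_n \;+\; \sqrt{n}(U_{n,N}^{\prime} - U_n) \;+\; \sqrt{n}\, U_n^{(r)}.
\end{equation*}
Sub-Weibull moment bounds on the degenerate $U_n^{(r)}$ (via (C1)--(C2) and the lemmas in Appendix \ref{sec:sub-Weibull}) show that $\max_j |\sqrt{n}\, U_{n,j}^{(r)}|$ is small on a high-probability event, with rate absorbed into $\omega_{n,2}$ through its $n^{m/3}$ branch.

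For the linear piece $\sqrt{n}\, H_n$, I would split along the block structure. On the first $p_1$ coordinates, condition (C4) supplies the variance lower bound $\ubar{\sigma}_{g^{(1)}}^2$ required by the high-dimensional CLT of Chernozhukov--Chetverikov--Kato, yielding Gaussian approximation by $N_{p_1}(0, m^2 \Gamma_g^{(1)})$ at the $(\cdot)^{1/6}$ rate encoded in $\omega_{n,1}$. On the last $p_2$ coordinates the H\'ajek variance may be tiny, but (C5) gives $\|g_j - \mu_j\|_{\psi_{\beta}} \le n^{-k} D_n$, so a sub-Weibull maximal inequality yields $\max_{j > p_1} |\sqrt{n}\, H_{n,j}| = O_P\!\left(n^{1/2-k} D_n \log(pn)^{1/\beta}\right)$. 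Because $\overline{Y}^{(2)}$ has diagonal variance at least $\alpha_n \ubar{\sigma}_h^2$ by (C3), this deviation is small relative to the anti-concentration scale on hyperrectangles and contributes through the $n^{1/2+k}$ branch of $\omega_{n,2}$.

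For the sampling remainder $\sqrt{n}(U_{n,N}^{\prime} - U_n)$, I would condition on $X_1, \ldots, X_n$. After a Bernstein-type bound to replace $\widehat{N}$ by $N$, this becomes a sum of independent, mean-zero, data-indexed vectors driven by the Bernoullis $Z_{\iota}$, with conditional covariance close to $\alpha_n \widehat{\Gamma}_h$ where $\widehat{\Gamma}_h$ is the empirical second moment of $h(X_{\iota}) - \mu$ over $I_{n,m}$. A concentration argument shows $\widehat{\Gamma}_h$ is elementwise close to $\Gamma_h$, and condition (C3) guarantees the conditional diagonal stays bounded below by $\alpha_n \ubar{\sigma}_h^2$ on all $p$ coordinates, which is exactly the anti-concentration needed for a conditional high-dimensional CLT. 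Unconditioning (using that the conditional Gaussian law is a smooth function of the data) then delivers approximation by an independent $Y_h \sim N_p(0, \alpha_n \Gamma_h)$.

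Combining the two pieces, since $\sqrt{n}\, H_n$ depends only on $X_1^n$ and the conditional Gaussian for $\sqrt{n}(U_{n,N}^{\prime} - U_n)$ has a covariance that is essentially deterministic, the unconditional law is approximately $Y \sim N_p(0, m^2 \Gamma_g + \alpha_n \Gamma_h)$. The final step passes from $Y$ to $\overline{Y}$ via a Gaussian-to-Gaussian hyperrectangle comparison (the Chernozhukov--Chetverikov--Kato interpolation bound): (C5) and (C6) give $\|m^2(\Gamma_g - \overline{\Gamma}_g)\|_{\infty} \lesssim m^2 n^{-2k} D_n^2$, and the diagonal of $m^2 \overline{\Gamma}_g + \alpha_n \Gamma_h$ is at least $\alpha_n \ubar{\sigma}_h^2$ by (C3), so the comparison error again fits inside $\omega_{n,2}$. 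The main obstacle I anticipate is the interface between the conditional and unconditional approximations: one must verify that the conditional CLT error is uniformly small across a high-probability event for $X_1^n$ \emph{and} that anti-concentration remains effective on all $p$ coordinates despite the nearly-degenerate $p_2$ block. This is precisely the role of (C3) -- it keeps the target covariance non-degenerate through the Bernoulli noise alone -- and it also explains why the error bound deteriorates unless $N$ is of the same order as $n$, since only then are the two rates $m^2 \ubar{\sigma}_{g^{(1)}}^2$ and $\alpha_n \ubar{\sigma}_h^2$ balanced in the anti-concentration denominator of $\omega_{n,1}$.
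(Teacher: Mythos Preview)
Your overall architecture coincides with the paper's: separate the Bernoulli-sampling noise from the complete $U$-statistic, replace the former by an independent $\sqrt{\alpha_n}\,Y_h$ via a conditional high-dimensional CLT, and then handle the mixed-degenerate $U_n$ block by block, using anti-concentration of $Y_h^{(2)}$ to absorb the near-degenerate H\'ajek contribution. The paper packages the second step as a standalone lemma (Proposition~\ref{prop:completeUstat}) that bounds $|P(\sqrt{n}U_n \in R) - P(m\overline{Y}_g \in R)|$ by $C_\beta\,\omega_{n,1}$ plus an indicator $\mathbbm{1}_{\{0 \in R^{(2)}_{\textrm{env},t}\}}$; after conditioning on the (now independent) $Y_h$ and taking expectations, Nazarov's inequality converts this indicator into the $\omega_{n,2}$ term. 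This modular packaging is cleaner than interleaving the block-wise bounds as you sketch, but the substance is the same.

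Three places in your bookkeeping need repair. First, your rate $\max_{j>p_1}|\sqrt{n}H_{n,j}| = O_P(n^{1/2-k}D_n\log(pn)^{1/\beta})$ is off by $\sqrt{n}$: the maximal inequality for \emph{sums} of independent sub-Weibull variables (Lemma~A.2 in \citet{song2019approximating}) gives $O_P(m\, n^{-k} D_n \log(pn)^{1/2})$, and it is this sharper rate that, after Nazarov with scale $\sqrt{\alpha_n}\,\ubar{\sigma}_h$, produces the $n^{1/2+k}$ branch of $\omega_{n,2}$ you claim; your stated rate would only yield $N^{1/2}n^{-k}$ in the denominator. Second, the $n^{m/3}$ branch of $\omega_{n,2}$ does not come from the degenerate remainder $U_n^{(r)}$; in the paper it arises from the Gaussian comparison that removes the factor $\sqrt{1-\rho_n}$ (using $\rho_n \le C N/n^m$), while $U_n^{(r)}$ is handled inside Proposition~\ref{prop:completeUstat}, absorbed into $\omega_{n,1}$, and contributes to the $n^{5/6}$ branch through the choice $t \asymp n^{-\min\{k,1/3\}}$. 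Third, your final step passing from $Y$ to $\overline{Y}$ by Gaussian comparison is both unnecessary and harmful: since you approximate the block-2 H\'ajek part by zero, the limit you actually reach is $m\overline{Y}_g + \sqrt{\alpha_n}Y_h = \overline{Y}$, not $Y$; invoking a comparison here would add an $\omega_{n,3}$-type error (cf.\ Proposition~\ref{prop:incompleteUstat}) that is not part of the theorem's bound.
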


Theorem~\ref{thm:incompleteUstat} shows that the distribution of $\sqrt{n}(U_{n,N}^{\prime} - \mu)$ can be approximated by the Gaussian distribution $Y \sim N_p(0,m^2 \Gamma_g + \alpha_n \Gamma_h)$ under mixed degeneracy. Since the computational budget parameter $N$ occurs in the numerator of the bound, one has to choose it in proportion to the sample size  such that the bound vanishes. In particular, we note that under the regime $N=\mathcal{O}(n)$ the bound vanishes when treating other quantities as constants.

\begin{exmp}\label{ex:comments-on-bound}
Let $N$ be of the same order as the sample size. Then $\alpha_n$ can be viewed as a constant and each coordinate of $Y$ is asymptotically non-degenerate. In this case, when we assume $k \geq 1/3$ and treat $m, \ubar{\sigma}_{g^{(1)}}, \ubar{\sigma}_h$ and $D_n$ as fixed constants, the bound $C_{\beta} \{\omega_{n,1} + \omega_{n,2} + \omega_{n,3}\}$ vanishes asymptotically under Conditions~\ref{C1} -~\ref{C6} if the dimension $p$ satisfies $\log(pn)^{3/2 + 6/\beta} = \smallO(n)$. 
On the other hand, one can also choose $N=n^{1+\varepsilon}$ for small enough $\varepsilon > 0$. Then, $\alpha_n \to 0$ as $n \to \infty$ and the distribution of $Y$ may become asymptotically degenerate. Nevertheless, the bound still vanishes if we fix all other quantities as before and if the dimension $p$ satisfies $\log(pn)^{\frac{3/2 + 6/\beta}{1/3-\varepsilon/2}} = \smallO(n)$. 
\end{exmp}

\begin{rem}[Order of the kernel] \label{rem:order}
The Berry-Esseen type bound in Theorem 2.4 depends explicitly on the order of the kernel $h$. In particular, the result also allows for kernels of diverging order, i.e., increasing $m$. However, larger $m$ imply  worse performance of the Gaussian approximation in terms of the required sample size. Also, if $m$ is increasing with $n$ one has to be careful in choosing the computational budget $N$;  it has to be chosen smaller to achieve convergence.
\end{rem}

The proof of Theorem~\ref{thm:incompleteUstat} relies on the seminal papers of  \citet{chernozhukov2013gaussian} and \citet{chernozhukov2017central} on Gaussian approximation of high-dimensional independent sums, and it extends the results of \citet{chen2019randomized} and \citet{song2019approximating} on incomplete $U$-statistics to the mixed degenerate case. Obtaining sharper bounds for the high-dimensional approximation of independent sums than in the original papers is an ongoing area of research, see for example \citet{fang2021high-dimensional}, \citet{chernozhukov2022improved}, \citet{lopes2022central} and \citet{chernozhukov2023nearlyoptimal}.

The bound in Theorem~\ref{thm:incompleteUstat} is stated as general as possible. Importantly, it entails that the incomplete $U$-statistic can be approximated by $N_p(0,m^2 \Gamma_g + \alpha_n \Gamma_h)$ even in irregular setups of our testing problem~\eqref{eq:testing-problem} as long as $\underline{\sigma}_{g^{(1)}}^2 > 0$. However, it might be difficult to read off precise rates for the speed of convergence in the \emph{close to irregular} scenarios. For large $n$, it is possible that there are close to irregular points such that some indices $j \in \{1, \ldots, p\}$ only satisfy mixed degeneracy if one chooses $\underline{\sigma}_{g^{(1)}}^2$ relatively small or $k$ small. On the other hand, under further distributional assumptions, we manage to improve the bound in Theorem 2.4 so that the speed of convergence is completely independent to the irregularity status of the hypothesis.

\begin{cor} \label{cor:Gaussian-case}
    Assume that $X_1, \ldots, X_n$ are i.i.d.~samples of a Gaussian distribution and assume that each individual kernel $h_1, \ldots, h_p$ is a nonconstant polynomial of degree at most $2s$. Suppose that $\mu=\mathbb{E}[h(X_1^m)]=0$ and that there exists $\underline{\sigma}_h^2 > 0$ such that $\underline{\sigma}_h^2 \leq \min_{1 \leq j \leq p} \sigma_{h,j}^2$. Then there exist $\beta \in (0,1]$ and $D_n \geq 1$ such that \ref{C1}, \ref{C2} and \ref{C6} are satisfied and the kernel $h$ is mixed degenerate. If, additionally, $n \leq N \leq C n$ for some constant $C > 0$, then it holds that
    \begin{align*} 
        \sup_{R \in \mathbb{R}^p_{\mathrm{re}}} &|P(\sqrt{n} U_{n,N}^{\prime} \in R) - P(Y \in R)| \leq C_{s,m} \frac{(\overline{\sigma}_h^2 \vee 1)\log(pn)^{1/2+2s}}{(\underline{\sigma}_h^2 \wedge 1) \, n^{1/9}} ,
    \end{align*}
    where $Y \sim N_p(0,m^2 \Gamma_g + \alpha_n \Gamma_h)$, $\overline{\sigma}_h^2=\max_{1 \leq j \leq p} \sigma_{h,j}^2$ and $C_{s,m} > 0$ is a constant only depending on $s$ and $m$. 
\end{cor}

In the bound in Corollary \ref{cor:Gaussian-case}, there is no $\underline{\sigma}_{g^{(1)}}^2$ and no $k$ showing up anymore, so the convergent speed shows no dependence on how close to irregular the points in the null hypothesis really are. Even though the bound might not be optimal, it completely  mathematizes our intuition why incomplete U-statistics are a good choice for a test statistic to guard against irregular points.

\begin{exmp}
    Each coordinate $h_j$ of the proposed kernel in Example~\ref{exmp:tetrad-kernel} is a polynomial of degree $4$ in Gaussian variables, which implies that Corollary \ref{cor:Gaussian-case} is applicable with $s=2$. See Section~\ref{sec:polynomial-hypotheses} for a general method to construct polynomial kernels.
\end{exmp}

\subsection{Bootstrap Approximation} \label{subsec:bootstrap-approx}
Since the covariance matrix $m^2 \Gamma_g + \alpha_n \Gamma_h $ of the approximating Gaussian distribution in Theorem~\ref{thm:incompleteUstat} is typically unknown in statistical applications, we apply a Gaussian multiplier bootstrap. The procedure is exactly the same as in \citet{chen2019randomized} and \citet{song2019approximating} but their error bounds on the approximation require non-degeneracy, that is, there is a constant $c > 0$ such that $\min_{1 \leq j \leq p}\sigma_{g,j}^2 \geq c$.
Under mixed degeneracy this is no longer the case. However, when choosing the computational budget appropriately, we prove that the bootstrap approximation still holds.

The bootstrap is based on the fact that the random vector $Y \sim N_p(0, m^2 \Gamma_g + \alpha_n \Gamma_h)$ is a weighted sum of the two independent random vectors $Y_g \sim N_p(0, \Gamma_g)$ and $Y_h \sim N_p(0, \Gamma_h)$, i.e., $Y = mY_g + \sqrt{\alpha_n} Y_h$. Let $\mathcal{D}_n=\{X_1, \ldots, X_n\} \cup \{Z_{\iota}: \iota \in I_{n,m}\}$ be the data involved in the definition of the incomplete $U$-statistic $U_{n,N}^{\prime}$. We will construct data dependent random vectors $U_{n_1,g}^\#$ and $U_{n,h}^\#$ such that, given the data $\mathcal{D}_n$, both vectors are independent and approximate $Y_g$ and $Y_h$.

To approximate the distribution of $Y_h$ take a collection $\{\xi_{\iota}^{\prime}: \iota \in I_{n,m}\}$ of independent $N(0,1)$ random variables that are also independent of $\mathcal{D}_n$. Define the multiplier bootstrap
\begin{equation} \label{eq:def-Unh}
    U_{n,h}^\# = \frac{1}{\sqrt{\widehat{N}}} \sum_{\iota \in I_{n,m}} \xi_{\iota}^{\prime} \sqrt{Z_{\iota}}(h(X_{\iota}) - U_{n,N}^{\prime})
\end{equation}
and observe that, conditioned on the data $\mathcal{D}_n$, the distribution of $U_{n,h}^\#$ is Gaussian with mean zero and covariance matrix
$
    \widehat{N}^{-1} \sum_{\iota \in I_{n,m}} Z_{\iota} (h(X_{\iota}) - U_{n,N}^{\prime}) (h(X_{\iota}) - U_{n,N}^{\prime})^{\top}. 
$
Intuitively, this covariance matrix should be a good estimator of the true covariance matrix $\Gamma_h$ and therefore the distribution of $U_{n,h}^\#$ should be ``close'' to the distribution of  $Y_h$.

Approximating $Y_g \sim N_p(0, \Gamma_g)$ is more involved since the H\'ajek projection~\eqref{eq:hajek-proj} is in general unknown. Thus, we first construct estimates $G_{i_1}$ of $g(X_{i_1})$ for each $i_1$ in a chosen subset $S_1 \subseteq \{1, \ldots, n\}$ with cardinality $n_1 = |S_1|$. Then, we consider the multiplier bootstrap distribution
\begin{equation} \label{eq:U_n_g}
    U_{n_1,g}^\# = \frac{1}{\sqrt{n_1}} \sum_{i_1 \in S_1} \xi_{i_1} (G_{i_1} - \overline{G}),
\end{equation}
where $\{\xi_{i_1}: i_1 \in S_1\}$ is a collection of independent $N(0,1)$ random variables that is independent of $\mathcal{D}_n$ and $\{\xi_{\iota}^{\prime}: \iota \in I_{n,m}\}$. Here, $\overline{G} = n_{i_1}^{-1}\sum_{i_1 \in S_1} G_{i_1}$ denotes the average of the constructed estimates. The exact form of the estimates $G_{i_1}$ is specified later. Similar as above, the distribution of $U_{n_1,g}^\#$ given the data $\mathcal{D}_n$ should be ``close'' to the distribution of $Y_g$. 
Combining $U_{n,h}^\#$ and $U_{n_1,g}^\#$, we obtain the multiplier bootstrap
$ 
    U_{n,n_1}^\# = m U_{n_1,g}^\# + \sqrt{\alpha_n} U_{n,h}^\#.
$
It approximates the distribution of $Y$ even under mixed degeneracy, as we verify in our next Lemma.  The approximation of $U_{n,n_1}^\#$ will depend on the quality of the estimator $G_{i_1}$ that we measure by the quantity
\begin{equation} \label{eq:delta_g_1}
    \widehat{\Delta}_{g,1} = \max_{1 \leq j \leq p} \frac{1}{n_1} \sum_{i_1 \in S_1} (G_{i_1,j} - g_j(X_{i_1}))^2.
\end{equation}
Moreover, the approximation depends on conditions involving the quantity
\[
    A_n \defeq A_n(\ubar{\sigma}_h, m, \beta, N) =  \frac{m^{\max\{2/\beta,4\}}}{\ubar{\sigma}_h^4 \wedge 1} \max \{\left(N/n\right)^2, 1\},
\]
which is required to be small. In particular, if $m$ does not grow with $n$ and the computational budget $N$ is chosen appropriately, then Assumption~\ref{C3} ensures that $A_n$ does not inflate.

\begin{lem} \label{lem:bootstrap}
Assume the conditions~\ref{C1} -~\ref{C3} hold. If
\begin{equation} \label{eq:bootstrapA1}
    \frac{A_n D_n^4 \log(pn)^{3+4/\beta}}{n_1\wedge N} \leq C_1 n^{-\zeta_1}
\end{equation}
and
\begin{equation} \label{eq:bootstrapA2}
    P\left( A_n D^2_n \widehat{\Delta}_{g,1} \log(p)^4 > C_1 n^{-\zeta_2}\right)\leq \frac{C_1}{n}
\end{equation}
for some constants $C_1 > 0$ and $\zeta_1, \zeta_2 \in (0,1)$, then there exists a constant $C>0$ depending only on $\beta, \zeta_1$ and $C_1$ such that with probability at least $1-C/n$,
\[
    \sup_{R \in \mathbb{R}^p_{\mathrm{re}}}\left|P|_{\mathcal{D}_n}(U_{n,n_1}^\# \in R) - P(Y \in R)\right| \leq C n^{-(\zeta_1 \wedge \zeta_2)/6}.
\]
\end{lem}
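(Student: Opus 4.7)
The plan is a two-step strategy: first, apply a Gaussian-to-Gaussian Kolmogorov distance comparison on hyperrectangles; second, control the elementwise deviation of the bootstrap covariance matrix from its population counterpart on a high-probability event. Conditional on $\mathcal{D}_n$, the bootstrap vector $U_{n,n_1}^\#$ is centered Gaussian with covariance $\widehat{\Sigma} = m^2\widehat{\Sigma}_g + \alpha_n\widehat{\Sigma}_h$, where $\widehat{\Sigma}_g = n_1^{-1}\sum_{i_1\in S_1}(G_{i_1}-\overline{G})(G_{i_1}-\overline{G})^\top$ and $\widehat{\Sigma}_h = \widehat{N}^{-1}\sum_{\iota}Z_\iota(h(X_\iota)-U_{n,N}')(h(X_\iota)-U_{n,N}')^\top$, whereas $Y \sim N_p(0,\Sigma)$ with $\Sigma = m^2\Gamma_g+\alpha_n\Gamma_h$. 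The Gaussian comparison inequality on hyperrectangles already used for Proposition~\ref{prop:incompleteUstat} yields a bound whose scale is $\|\widehat{\Sigma}-\Sigma\|_\infty$ divided by the minimum diagonal entry of $\Sigma$, raised to a small power, times a polylogarithmic factor in $p$. The crucial observation, which replaces the usual non-degeneracy hypothesis on $\min_j\sigma_{g,j}^2$ in \citet{chen2019randomized} and \citet{song2019approximating}, is that Condition~\ref{C3} guarantees $\min_j \Sigma_{jj}\geq \alpha_n\ubar{\sigma}_h^2$; the second summand $\alpha_n\Gamma_h$ supplies a floor on the variance regardless of the degeneracy status of $g$. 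The factor $\alpha_n^{-2} = (N/n)^2$ appearing after squaring then feeds naturally into the definition of $A_n$.

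What remains is to bound $\|\widehat{\Sigma}-\Sigma\|_\infty \leq m^2\|\widehat{\Sigma}_g-\Gamma_g\|_\infty + \alpha_n\|\widehat{\Sigma}_h-\Gamma_h\|_\infty$ elementwise with high probability. For the $g$-piece I would decompose $\widehat{\Sigma}_g-\Gamma_g = (\widehat{\Sigma}_g-\widetilde{\Sigma}_g)+(\widetilde{\Sigma}_g-\Gamma_g)$, with $\widetilde{\Sigma}_g$ the analogous sample covariance built from the true $g(X_{i_1})$. A Cauchy--Schwarz bound shows the first summand is controlled entrywise by $(\widehat{\Delta}_{g,1})^{1/2}$ times quantities that concentrate under \ref{C2}--\ref{C6}, and hypothesis \eqref{eq:bootstrapA2} provides the high-probability control. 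The second summand is an i.i.d.\ sample covariance of sub-Weibull($\beta$) vectors, to which a sub-Weibull Bernstein tail bound and a union bound over the $p^2$ entries deliver a rate of order $D_n^2\sqrt{\log(pn)^{1+2/\beta}/n_1}$; this is exactly the $n_1$-dependence in \eqref{eq:bootstrapA1}. For the $h$-piece I would exploit the independence of the Bernoulli selectors $\{Z_\iota\}$ from $\{X_i\}$: a Chernoff bound yields $\widehat{N}/N \in [1/2,3/2]$ on an event of probability $1-C/n$, and, conditional on $X_1^n$, the deviation of $\widehat{N}^{-1}\sum_\iota Z_\iota h(X_\iota)h(X_\iota)^\top$ from its conditional mean $|I_{n,m}|^{-1}\sum_\iota h(X_\iota)h(X_\iota)^\top$ is again handled by a sub-Weibull Bernstein bound entrywise. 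Unconditioning, the conditional mean is a complete $U$-statistic in the outer products that concentrates around $\Gamma_h+\mu\mu^\top$ at the usual $n^{-1/2}$ rate, and the centering shift from $\mu$ to $U_{n,N}'$ contributes only a negligible cross term.

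Combining the two pieces, dividing by $(\alpha_n\ubar{\sigma}_h^2)^2$ to meet the Gaussian comparison's normalization, and taking the appropriate root, the composite prefactor reassembles into exactly $A_n$, so that \eqref{eq:bootstrapA1} and \eqref{eq:bootstrapA2} together yield the claimed rate $Cn^{-(\zeta_1\wedge\zeta_2)/6}$. I expect the main obstacle to lie in the $\widehat{\Sigma}_h$ estimate: the conditional mean is itself a $U$-statistic whose entries are \emph{products} of sub-Weibull($\beta$) variables and are therefore only sub-Weibull of order $\beta/2$, so one must propagate tail constants carefully through the Bernstein-type steps, drawing on the product rules for sub-Weibull norms established in Appendix~\ref{sec:sub-Weibull}. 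This is precisely the spot where the framework $N = \mathcal{O}(n)$ becomes essential, since otherwise the $\max\{(N/n)^2,1\}$ factor inside $A_n$ would swamp the concentration rates.
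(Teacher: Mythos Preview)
Your proposal is correct and follows essentially the same route as the paper. The only structural difference is that the paper applies the Gaussian comparison in two stages---first replacing $\sqrt{\alpha_n}\,U_{n,h}^\#$ by $\sqrt{\alpha_n}\,Y_h$ while keeping $mU_{n_1,g}^\#$ fixed, then replacing $mU_{n_1,g}^\#$ by $mY_g$---and works throughout with the standardized kernel $\tilde h=\Lambda_h^{-1/2}h$, so that the normalized deviation $\widehat\Delta_{\tilde h}$ appears directly rather than $\|\widehat\Sigma_h-\Gamma_h\|_\infty/\ubar\sigma_h^2$; your one-shot comparison of $\widehat\Sigma$ with $\Sigma$ leads to the same bound up to constants. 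The decompositions you sketch for both pieces (Cauchy--Schwarz plus i.i.d.\ sub-Weibull concentration for $\widehat\Sigma_g-\Gamma_g$; Chernoff on $\widehat N/N$, conditional Bernstein on the Bernoulli part, and $U$-statistic concentration for $\widehat\Sigma_h-\Gamma_h$) coincide with the paper's Steps~1 and~2. One small slip: the lemma assumes only \ref{C1}--\ref{C3}, so your appeal to \ref{C6} is unnecessary---the bound $\max_j\sigma_{g,j}\le C_\beta D_n$ already follows from \ref{C2} via Lemma~\ref{lem:weibull-moments}.
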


Note that Lemma~\ref{lem:bootstrap} formally does not require mixed degeneracy, that is, it is completely independent of the status of degeneracy of the kernel.
Now, we specify $G_{i_1}$ to be a special case of the divide-and-conquer estimator $\hat{g}_{i_1}$ from \citet{chen2019randomized} and \citet{song2019approximating} that is defined as follows.
For each index $i_1 \in S_1$, we partition the remaining indices, $\{1, \ldots, n\}\setminus \{i_1\}$, into disjoint subsets $\{S_{2,k}^{(i_1)}, k= 1, \ldots, K\}$, each of size $m-1$, where $K = \lfloor (n-1)/(m-1) \rfloor$. Then, we define for each $i_1 \in S_1$ the estimator
\begin{equation} \label{eq:estimator_of_g}
    \hat{g}_{i_1} = \frac{1}{K} \sum_{k=1}^K h(X_{i_1}, X_{S_{2,k}^{(i_1)}}).
\end{equation}
Thus, from now on, we again  refer by $U_{n_1,g}^\#$ to the statistics defined in~\eqref{eq:U_n_g} but with the specialized estimator $G_{i_1}=\hat{g}_{i_1}$ as defined in~\eqref{eq:estimator_of_g}. The next theorem builds on Lemma 2.7 and verifies that the bootstrap approximation is valid for this specialized estimator.

\begin{thm} \label{thm:bootstrap}
Assume the Conditions~\ref{C1} -~\ref{C3} hold and
\begin{equation} \label{eq:main-bootstrap-cond}
    \frac{A_n D_n^4 \log(pn)^{3+4/\beta}}{n_1\wedge N} \leq C_1 n^{-\zeta}
\end{equation}
for some constants $C_1 > 0$, $\zeta \in (0,1)$. Then, for any $\nu \in (\max\{7/6,1/\zeta\},\infty)$, there exists a constant $C>0$ depending only on $\beta$, $\nu$, $\zeta$ and $C_1$ such that with probability at least $1-C/n$,
\[
    \sup_{R \in \mathbb{R}^p_{\mathrm{re}}}\left|P|_{\mathcal{D}_n}(U_{n,n_1}^\# \in R) - P(Y \in R)\right| \leq C n^{-(\zeta -1/\nu)/6}.
\]
\end{thm}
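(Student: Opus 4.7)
The plan is to derive Theorem \ref{thm:bootstrap} from Lemma \ref{lem:bootstrap} by verifying that the divide-and-conquer estimator $\hat{g}_{i_1}$ defined in \eqref{eq:estimator_of_g} makes condition \eqref{eq:bootstrapA2} hold. Since condition \eqref{eq:main-bootstrap-cond} is already condition \eqref{eq:bootstrapA1} with $\zeta_1=\zeta$, the only remaining task is to produce a high-probability bound on $\widehat{\Delta}_{g,1}$ of the form $\widehat{\Delta}_{g,1} \lesssim D_n^2 \log(pn)^{a}/K$, where $K=\lfloor (n-1)/(m-1)\rfloor$, which, combined with \eqref{eq:main-bootstrap-cond}, yields \eqref{eq:bootstrapA2} with $\zeta_2 = \zeta - 1/\nu$. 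Lemma \ref{lem:bootstrap} then delivers the stated bootstrap rate $n^{-(\zeta\wedge \zeta_2)/6} = n^{-(\zeta-1/\nu)/6}$.

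The key observation is that for each fixed $i_1 \in S_1$ and $j$, conditional on $X_{i_1}$ the quantity $\hat{g}_{i_1,j}-g_j(X_{i_1})$ is (after dividing by $K$, treating the sum in \eqref{eq:estimator_of_g} as an average) a mean-zero average of $K$ conditionally i.i.d.\ random variables $h_j(X_{i_1},X_{S_{2,k}^{(i_1)}})-g_j(X_{i_1})$ whose sub-Weibull norm is, by \ref{C2} and the standard sub-Weibull properties collected in Appendix \ref{sec:sub-Weibull}, bounded by a constant multiple of $D_n$. I would apply a Bernstein-type inequality for sums of independent sub-Weibull random variables of order $\beta$, yielding a conditional (hence unconditional) tail bound of the shape
\[
    P\bigl(|\hat{g}_{i_1,j}-g_j(X_{i_1})| > t\bigr)\leq 2\exp\bigl(-c_\beta \min\{K t^2/D_n^2,\; (K t/D_n)^\beta\}\bigr).
\]
A union bound over the at most $n p$ pairs $(i_1,j)$, with $t$ chosen of order $D_n(\sqrt{\log(pn)/K}+\log(pn)^{1/\beta}/K)$, gives with probability at least $1-1/n$ the uniform estimate
\[
    \max_{i_1\in S_1,\,j\leq p}\bigl(\hat{g}_{i_1,j}-g_j(X_{i_1})\bigr)^2 \;\leq\; C\Bigl(\frac{D_n^2\log(pn)}{K}+\frac{D_n^2\log(pn)^{2/\beta}}{K^2}\Bigr),
\]
and hence the same bound for $\widehat{\Delta}_{g,1}$, since averaging over $i_1$ is dominated by the maximum.

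Substituting this bound into the quantity $A_n D_n^2\widehat{\Delta}_{g,1}\log(p)^4$ and using $K\asymp n/m$ reduces the verification of \eqref{eq:bootstrapA2} to showing
\[
    \frac{A_n D_n^4 m\log(pn)^{5}}{n}\;\lesssim\;n^{-(\zeta-1/\nu)}\quad\text{and}\quad\frac{A_n D_n^4 m^2\log(pn)^{4+2/\beta}}{n^2}\;\lesssim\;n^{-(\zeta-1/\nu)}.
\]
Both follow from the hypothesis \eqref{eq:main-bootstrap-cond} once we absorb the factor $m$ (a constant within $A_n$) and the slight differences in log exponents into the slack $n^{-1/\nu}$, which is exactly what the freedom to choose $\nu>\max\{7/6,1/\zeta\}$ buys us; the constraint $\nu>1/\zeta$ keeps $\zeta_2>0$, and $\nu>7/6$ absorbs the second, quadratic-in-$1/K$, term (whose effective rate is $1/K^2\asymp 1/n^2$ against a required $n^{-(\zeta-1/\nu)}$, i.e.\ we need $2-(\zeta-1/\nu)\ge \zeta$, easily arranged). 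Applying Lemma \ref{lem:bootstrap} with $\zeta_1=\zeta$ and $\zeta_2=\zeta-1/\nu$ then completes the proof.

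The main obstacle I anticipate is the bookkeeping of the sub-Weibull Bernstein bound: the quadratic-in-$1/K$ tail piece, picked up because $\beta\le 1$ (not sub-Exponential), must not dominate the linear piece, and the extra $\log(pn)^{1/\beta}$ factors together with the factor $m$ appearing through $K\asymp n/m$ have to be accommodated within the \emph{single} polynomial slack $n^{-1/\nu}$ we allow ourselves. This is where the precise thresholds $\nu>7/6$ and $\nu>1/\zeta$ in the statement arise, and where care is needed to verify all constants depend only on $\beta,\nu,\zeta,C_1$ as claimed.
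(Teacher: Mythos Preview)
Your reduction to Lemma~\ref{lem:bootstrap} with $\zeta_1=\zeta$ and $\zeta_2=\zeta-1/\nu$ is exactly what the paper does, and the remaining task is indeed to verify \eqref{eq:bootstrapA2} for the divide-and-conquer estimator. However, your Bernstein-plus-union-bound device has a genuine gap. Condition~\ref{C2} bounds only the \emph{unconditional} norm $\|h_j(X_1^m)-\mu_j\|_{\psi_\beta}\le D_n$; it does not bound the \emph{conditional} norm $\|h_j(x,X_2^m)-g_j(x)\|_{\psi_\beta}$ uniformly in $x$. For a product-type kernel such as the tetrad kernel of Example~\ref{exmp:tetrad-kernel}, this conditional norm scales with the size of $x$ and is unbounded. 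Since the summands $h_j(X_{i_1},X_{S_{2,k}^{(i_1)}})-g_j(X_{i_1})$ are not unconditionally independent either (they all share $X_{i_1}$), no direct Bernstein inequality applies, and the displayed tail bound on $|\hat g_{i_1,j}-g_j(X_{i_1})|$ is not justified from \ref{C2} alone.

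The paper avoids this by controlling a \emph{moment} rather than a tail: it quotes $\mathbb{E}[\widehat{\Delta}_{g,1}^{\nu}]\le C\bigl(m D_n^2\log(p)^{1+2/\beta}/n\bigr)^{\nu}$ from \citet{song2019approximating} (obtained via Rosenthal-type inequalities applied conditionally on $X_{i_1}$ and then integrated---this needs only conditional \emph{moments}, which are controlled in expectation by the unconditional ones) and then applies Markov's inequality at order $\nu$. This also makes the role of $\nu$ transparent: the factor $n^{(\zeta-1/\nu)\nu}=n^{\zeta\nu-1}$ arising in the Markov step yields exactly the $C/n$ probability required in \eqref{eq:bootstrapA2}, and $\nu>1/\zeta$ keeps $\zeta_2>0$. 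Your heuristic linking $\nu>7/6$ to the quadratic-in-$1/K$ Bernstein term does not actually constrain $\nu$ (your own inequality $2-(\zeta-1/\nu)\ge\zeta$ holds for every $\nu>0$ when $\zeta<1$); the lower bound $7/6$ enters through the cited moment inequality. Your overall plan is sound and becomes a correct proof once the Bernstein step is replaced by this moment route.
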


Theorem~\ref{thm:bootstrap} says that we can approximate the asymptotic Gaussian distribution on the hyperrectangles via the multiplier bootstrap $U_{n,n_1}^\# = m U_{n_1,g}^\# + \sqrt{\alpha_n} U_{n,h}^\#$. As long as the computational budget $N$ is chosen appropriately, for example $N=\mathcal{O}(n)$, this holds independently of the status of degeneracy since we do not require Conditions~\ref{C4} -~\ref{C6}.  Crucially, we are able to simulate the distribution of $U_{n,n_1}^\#$ given the data by generating independent $N(0,1)$ random variables. The complexity of the bootstrap procedure, and therefore the complexity of our testing methodology, is discussed in Remark~\ref{rem:complexity}.
In a corollary we combine the Gaussian approximation with the bootstrap approximation.

\begin{cor} \label{cor:all-together}
Assume the Conditions~\ref{C1} -~\ref{C6} hold. Further assume that for some constants $C_1 > 0$, $\zeta \in (0,1)$, Condition~\eqref{eq:main-bootstrap-cond} holds and $\omega_{n,1} + \omega_{n,2} + \omega_{n,3} \leq C_1 n^{-\zeta/7}$. Then there exists a constant $C>0$ depending only on $\beta, \zeta$ and $C_1$ such that with probability at least $1-C/n$,
\[
    \sup_{R \in \mathbb{R}^p_{\mathrm{re}}}\left|P(\sqrt{n}(U_{n,N}^{\prime} - \mu) \in R) - P|_{\mathcal{D}_n}(U_{n,n_1}^\# \in R)\right| \leq C n^{-\zeta/7}.
\]
\end{cor}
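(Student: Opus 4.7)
The plan is to combine the three approximations proved earlier via a triangle inequality on the Kolmogorov--type distance over hyperrectangles. Concretely, for any $R\in\mathbb{R}^p_{\mathrm{re}}$, insert the two auxiliary Gaussian vectors $\overline{Y}\sim N_p(0,m^2\overline{\Gamma}_g+\alpha_n\Gamma_h)$ and $Y\sim N_p(0,m^2\Gamma_g+\alpha_n\Gamma_h)$ between $\sqrt{n}(U_{n,N}^{\prime}-\mu)$ and $U_{n,n_1}^\#$, giving
\begin{align*}
    \sup_{R\in\mathbb{R}^p_{\mathrm{re}}}\bigl|P(\sqrt{n}(U_{n,N}^{\prime}-\mu)\in R) &- P|_{\mathcal{D}_n}(U_{n,n_1}^\#\in R)\bigr|\\
    \leq\;&\sup_{R}\bigl|P(\sqrt{n}(U_{n,N}^{\prime}-\mu)\in R)-P(\overline{Y}\in R)\bigr|\\
    &+\sup_{R}\bigl|P(\overline{Y}\in R)-P(Y\in R)\bigr|\\
    &+\sup_{R}\bigl|P(Y\in R)-P|_{\mathcal{D}_n}(U_{n,n_1}^\#\in R)\bigr|.
\end{align*}

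Next I would apply the three results available: Theorem \ref{thm:incompleteUstat} bounds the first term deterministically by $C_\beta(\omega_{n,1}+\omega_{n,2})$, Proposition \ref{prop:incompleteUstat} bounds the second term deterministically by $C_\beta\omega_{n,3}$, and Theorem \ref{thm:bootstrap} bounds the third term by $Cn^{-(\zeta-1/\nu)/6}$ on an event of probability at least $1-C/n$, for any admissible $\nu\in(\max\{7/6,1/\zeta\},\infty)$. The hypothesis $\omega_{n,1}+\omega_{n,2}+\omega_{n,3}\leq C_1 n^{-\zeta/7}$ together with condition \eqref{eq:main-bootstrap-cond} supplies both of the required deterministic inputs and legitimates the use of Theorem \ref{thm:bootstrap}.

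It remains to calibrate $\nu$ so that the bootstrap error is of order $n^{-\zeta/7}$ as well. Enforcing $(\zeta-1/\nu)/6\geq\zeta/7$ yields $\nu\geq 7/\zeta$; because $\zeta\in(0,1)$, any such choice automatically satisfies $\nu>\max\{7/6,1/\zeta\}$, so e.g. $\nu=8/\zeta$ makes the inequality strict and fixes $\nu$ as a function of $\zeta$ alone. With this choice the constant from Theorem \ref{thm:bootstrap} depends only on $\beta,\zeta,C_1$, and summing the three terms gives the advertised bound $Cn^{-\zeta/7}$ on the same event of probability at least $1-C/n$.

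There is no real obstacle here: this is a bookkeeping argument. The only point deserving care is to verify that the three approximation bounds can be chained without losing the high-probability event, which is immediate because only the bootstrap bound is probabilistic while the other two are deterministic, so their intersection has probability at least $1-C/n$. The quantitative rate $n^{-\zeta/7}$ (rather than a better exponent) is forced by the slowest piece, namely the bootstrap bound with the chosen $\nu$, and one cannot obviously improve it without strengthening the hypotheses.
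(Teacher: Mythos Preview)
Your proposal is correct and matches the paper's proof, which simply states that the result follows from Theorem \ref{thm:incompleteUstat}, Proposition \ref{prop:incompleteUstat}, and Theorem \ref{thm:bootstrap} with $\nu=7/\zeta$. Your choice $\nu=8/\zeta$ works equally well (it gives a slightly better exponent on the bootstrap term), and the paper's choice $\nu=7/\zeta$ is the minimal admissible value yielding exactly the rate $n^{-\zeta/7}$.
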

\begin{proof}
    This follows from Theorem~\ref{thm:incompleteUstat} and Theorem~\ref{thm:bootstrap} with $\nu = 7/\zeta$. 
\end{proof}

\subsection{Studentization} \label{subsec:studentization}
Often, the approximate variances of the coordinates of $U_{n,N}^{\prime}$ are heterogeneous, and it is therefore desirable to studentize the incomplete $U$-statistic. For $j=1, \ldots, p$, we denote by ${\sigma}^2_j = m^2 {\sigma}_{g,j}^2 + \alpha_n {\sigma}_{h,j}^2$ the variance of the $j$-th coordinate of the approximating Gaussian random vector $Y$, that is, ${\sigma}^2_j$ is equal to the diagonal element $m^2\Gamma_{g,jj}+\alpha_n\Gamma_{h,jj}$ of the approximating covariance matrix. In line with this, we define the empirical variances $\hat{\sigma}^2_j = m^2 \hat{\sigma}_{g,j}^2 + \alpha_n \hat{\sigma}_{h,j}^2$ to be the diagonal elements of the conditional covariance matrix of the bootstrap distribution $U_{n,n_1}^\#$ given the data.
Therefore, $\hat{\sigma}_{g,j}^2$ and $\hat{\sigma}_{h,j}^2$ are given by
\begin{equation*} 
   \hat{\sigma}_{g,j}^2 = \frac{1}{n_1} \sum_{i_1 \in S_1} (\hat{g}_{i_1,j} - \overline{g}_j)^2 \quad \textrm{and} \quad \hat{\sigma}_{h,j}^2 = \frac{1}{\widehat{N}} \sum_{\iota \in I_{n,m}} Z_{\iota} (h_j(X_{\iota}) - U_{n,N,j}^{\prime})^2.
\end{equation*}
Moreover, we define a $p \times p$ diagonal matrix $\widehat{\Lambda}$ with diagonal elements
$\widehat{\Lambda}_{jj} = m^2 \hat{\sigma}_{g,j}^2 + \alpha_n \hat{\sigma}_{h,j}^2$ for all $j=1,\ldots,p$. 

\begin{cor} \label{cor:studentization}
Assume the conditions in Corollary~\ref{cor:all-together}. Then there exists a constant $C>0$ depending only on $\beta, \zeta$ and $C_1$ such that with probability at least $1-C/n$,
\[
    \sup_{R \in \mathbb{R}^p_{\mathrm{re}}}\left|P(\sqrt{n} \, \widehat{\Lambda}^{-1/2}(U_{n,N}^{\prime} - \mu) \in R) - P|_{\mathcal{D}_n}(\widehat{\Lambda}^{-1/2}U_{n,n_1}^\# \in R)\right| \leq C n^{-\zeta/7}.
\]
\end{cor}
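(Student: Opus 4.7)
The strategy is to reduce to Corollary~\ref{cor:all-together} via a detour through the deterministic ``true-variance'' diagonal matrix $\Lambda$ with $j$-th diagonal entry $\sigma_j^2 = m^2 (\overline{\Gamma}_g)_{jj} + \alpha_n (\Gamma_h)_{jj}$. Under the corollary's hypotheses, Assumption~\ref{C3} combined with the $N = \mathcal{O}(n)$ regime (encoded in the condition on $A_n$) gives $\sigma_j^2 \ge \alpha_n \, \ubar{\sigma}_h^2 > 0$ uniformly in $j$, so $\Lambda^{1/2}$ is well-defined and invertible, and $\Lambda^{1/2} R$ is a \emph{non-random} hyperrectangle for every hyperrectangle $R$. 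Writing $T_o = \sqrt{n}(U_{n,N}' - \mu)$, $T_o^\# = U_{n,n_1}^\#$, $F(\cdot) = P(T_o \in \cdot)$, and $\widehat{F}(\cdot) = P|_{\mathcal{D}_n}(T_o^\# \in \cdot)$, the triangle inequality decomposes
\begin{align*}
&\bigl|P(\widehat{\Lambda}^{-1/2} T_o \in R) - P|_{\mathcal{D}_n}(\widehat{\Lambda}^{-1/2} T_o^\# \in R)\bigr| \\
&\quad \le \underbrace{\bigl|P(T_o \in \widehat{\Lambda}^{1/2} R) - F(\Lambda^{1/2} R)\bigr|}_{(\mathrm{I})} + \underbrace{\bigl|F(\Lambda^{1/2} R) - \widehat{F}(\Lambda^{1/2} R)\bigr|}_{(\mathrm{II})} + \underbrace{\bigl|\widehat{F}(\Lambda^{1/2} R) - \widehat{F}(\widehat{\Lambda}^{1/2} R)\bigr|}_{(\mathrm{III})}.
\end{align*}

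Term (II) is bounded by $C n^{-\zeta/7}$ on the good event of Corollary~\ref{cor:all-together}, since $\Lambda^{1/2} R$ is non-random. Terms (I) and (III) are absorbed using two auxiliary ingredients. First, concentration of empirical variances: with probability at least $1 - C/n$, $\max_j |\hat{\sigma}_j^2 - \sigma_j^2|$ is of order $n^{-\zeta/7}/\sqrt{\log p}$. The argument parallels the bound on $\widehat{\Delta}_{g,1}$ in the proof of Theorem~\ref{thm:bootstrap}: the divide-and-conquer $\hat{g}_{i_1}$ is an $m$-independent average of kernel evaluations that inherit sub-Weibull tails from \ref{C2}, and a maximal Bernstein-type inequality for sub-Weibull sums absorbs the union bound over $j$ into a $\log(pn)$ factor; the Bernoulli-weighted average defining $\hat{\sigma}_{h,j}^2$ is treated identically. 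Second, Gaussian anti-concentration for $Y \sim N_p(0, m^2 \overline{\Gamma}_g + \alpha_n \Gamma_h)$ in the spirit of \citet[Lemma C.5]{chen2019randomized}: since the diagonal of the covariance of $Y$ equals $\Lambda$ and is bounded below, the $Y$-measure of the symmetric-difference strip $\widehat{\Lambda}^{1/2} R \,\triangle\, \Lambda^{1/2} R$ is at most $C \max_j |\hat{\sigma}_j - \sigma_j| \sqrt{\log p}$ uniformly in $R$ on the good event. For term (I), one passes through Theorem~\ref{thm:incompleteUstat} and Proposition~\ref{prop:incompleteUstat} to reduce the joint probability $P(T_o \in \widehat{\Lambda}^{1/2} R)$ to a Gaussian boundary-strip probability (the Gauss-approx errors $\omega_{n,1}, \omega_{n,2}, \omega_{n,3}$ are already controlled by hypothesis), and then applies the two ingredients to bound this by $O(n^{-\zeta/7})$. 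Term (III) is handled analogously using the conditional bootstrap Gaussian, whose conditional covariance is close to the true one on the good event, so that the same anti-concentration transfers conditionally.

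The main obstacle is the first auxiliary ingredient: obtaining the uniform rate for $\max_j |\hat{\sigma}_{g,j}^2 - \sigma_{g,j}^2|$. For indices $j > p_1$, Assumption~\ref{C5} permits $\sigma_{g,j}^2$ to decay polynomially in $n$, so a priori one might worry that only a \emph{relative} bound would suffice for studentization to be informative. Fortunately, the denominator of interest $\hat{\sigma}_j^2 = m^2 \hat{\sigma}_{g,j}^2 + \alpha_n \hat{\sigma}_{h,j}^2$ is bounded below by $\alpha_n \, \ubar{\sigma}_h^2$ on the good event by Assumption~\ref{C3}, so \emph{absolute} closeness of $\widehat{\Lambda}$ to $\Lambda$ transfers directly into a useful bound on the studentized quantities, and the sub-Weibull maximal inequality delivers this absolute closeness at the target rate within the log-polynomial dimension regime already permitted by Corollary~\ref{cor:all-together}. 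Summing (I) + (II) + (III) then gives the stated $O(n^{-\zeta/7})$ bound on the desired $1-C/n$ event.
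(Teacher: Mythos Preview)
Your overall strategy is correct and matches the paper's: pass through a deterministic reference diagonal $\Lambda$, invoke the unstudentized result (Corollary~\ref{cor:all-together}) on the non-random hyperrectangle $\Lambda^{1/2}R$, and absorb the perturbation terms via variance concentration plus Gaussian anti-concentration. The paper, however, executes the variance-concentration step more economically. Instead of re-deriving sub-Weibull maximal inequalities ``parallel to the bound on $\widehat\Delta_{g,1}$'', it records the purely algebraic inequality
\[
\max_{1\le j\le p}\Bigl|\frac{\hat\sigma_j^2}{\sigma_j^2}-1\Bigr|
\;\le\;\frac{Nm^2}{\ubar\sigma_h^2\,n}\,\widehat\Delta_g+\widehat\Delta_{\tilde h},
\]
where $\widehat\Delta_g$ and $\widehat\Delta_{\tilde h}$ are \emph{exactly} the quantities already controlled in Steps~1 and~2 of the proof of Lemma~\ref{lem:bootstrap} (with $\nu=7/\zeta$). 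This immediately yields $\max_j|\hat\sigma_j^2/\sigma_j^2-1|\log(p)^2\le Cn^{-3\zeta/7}$ with probability at least $1-C/n$, after which the remainder is deferred to \citet[Corollary~A.1]{chen2019randomized}. So the paper recycles existing bounds rather than reproving them; your route works but does more labor.

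One small discrepancy worth flagging: you take the reference diagonal from $m^2\overline\Gamma_g+\alpha_n\Gamma_h$, whereas the paper uses $\sigma_j^2=m^2\sigma_{g,j}^2+\alpha_n\sigma_{h,j}^2$, i.e.\ the diagonal of $m^2\Gamma_g+\alpha_n\Gamma_h$. For $j>p_1$ your choice introduces an extra bias $|\sigma_{g,j}^2-(\overline\Gamma_g)_{jj}|=\sigma_{g,j}^2$, which is harmless under~\ref{C5} (it is $O(n^{-2k}D_n^2)$) but should be made explicit in your accounting. The paper's choice avoids this detour and also matches what $\hat\sigma_{g,j}^2$ is actually estimating.
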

Corollary~\ref{cor:studentization} allows us to construct a test for hypotheses of the form~\eqref{eq:nullhypothesis} that asymptotically controls type I error and has power against alternatives outside a small neighborhood of the null hypothesis.

\section{Testing Methodology} \label{sec:methodology}
In this section we propose our test based on incomplete $U$-statistics. Recall that $X_1, \ldots, X_n$ are i.i.d.~samples from a distribution $P_\theta$ with parameter $\theta \in \Theta \subseteq \mathbb{R}^d$. 

\subsection{Test Statistic}

We want to test null hypotheses $\Theta_0 \subseteq \Theta$  defined  by constraints $f(\theta)=(f_1(\theta), \ldots, f_p(\theta))$ as specified in~\eqref{eq:nullhypothesis}.
For now, we assume that an $\mathbb{R}^p$-valued, measurable and symmetric  function $h(x_1, \ldots, x_m)$ exists such that $\mathbb{E}[h(X_1^m)]=f(\theta)$. For the case of polynomial hypotheses we show a general construction of kernels in Section~\ref{sec:polynomial-hypotheses}. 
We define the test statistic $\mathcal{T}$ to be the maximum of a studentized incomplete $U$-statistic, that is,
\[
    \mathcal{T} = \max_{1 \leq j\leq p} \sqrt{n} \, U_{n,N,j}^{\prime} / \hat{\sigma}_j,
\]
where $\hat{\sigma}_j^2$ is the empirical approximate variance of the $j$-th coordinate of the incomplete $U$-statistic; recall the definition in Section~\ref{subsec:studentization}.
Large values of $\mathcal{T}$ indicate that the null hypothesis is violated and thus it is natural to reject $H_0$ when $\mathcal{T}$ exceeds a certain critical value. 
\subsection{Critical Value} \label{subsec:critical-value}
The idea for construction of a critical value relies on the observation  
\begin{equation} \label{eq:critical-value}
    \mathcal{T} \leq \max_{1 \leq j \leq p} \sqrt{n} (U_{n,N,j}^{\prime} - f_j(\theta)) / \hat{\sigma}_j,
\end{equation}
whenever $\theta$ is a point in the null hypothesis $\Theta_0$. Hence, to make the test size less or equal than $\alpha$, it is enough to choose the critical value as the $(1-\alpha)$-quantile of the distribution of $\max_{1 \leq j \leq p} \sqrt{n} (U_{n,N,j}^{\prime} - f_j(\theta)) / \hat{\sigma}_j$, which is a centered version of our test statistic $\mathcal{T}$. Since the distribution of $\max_{1 \leq j \leq p} \sqrt{n} (U_{n,N,j}^{\prime} - f_j(\theta)) / \hat{\sigma}_j$ is unknown, we approximate it using the Gaussian multiplier bootstrap introduced in Section~\ref{sec:incomplete-U-stat}. 
The distribution function $P(\max_{1 \leq j\leq p} \sqrt{n} \ (U_{n,N,j}^{\prime} - f_j(\theta)) / \hat{\sigma}_j \leq t)$ for $t \in \mathbb{R}$ corresponds to the probabilities $P(\sqrt{n} \, \widehat{\Lambda}^{-1/2}(U_{n,N}^{\prime} - f(\theta)) \in R_t)$, where
$
    R_t = \{x \in \mathbb{R}^p: - \infty \leq x_j \leq t \textrm{ for all } j=1, \ldots, p\}
$
are $t$-dependent hyperrectangles. Hence, by Corollary~\ref{cor:studentization}, the maximum $\max_{1 \leq j \leq p} \sqrt{n} (U_{n,N,j}^{\prime} - f_j(\theta)) / \hat{\sigma}_j$ can be approximated by the maximum of the studentized Gaussian multiplier statistic $W \defeq \max_{1 \leq j \leq p} U_{n,n_1,j}^\# / \hat{\sigma}_j$. In particular, the quantiles of $W$ approximate the quantiles of  $\max_{1 \leq j\leq p} \sqrt{n} \ (U_{n,N,j}^{\prime} - f_j(\theta)) / \hat{\sigma}_j$ as we verify in the next corollary. For $\alpha \in (0,1)$ we denote by $c_W(1-\alpha)$ the conditional $(1-\alpha)$-quantile of $W$ given the data $\mathcal{D}_n$. 
\begin{cor} \label{cor:quantiles}
Assume the conditions in Corollary~\ref{cor:all-together}. Then there exists a constant $C>0$ depending only on $\beta, \zeta$ and $C_1$ such that
\[
    \sup_{\alpha \in (0,1)}\left|P\left(\max_{1 \leq j\leq p} \sqrt{n} \ (U_{n,N,j}^{\prime} - f_j(\theta)) / \hat{\sigma}_j > c_W(1-\alpha) \right) - \alpha\right| \leq C n^{-\zeta/7}.
\]
\end{cor}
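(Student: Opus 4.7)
The plan is to reduce the quantile statement to the rectangle-uniformity bound already proved in Corollary \ref{cor:studentization}, and then translate this into a quantile comparison. Set $T = \max_{1 \leq j \leq p} \sqrt{n}(U_{n,N,j}^{\prime} - \mu_j)/\hat{\sigma}_j$ and $W = \max_{1 \leq j \leq p} U_{n,n_1,j}^\#/\hat{\sigma}_j$. For every $t \in \mathbb{R}$ the event $\{T \leq t\}$ coincides with $\{\sqrt{n}\,\widehat{\Lambda}^{-1/2}(U_{n,N}^{\prime} - \mu) \in H_t\}$ where $H_t = \prod_{j=1}^p(-\infty,t] \in \mathbb{R}^p_{\mathrm{re}}$, and the analogous identity holds for $W$. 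Applying Corollary \ref{cor:studentization} to the subfamily $\{H_t : t \in \mathbb{R}\}$ produces an event $\mathcal{E}$ with $P(\mathcal{E}) \geq 1 - C/n$ on which
\[
\sup_{t \in \mathbb{R}} \bigl| F(t) - \hat F(t) \bigr| \leq \delta, \qquad \delta := C n^{-\zeta/7},
\]
where $F(t) := P(T \leq t)$ and $\hat F(t) := P|_{\mathcal{D}_n}(W \leq t)$.

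Next I would introduce the population quantile $c_0(\tau) := \inf\{t : F(t) \geq \tau\}$ and run the standard quantile-inversion argument. On $\mathcal{E}$, the uniform proximity of $F$ and $\hat F$ yields $c_0(1-\alpha-\delta) \leq c_W(1-\alpha) \leq c_0(1-\alpha+\delta)$ simultaneously for all $\alpha \in (0,1)$. The upper tail bound is immediate from the inclusion $\{T > c_W(1-\alpha)\}\cap \mathcal{E} \subseteq \{T > c_0(1-\alpha-\delta)\}$, together with $F(c_0(\tau)) \geq \tau$:
\[
P(T > c_W(1-\alpha)) \leq 1 - F(c_0(1-\alpha-\delta)) + P(\mathcal{E}^c) \leq \alpha + \delta + C/n.
\]

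The more delicate step is the matching lower bound, because $F$ need not be continuous and $1 - F(c_0(1-\alpha+\delta))$ could in principle exceed $\alpha - \delta$ by the size of a jump of $F$. To rule this out, I would combine Theorem \ref{thm:incompleteUstat} and Proposition \ref{prop:incompleteUstat}, together with the same studentization argument that yields Corollary \ref{cor:studentization}, to obtain $\sup_t |F(t) - F_Z(t)| = O(n^{-\zeta/7})$ with $F_Z$ the distribution function of $Z = \max_{1 \leq j \leq p} Y_j/\sigma_j$ where $Y \sim N_p(0, m^2 \Gamma_g + \alpha_n \Gamma_h)$. Since the coordinates $Y_j/\sigma_j$ are unit-variance Gaussians, Nazarov's anti-concentration inequality guarantees that $F_Z$ is continuous (indeed Lipschitz up to a $\sqrt{\log p}$ factor), so every jump of $F$ is bounded by $2\sup_t|F(t)-F_Z(t)| = O(n^{-\zeta/7})$. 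This gives $F(c_0(1-\alpha+\delta)) \leq 1-\alpha+\delta + O(n^{-\zeta/7})$, and the inclusion $\{T > c_0(1-\alpha+\delta)\}\cap \mathcal{E} \subseteq \{T > c_W(1-\alpha)\}$ then yields $P(T > c_W(1-\alpha)) \geq \alpha - \delta - C/n - O(n^{-\zeta/7})$.

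Absorbing all constants into a single $C$ depending on $\beta,\zeta,C_1$, the two-sided estimates combine uniformly in $\alpha \in (0,1)$ to produce the claimed bound. I expect the main obstacle to be the lower-tail step: Corollary \ref{cor:studentization} by itself only controls $F$ through $\hat F$, and the possibility that $F$ has a jump at the target quantile must be excluded by invoking the Gaussian approximation explicitly together with Nazarov-type anti-concentration. Everything else is a bookkeeping exercise that turns the rectangle-uniform bound into a one-sided quantile statement.
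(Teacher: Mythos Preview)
Your argument is correct and contains all the necessary ingredients, but it is organized differently from the paper's proof and ends up doing a little more work. The paper splits the approximation behind Corollary~\ref{cor:studentization} into its two constituents from the outset: an unconditional bound $\sup_t|P(T\le t)-P(Y_0\le t)|\le Cn^{-\zeta/7}$ where $Y_0=\max_j Y_j/\sigma_j$ with $Y\sim N_p(0,m^2\Gamma_g+\alpha_n\Gamma_h)$, and a high-probability bound $\sup_t|P(Y_0\le t)-P|_{\mathcal D_n}(W\le t)|\le Cn^{-\zeta/7}$. It then sandwiches $c_W(1-\alpha)$ directly between quantiles of the \emph{continuous} distribution $Y_0$, namely $c_{Y_0}(1-\alpha\pm Cn^{-\zeta/7})$, so the jump issue never arises and the lower-tail bound is as immediate as the upper one.

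Your route instead sandwiches $c_W$ by the quantiles $c_0$ of $T$ itself, which forces you to control the jumps of $F$ separately by going back to the Gaussian approximation and anti-concentration. This is perfectly valid, but you are effectively re-deriving the first half of the paper's split after having already consumed it inside Corollary~\ref{cor:studentization}. The paper's organization avoids this redundancy and makes the role of the Gaussian pivot $Y_0$ explicit; in return, your presentation has the minor virtue of isolating exactly where and why continuity is needed (only for the lower tail).
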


Corollary~\ref{cor:quantiles} says that, under mild regularity conditions involving mixed degeneracy, using $c_W(1-\alpha)$ as a critical value gives an asymptotically valid test, that is, it asymptotically controls type I error for the significance level $\alpha$. 
When the null hypothesis is only defined by equality constraints $f_j(\theta)=0$, then we have equality in ~\eqref{eq:critical-value} and  Corollary~\ref{cor:quantiles} implies that our test has asymptotic type I error exactly equal to the chosen level $\alpha$. When we have $f_j(\theta)<0$ for certain indices $j=1,\ldots, p$, then the  type I error is smaller or equal to the chosen level. 
As discussed in Remark~\ref{rem:irregular-setups}, mixed degeneracy also accommodates irregular settings. The computational budget parameter has to be chosen appropriately such that the conditions in Corollary~\ref{cor:all-together} are satisfied, typically $N=\mathcal{O}(n)$ is a good choice as we elaborate in Remark~\ref{rem:comp-budget}.  
In practice, we use the empirical version of $c_{W}(1-\alpha)$ as a critical value. It is obtained in the following procedure:
\vspace{0.25cm}
\begin{itemize}\setlength{\itemsep}{0.25cm} 
    \item[(i)] Generate many, say $A=1000$, sets of standard normal random variables $\{\xi_{\iota}^{\prime}: \iota \in I_{n,m}\} \cup \{\xi_{i_1}: i_1 \in S_1\}$.
    \item[(ii)] Evaluate $W$ for each of these $A$ sets.
    \item[(iii)] Take $\hat{c}_{W}(1-\alpha)$ to be the $(1-\alpha)$-quantile of the resulting $A$ numbers.
    \item[(iv)] Reject $H_0$ if $\mathcal{T} > \hat{c}_{W}(1-\alpha)$.
\end{itemize}
\vspace{0.25cm}

\begin{rem}[Practical considerations and complexity] \label{rem:complexity}
    There is no need to generate $|I_{n,m}| \approx n^m$ Bernoulli random variables to construct the incomplete $U$-statistic $U_{n,N}^{\prime}$, nor to generate the same amount of standard normal random variables $\xi_{\iota}^{\prime}$. As explained in \citet[Section 2.1]{chen2019randomized}, one can equivalently generate $\hat{N} \sim \textrm{Bin}(|I_{n,m}|, \rho_n)$ once and then choose indices $\iota_1, \ldots, \iota_{\widehat{N}}$ without replacement from $I_{n,m}$. Then we can compute the incomplete $U$-statistic as
    \[
        U_{n,N}^{\prime} = \frac{1}{\hat{N}} \sum_{i=1}^{\hat{N}} h(X_{\iota_i}).
    \]
    Similarly, one only needs to generate $\widehat{N}$ versions of $\xi_{\iota}^{\prime}$ to construct $U_{n,h}^\#$. As another practical guidance, we suggest to compute $n_1=|S_1|=n$ divide-and-conquer estimators. Choosing $n_1$ smaller only decreases the accuracy of the bootstrap, which is only useful if the computational cost is otherwise too high.
    With $n_1=n$, the overall computational cost of the proposed procedure is $\mathcal{O}(n^2p+ A(N+n)p)$  as the estimation step for $g$ can be done outside the bootstrap.
\end{rem}

\begin{rem}[Computational budget parameter $N$] \label{rem:comp-budget}
    
    If one is certain that the underlying parameter $\theta$ is regular and \emph{all} random variables $g_j(X_1)$, $j=1, \ldots, p$ are asymptotically non-degenerate, then 
    it may be appealing to choose the computational budget $0 < N < \binom{n}{m}$ arbitrarily since the bootstrap approximation still holds as shown in \citet[Theorem 3.1]{chen2019randomized} and further refined in \citet[Theorem 2.4]{song2019approximating}. However, the rate of convergence depends on $\ubar{\sigma}_{g, \theta}^2 = \min_{1 \leq j \leq p} \sigma_{g,\theta, j}^2$ as outlined in the introduction. Thus, if $\theta$ is close to an irregular point,  where $\ubar{\sigma}_{g, \theta}^2$ is small, the rate of convergence can be very slow. In contrast, a lower computational budget parameter $N$ may imply convergence  with a reasonable rate, even when the parameter $\theta$ is close to irregular. If mixed degeneracy is satisfied, the rate of convergence only depends on $\ubar{\sigma}_{g^{(1)}}^2$, which is a fixed constant large enough to achieve fast convergence; recall Remark~\ref{rem:irregular-setups} and Corollary \ref{cor:Gaussian-case}. \looseness=-1
    
    On the other hand, consider an underlying parameter $\theta$ such that \emph{all} variances $\sigma_{g,\theta, j}^2$ of the H\'ajek projection are zero, which in particular means that $\theta$ is an irregular point. Then, the incomplete $U$-statistic if fully degenerate and a bootstrap approximation also holds true for larger $N$ \citep[Theorem 3.3]{chen2019randomized}.  
    However, the limiting Gaussian distribution becomes $N_p(0, \Gamma_h)$ under a suitable scaling of the incomplete $U$-statistic. 
    Choosing the computational budget $N$ lower allows for computing the bootstrap based on the Gaussian approximation $N_p(0, m^2 \Gamma_g + \alpha_n \Gamma_h)$, regardless of any potential degeneracy. In particular, we can compute valid critical values without knowing a priori whether or not a point is irregular. Moreover, the degenerate approximation to $N_p(0, \Gamma_h)$ only holds if  \emph{all} variances $\sigma_{g,\theta, j}^2$ are zero and not for more general irregular points where only some of them are zero.
    
    To summarize, our approximation results of Section~\ref{sec:incomplete-U-stat} are very useful if it is unknown whether or not the true point is irregular or close to irregular. Mixed degeneracy allows for such setups when choosing the computational budget $N$ appropriately.
    In particular, we recommend to choose the computational budget parameter $N$ of the same order as the sample size $n$ to guard against irregularities. However, this still allows some flexibility in practice. In Section~\ref{sec:tree-models} we compare test size and power for different choices of $N$ in numerical experiments.
\end{rem}

\begin{rem}[High dimensionality]
    We emphasize that bootstrap approximation holds in settings where the number of polynomials $p$ may be much larger than the sample size $n$, i.e., $p$ may be as large as $\exp(n^c)$ for a constant $c > 0$. Thus, we can test a very large number of polynomial restrictions simultaneously. Moreover, we do not require any restriction on the correlation structure among the polynomials.
\end{rem}

\subsection{Power} \label{sec:power}
The following result pertains to the power of the proposed test.

\begin{prop} \label{prop:power}
Assume the conditions in Corollary~\ref{cor:all-together}, and let $\alpha \in (0,1/2)$. Then there exists a constant $C>0$ depending only on $\beta, \zeta$ and $C_1$ such that for every $\epsilon > 0$, whenever
\begin{equation} \label{eq:power-assumption}
    \max_{1 \leq j \leq p} (f_j(\theta)/\sigma_j) \geq (1+\varepsilon)\left(1 + C n^{-3\zeta/7} \log(p)^{-2} \right) \frac{\sqrt{2 \log(p)} + \sqrt{2 \log(1/\alpha)}}{\sqrt{n}},
\end{equation}
we have
\[
    P(\mathcal{T} > c_W(1-\alpha)) \geq 1 - \exp\left(- \frac{1}{C} \, \varepsilon^2 \log(p/\alpha)\right) - C n^{-\zeta/7}.
\]
\end{prop}
Proposition~\ref{prop:power} shows that our test is consistent against all alternatives where at least one constraint, normalized by its approximate variance, is violated by a small amount. In other words, the test is consistent against all alternatives  excluding the ones in a small neighborhood of the null set $\Theta_0$. The size of the neighborhood shrinks with rate $\sqrt{\log(p)/n}$ as long as $p \rightarrow \infty$ as $n \rightarrow \infty$. A similar result on power is derived in \citet{chernozhukov2019inference} for the special case on independent sums, and we refer to their discussion on the rate the neighborhood converges to zero.

\section{Polynomial Hypotheses} \label{sec:polynomial-hypotheses}

In this section, we assume that the constraints defining the null hypothesis $\Theta_0$ in~\eqref{eq:nullhypothesis} are polynomial. That is, each constraint $f_j \in \mathbb{R}[\theta_1, \ldots, \theta_d]$ is a polynomial in the indeterminates $\theta_1, \ldots, \theta_d$. We propose a general procedure to find a kernel $h$ such that $h_j(X_1^m)$ is an unbiased estimator of $f_j(\theta)$. For now, let $f \in \mathbb{R}[\theta_1, \ldots, \theta_d]$ be a \emph{single} polynomial of total degree $s$ and write
\[
    f(\theta) = a_0 + \sum_{r=1}^s \sum\limits_{\substack{(i_1, \ldots, i_r) \\i_{l} \in \{1, \ldots, d\}}} a_{(i_1, \ldots, i_r)} \theta_{i_1} \cdots \theta_{i_r}
\]
with $a_{(i_1, \ldots, i_r)} \in \mathbb{R}$ for all multi-indices $(i_1, \ldots, i_r)$. Note that this notation of multivariate polynomials is somewhat inefficient in comparison to the usual multi-index notation since there may appear various indices repeatedly in $(i_1, \ldots, i_r)$. However, the representation is useful to define an estimator of the polynomial. We construct $h(X_1^m)$ by the following three steps:
\begin{itemize}\setlength{\itemsep}{0.25cm} 
    \item[1)] For a fixed integer $\eta \geq 1$, find $\mathbb{R}$-valued functions $\widehat{\theta}_i$ such that $\widehat{\theta}_i(X_1^{\eta})$ is an unbiased estimator of $\theta_i$ for all $i=1, \ldots, d$. 
    \item[2)] Let $m =  \eta s$ be the order of the kernel, and define the $\mathbb{R}$-valued function $\breve{h}$ via
    \[
    x_1^m \,\, \mapsto \,\, a_0 + \sum_{r=1}^s \sum\limits_{\substack{(i_1, \ldots, i_r) \\i_{l} \in \{1, \ldots, d\}}} a_{(i_1, \ldots, i_r)} \widehat{\theta}_{i_1}(x_1^{\eta}) \widehat{\theta}_{i_2}(x_{\eta+1}^{2\eta}) \cdots \widehat{\theta}_{i_r}(x_{(r-1)\eta+1}^{r\eta}),
    \]
    where $x_k^l=(x_k, \ldots, x_l)$ for $k < l$. Note that $ \breve{h}(X_1^m)$ is an unbiased estimator of $f(\theta)$ by the linearity of the expectation and the independence of the samples $X_1, \ldots, X_m$.
    \item[3)] To get a symmetric kernel $h$, average over all permutations $\pi \in S_m$ of the set  $\{1, \ldots, m\}$, that is,
    \[
        h(x_1, \ldots, x_m) = \frac{1}{m!} \sum_{\pi \in S_m} \breve{h}(x_{\pi(1)}, \ldots, x_{\pi(m)}).
    \]
\end{itemize}
\vspace{0.25cm}

The construction works for any polynomial as long as we can construct an unbiased estimator of the parameter $\theta$ that only involves a small number $\eta$ of samples.  In this case polynomial hypotheses are estimable and our proposed testing methodology is applicable.

\begin{exmp} \label{ex:forming-tetrad-kernel}
The kernel for estimating the tetrad $f(\Sigma) = \sigma_{uv} \sigma_{wz} - \sigma_{uz} \sigma_{vw}$ in Example~\ref{exmp:tetrad-kernel} is also constructed by steps 1) to 3). The total degree of $f$ is $s=2$ and to estimate an entry $\sigma_{uv}$ in the covariance matrix $\Sigma$ we only need $\eta=1$ sample, i.e., an unbiased estimator of $\sigma_{uv}$ is given by $\widehat{\sigma}_{uv}(X_1) = X_{1u} X_{1v}$. Thus, the order of the kernel is $m=2$ and we obtain
\[
    \breve{h}(X_1, X_2) = X_{1u}X_{1v}X_{2w}X_{2z}-X_{1u}X_{1z}X_{2v}X_{2w}.
\]
The symmetric kernel is given by 
\begin{align*}
    h(X_1,X_2)= \frac{1}{2}\{&(X_{1u}X_{1v}X_{2w}X_{2z}-X_{1u}X_{1z}X_{2v}X_{2w}) \\
    &+ (X_{2u}X_{2v}X_{1w}X_{1z}-X_{2u}X_{2z}X_{1v}X_{1w})\}.
\end{align*}
\end{exmp}

The bootstrap approximation established in Section~\ref{sec:incomplete-U-stat} requires that the individual estimators $h_j(X_1^m)$ are sub-Weibull of order $\beta > 0$ for all $j=1,\ldots,p$; recall Condition~\ref{C2}. If one is able to check that  all estimators $\widehat{\theta}_i(X_1^{\eta})$, $i=1, \ldots, d$ are sub-Weibull of order $\gamma$, then we obtain by Lemma~\ref{lem:bound-polynom}, which we state in the Appendix, that the estimator $h_j(X_1, \ldots, X_m)$ is sub-Weibull of order $\beta = \gamma/s$, where $s$ is the total degree of the polynomial $f_j(\theta)$. For estimating tetrads, this is illustrated in the next example.

\begin{exmp} \label{ex:tetrad-sub-weibull}
For a Gaussian random vector $X_i \sim N_l(0, \Sigma)$, it is easy to check that each component $X_{iu}$, $u=1, \ldots, l$ is sub-Gaussian with $\|X_{iu}\|_{\psi_{2}} \leq \sqrt{8 \sigma_{uu} / 3}$. Therefore $\widehat{\sigma}_{uv}(X_1) = X_{1u} X_{1v}$ is sub-Exponential by Lemma~\ref{lem:product-weibull} and hence the kernel $h(X_1,X_2)$ for estimating a tetrad in Example~\ref{ex:forming-tetrad-kernel} is sub-Weibull of order $\beta = 1/2$.
\end{exmp}

Recall our testing problem~\ref{eq:nullhypothesis} where the null hypothesis $\Theta_0 \subseteq \Theta$ is defined by polynomial constraints. Suppose we have a kernel $h$ such that every point $\theta \in \Theta$ is regular with respect to $h$ and $\ubar{\sigma}_{g,\theta} = \min_{1 \leq j \leq p} \sigma_{g,\theta,j}^2$ is not too small, that is, every point is far away from irregular. Then it is favorable to consider critical values based on the Gaussian approximation for non-degenerate incomplete $U$-statistics as discussed in Remark~\ref{rem:comp-budget}. On the other hand, if all points are irregular with \emph{every} individual H\'ajek projection being degenerate, then one should consider critical values based on the limiting distribution of degenerate incomplete $U$-statistics. For the proposed kernel, we will show in our next result, that  parameter spaces typically contain a measure zero subset of irregular points.

\begin{prop} \label{prop:criterion-regular-points}

Let $f \in \mathbb{R}[\theta_1, \ldots, \theta_d]$ be a polynomial of total degree $s \geq 2$ and define the set of indices
$D(f) = \{j \in [d]: \theta_j$ appears in $f\}$. Further, define 
$
    \hat{\theta}_{i,l}(x)=\mathbb{E}[\hat{\theta}_{i}(X_1, \ldots, X_{l-1}, x, X_{l+1}, \ldots, X_{\eta})]
$ 
for all $i=1, \ldots, d$ and $l=1, \ldots, \eta$ and denote the random vector
\[
    \hat{\theta}_f(X_1) = \left(\sum_{l=1}^{\eta}\hat{\theta}_{i,l}(X_1)\right)_{i \in D(f)}.
\] 
If the covariance matrix $\Cov[\hat{\theta}_f(X_1)]$ is positive definite, then the irregular points $\theta \in \Theta \subseteq \mathbb{R}^d$ with respect to the kernel $h$
form a measure zero set with respect to the Lebesgue measure on $\mathbb{R}^d$. \looseness=-1
\end{prop}

Proposition~\ref{prop:criterion-regular-points} is a sufficient condition for checking that almost all points in the parameter space are regular. By inspecting the proof, we see that the set of irregular points is the zero set of a certain non-zero polynomial and thus has Lebesgue measure zero. However, it will typically be nonempty. As we emphasized in the introduction, the Gaussian approximation of complete $U$-statistics, or incomplete $U$-statistics with high computational budget, may already be very slow if the true parameter is close to an irregular point. Since the null hypothesis $\Theta_0$ can be defined by a very large amount of constraints $p$, the sets of irregular points is the union of irregular points given by $p$ sets with Lebesgue measure zero. In other words, there might be many hypersurfaces with irregular points, such that it is likely that the true parameter is close to irregular point.  Thus, assuming mixed degeneracy is crucial for considering kernels as constructed above. 

\begin{exmp}
Consider again a tetrad $f(\Sigma) = \sigma_{uv} \sigma_{wz} - \sigma_{uz} \sigma_{vw}$ and the corresponding kernel $h$ given in Example~\ref{ex:tetrad-sub-weibull}. Since $\hat{\sigma}_{uv}(X_1) = X_{1u}X_{1v}$, we have that $\hat{\theta}_f(X_1) = (X_{1u}X_{1v}, X_{1w}X_{1z}, X_{1u}X_{1z}, X_{1v}X_{1w})^{\top}$. For Gaussian $X_1 \sim N_l(0,\Sigma)$, it is easy to see that $\Cov[\hat{\theta}_f(X_1)]$ is positive definite since no coordinate of  $\hat{\theta}_f(X_1)$ is a linear function of the remaining coordinates. It follows by Proposition~\ref{prop:criterion-regular-points} that almost all covariance matrices $\Sigma$ are regular points in the parameter space given by the cone of positive definite matrices. Covariance matrices with some  entries equal to zero correspond to irregular points as we have seen in Example~\ref{ex:tetrad-degeneracy}. Thus, all covariance matrices with those entries having small absolute values are close to irregular points.
\end{exmp}

\section{Testing Gaussian Latent Tree Models}  \label{sec:tree-models}
In this section we apply our test for assessing the goodness-of-fit of Gaussian latent tree models. These models are of particular relevance in phylogenetics \citep{semple2003phylogenetics, zwiernik2016semialgebraic} and the problem of model selection is, for example, considered in \citet{shiers2016thecorrelation} and \citet{leung2018algebraic}; for a survey see \citet{sung2009algorithms} and \citet{junker2011analysis}. Our methodology is applicable since a full semialgebraic description of the model is known, that is, all polynomial equalities and inequalities that fully describe the distributions corresponding to a given tree are known. The constraints include tetrads as well as higher-order polynomial constraints. However, as we also point out in the introduction, it is challenging to test the large number of constraints simultaneously. For example, in \citet{shiers2016thecorrelation} only small trees and a subset of constraints is tested. Typical approaches such as the Wald test can only handle $p \leq n$ constraints and the maximum likelihood function is difficult to optimize.  For the latter, we implemented an expectation-maximization algorithm, but it is unclear whether it obtains the global maximum; see Section \ref{sec:simulations}. Moreover, \citet{drton2009likelihood}, \citet{dufour2013wald}, and \citet{drton2016wald} show that irregular points lead to different limiting distributions in Wald and likelihood ratio tests when testing tetrads. Hence, our testing strategy is an approach that targets both challenges in testing Gaussian latent tree models: a large number of constraints and irregular points in the null hypothesis.

\subsection{Model and Constraints} 
We begin by briefly introducing Gaussian latent tree models. Let $T=(V,E)$ be an undirected tree where $V$ is the set of nodes and $E$ is the set of edges. For $L  \subseteq V$ denoting the set of leaves, we assume that every inner node in $V \setminus L$ has minimal degree $3$. By \citet[Chapter 8]{zwiernik2016semialgebraic} we have the following parametric representation of Gaussian latent tree models. Let $PD(l)$ be the cone of symmetric positive definite $l \times l$ matrices. \looseness=-1

\begin{prop} \label{prop:GLTM}
The Gaussian latent tree model of a tree $T=(V,E)$ with leaves $L = \{1, \ldots, l\} \subseteq V$ is the family of Gaussian distributions $N_l(0, \Sigma)$ 
such that $\Sigma$ is in the set
$$
\mathcal{M}(T) = \Big\{ \Sigma = (\sigma_{uv}) \in PD(l) : \sigma_{uv} = \sqrt{\omega_u} \sqrt{\omega_v} \!\! \prod_{e \in ph_T(u,v)} \!\!\! \rho_e
\ \textrm{ for  } v \neq u, \ \sigma_{vv} = \omega_v \Big\},
$$
where $ph_T(u,v)$ denotes the set of \emph{edges} on the unique path from $u$ to $v$, the vector $(\omega_v)_{v \in L} \in \mathbb{R}^l_{>0}$ contains the variances and $(\rho_e)_{e \in E}$ is an $|E|$-dimensional vector with $|\rho_e| \in (0,1)$.
\end{prop}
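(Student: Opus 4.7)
The plan is to realize every element of $\mathcal{M}(T)$ as the leaf-marginal covariance of a recursively defined Gaussian model on the full tree, then compute that covariance in closed form. For the inclusion that every Gaussian latent tree covariance has the advertised product form, I would root $T$ at an arbitrary node $r$, orient all edges away from $r$, and introduce jointly Gaussian variables $(X_v)_{v \in V}$ by the structural equations $X_r \sim N(0, \tau_r^2)$ and, for each non-root $v$ with parent edge $e$, $X_v = \lambda_e X_{\pa(v)} + \varepsilon_v$ with $\varepsilon_v \sim N(0,\tau_v^2)$ independent across $v$. Setting $\omega_v \defeq \Var(X_v)$ and $\rho_e \defeq \corr(X_{\pa(v)}, X_v) = \lambda_e\sqrt{\omega_{\pa(v)}/\omega_v}$, the requirement $\tau_v^2 > 0$ becomes equivalent to $|\rho_e| \in (0,1)$.

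Next, I would show by induction on the length of the unique tree path $ph_T(u,v)$, using the Markov structure inherited from the recursion, that for any two nodes
\[
\Cov(X_u, X_v) \;=\; \sqrt{\omega_u}\sqrt{\omega_v}\!\!\prod_{e \in ph_T(u,v)}\!\!\rho_e.
\]
Specializing to $u,v \in L$ recovers the defining relation of $\mathcal{M}(T)$, together with $\sigma_{vv} = \omega_v$. For the reverse inclusion, given any $(\omega_v)_{v \in L} \in \mathbb{R}^l_{>0}$ and edge correlations $(\rho_e)_{e \in E}$ with $|\rho_e| \in (0,1)$, I would assign arbitrary positive variances at inner nodes and then solve the recursion backwards for the structural coefficients $\lambda_e$ and the residual variances $\tau_v^2 = \omega_v(1-\rho_e^2)$; this produces a bona fide Gaussian latent tree model whose leaf marginal has the prescribed covariance. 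Positive definiteness of $\Sigma$ on $L$ is automatic since it is the marginal of a non-degenerate joint Gaussian on $\mathbb{R}^V$.

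The main obstacle is the non-identifiability of this parametrization: flipping the sign of a latent variable $X_w$ simultaneously flips $\rho_e$ on every edge incident to $w$, leaving every leaf-pair product $\prod_{e \in ph_T(u,v)} \rho_e$ invariant. Consequently, distinct choices of root and sign conventions yield the same element of $\mathcal{M}(T)$, so the above construction is only surjective onto $\mathcal{M}(T)$ rather than bijective. Managing this bookkeeping carefully, together with invoking the assumption that every inner node has degree at least $3$ to rule out degenerate reparametrizations, is what pins the set-level equality down precisely; this is the content of \citet[Chapter 8]{zwiernik2016semialgebraic}.
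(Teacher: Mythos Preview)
The paper does not prove this proposition at all: it is stated as a known result and attributed to \citet[Chapter 8]{zwiernik2016semialgebraic}, with the surrounding text saying only ``By \citet[Chapter 8]{zwiernik2016semialgebraic} we have the following parametric representation.'' Your sketch---rooting the tree, building the joint Gaussian via a linear structural equation model, and computing pairwise covariances by induction along paths---is the standard derivation one finds in that reference, so in substance you are reconstructing the cited argument rather than offering an alternative. Since you yourself defer the final identifiability bookkeeping to the same citation, your proposal and the paper's treatment coincide: both ultimately rest on Zwiernik's Chapter~8.
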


For details on how this parametrization arises from the paradigm of graphical modeling, see \citet{zwiernik2016semialgebraic} or \citet{drton2017marginal}. We may identify a Gaussian latent tree model with its set of covariance matrices $\mathcal{M}(T)$ and we will simply refer to $\mathcal{M}(T)$ as the model.

From a geometric point of view, the parameter space, i.e., the set of covariance matrices $\mathcal{M}(T)$, is fully understood. It is a semialgebraic set described by polynomial equalities and inequalities in the entries of the covariance matrix. To state the polynomial constraints we need the following notation borrowed from \citet{leung2018algebraic}. Recall that in a tree any two nodes are connected by precisely one path. For four distinct leaves $\{u,v,w,z\} \subseteq L$, there are three possible partitions into two subsets of equal size, namely $\{u,v\}|\{w,z\}$, $\{u,w\}|\{v,z\}$ and $\{u,z\}|\{v,w\}$. These three partitions correspond to the three intersection of path pairs 
\begin{equation} \label{eq:path-pairs}
ph_T(u,v) \cap ph_T(w,z), \,\, ph_T(u,w) \cap ph_T(v,z) \, \textrm{ and } \, ph_T(u,z) \cap ph_T(v,w).
\end{equation}
By the structure of a tree, either all of the intersections give an empty set or exactly one is empty while the others are not; cf. \cite{leung2018algebraic}. We let $\mathcal{Q} \subseteq \{\{u,v,w,z\} \subseteq L\}$ be the collections of subsets of size four such that $\{u,v,w,z\} \in \mathcal{Q}$ if exactly one of the intersections is empty. Given $\{u,v,w,z\} \in \mathcal{Q}$ we write $\{u,v\}|\{w,z\} \in \mathcal{Q}$ to indicate that $\{u,v,w,z\}$ belongs to $\mathcal{Q}$ and the paths $ph_T(u,v)$ and $ph_T(w,z)$ have empty intersection. 

\begin{prop}[\citeauthor{leung2018algebraic}, \citeyear{leung2018algebraic}] \label{prop:constraints}
Let $\Sigma=(\sigma_{uv}) \in PD(l)$ be a covariance matrix with no zero entries. Then $\Sigma \in \mathcal{M}(T)$ if and only if $\Sigma$ satisfies the following constraints:
\begin{itemize}
    \item[(1)] \textit{Inequality constraints:}
    \begin{itemize}
        \item[(a)] For any $\{u,v,w\} \subseteq L$, 
        $-\sigma_{uv}\sigma_{uw}\sigma_{vw} \leq 0$.
        \item[(b)] For any $\{u,v,w\} \subseteq L$, 
        $\sigma^{2}_{uv}\sigma^{2}_{vw} - \sigma^{2}_{vv}\sigma^{2}_{uw} \leq 0, \sigma^{2}_{uw}\sigma^{2}_{vw} - \sigma^{2}_{ww}\sigma^{2}_{uv} \leq 0 \\ \textrm{and } \sigma^{2}_{uv}\sigma^{2}_{uw} - \sigma^{2}_{uu}\sigma^{2}_{vw} \leq 0.$
        \item[(c)] For any $\{u,v\}|\{w,z\} \in \mathcal{Q}$, $\sigma^{2}_{uw}\sigma^{2}_{vz} - \sigma^{2}_{uv}\sigma^{2}_{wz} \leq 0.$
    \end{itemize}
    \item[(2)] \textit{Equality constraints (tetrads):}
    \begin{itemize}
        \item[(a)] For any $\{u,v\}|\{w,z\} \in \mathcal{Q}$, $\sigma_{uw}\sigma_{vz} - \sigma_{uz}\sigma_{vw} = 0.$
        \item[(b)]  For any $\{u,v,w,z\} \not\in \mathcal{Q}$, $\sigma_{uz}\sigma_{vw} - \sigma_{uw}\sigma_{vz} = \sigma_{uv}\sigma_{wz} - \sigma_{uw}\sigma_{vz} = 0.$
    \end{itemize}
\end{itemize}
\end{prop}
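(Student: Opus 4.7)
The plan is to prove the two directions separately. The forward direction is a direct substitution of the parametrization of Proposition \ref{prop:GLTM} into each constraint, exploiting the multiplicative structure of path products on a tree. The reverse direction is the substantive one: it asks to recover a tree and its edge weights from polynomial identities on the entries of $\Sigma$, and I would reduce it to the classical Buneman theorem on four-point conditions.

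For the ``only if'' direction, write $\sigma_{uv} = \sqrt{\omega_u \omega_v}\prod_{e \in ph_T(u,v)} \rho_e$. Any product $\sigma_{a_1 b_1}\cdots \sigma_{a_r b_r}$ then equals $\prod_{k=1}^r \sqrt{\omega_{a_k}\omega_{b_k}}$ times $\prod_e \rho_e^{c_e}$, where $c_e$ counts how often $e$ appears across the paths $ph_T(a_1,b_1),\ldots,ph_T(a_r,b_r)$. Two tetrad products $\sigma_{ab}\sigma_{cd}$ and $\sigma_{ac}\sigma_{bd}$ agree whenever the edge multisets of $ph_T(a,b)\cup ph_T(c,d)$ and $ph_T(a,c)\cup ph_T(b,d)$ coincide; this is exactly the configuration forced by $\{u,v\}|\{w,z\}\in \mathcal{Q}$ in (2a) and by the three-way intersection pattern of $\{u,v,w,z\}\notin \mathcal{Q}$ in (2b). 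The inequalities (1b) and (1c) compare pairs of such products whose edge multisets are related by inclusion, and since $\rho_e^2 \in (0,1)$, the product over the larger multiset is the smaller value, matching each claim after squaring. Finally (1a) follows since the Steiner tree of three leaves $\{u,v,w\}$ is a star with central node $q$; each edge of the star appears in exactly two of the three paths, so $\sigma_{uv}\sigma_{uw}\sigma_{vw} = \omega_u\omega_v\omega_w \prod_{e}\rho_e^2 \geq 0$.

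For the converse, set $\rho_{uv} = \sigma_{uv}/\sqrt{\sigma_{uu}\sigma_{vv}}$ and $d_{uv} = -\log|\rho_{uv}|$, which is well-defined and positive because $\Sigma$ is positive definite with no zero entries. The central step is to verify the four-point condition on $d$: for any four leaves $u,v,w,z$, two of the sums $d_{uv}+d_{wz}$, $d_{uw}+d_{vz}$, $d_{uz}+d_{vw}$ coincide and are at most the third. After exponentiating and clearing the common denominator $\sigma_{uu}\sigma_{vv}\sigma_{ww}\sigma_{zz}$, this becomes the statement that two of $\sigma_{uv}^2\sigma_{wz}^2$, $\sigma_{uw}^2\sigma_{vz}^2$, $\sigma_{uz}^2\sigma_{vw}^2$ coincide and are bounded above by the remaining one. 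A case analysis on whether $\{u,v,w,z\}\in\mathcal{Q}$ and on which partition supplies the empty intersection shows that this is exactly the content of the squared tetrad equalities (2a) and (2b) together with the strict domination (1c). Buneman's theorem then produces an unrooted weighted tree $T'$ on $L$ realising $d$, and matching which tetrads vanish pins down $T' = T$.

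To finish, set $\omega_v = \sigma_{vv}$ and $|\rho_e| = \exp(-\ell_e)$ with $\ell_e$ the edge lengths from Buneman's construction, so that $|\rho_{uv}| = \prod_{e\in ph_T(u,v)}|\rho_e|$. Signs are assigned by picking a reference leaf and propagating along the tree; inequality (1a) rules out odd parities of negative signs on any three-leaf Steiner tree, and since $T$ is acyclic the assignment extends globally without contradiction. Together with $|\rho_e|\in(0,1)$ from $\ell_e>0$ and $\omega_v>0$ from positive definiteness, this recovers the full parametrization of Proposition \ref{prop:GLTM}. The hard part is the converse, and within it the combinatorial bookkeeping that turns the listed algebraic constraints into the four-point condition: one must check that cases (2a), (2b), and (1c) precisely cover the disjunction ``two sums equal, third smaller'' for every quadruple of leaves, with no gap and no redundant tetrad violating positivity. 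The remaining ingredients---direct substitution, Buneman's theorem, and the sign recovery---are each routine once this translation is in hand.
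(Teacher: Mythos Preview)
The paper does not prove this proposition; it is quoted from \citet{shiers2016thecorrelation} and \citet{leung2018algebraic}, so there is no in-paper argument to compare against. Your outline is the standard route taken in those references: the forward direction by direct substitution of the path-product parametrization, and the converse by passing to the log-correlation distance $d_{uv}=-\log|\rho_{uv}|$, verifying the four-point condition from the tetrad equalities (2a), (2b) together with (1c), invoking Buneman's theorem to reconstruct the tree, and then recovering signs from (1a) and the magnitude bounds from (1b). That is correct in spirit and matches the cited literature.

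Two places where your sketch is thinner than it should be. First, Buneman's theorem yields \emph{some} tree $T'$ with nonnegative edge lengths; showing $T'=T$ requires that the quartet topologies agree, which in turn needs the inequalities in (1c) to be informative for every quadruple in $\mathcal{Q}$, and one must also check that the inner-degree-$\geq 3$ condition on $T$ is recovered (zero-length internal edges would collapse $T'$ to a contraction of $T$, which is outside $\mathcal{M}(T)$ since the parametrization demands $|\rho_e|\in(0,1)$). Second, you assert ``$|\rho_e|\in(0,1)$ from $\ell_e>0$'' but do not say where strict positivity of the internal edge lengths comes from; this is precisely the role of (1b) (for pendant edges) and strictness in (1c) (for internal edges), and deserves an explicit sentence. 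Neither of these is a fatal gap, but both are points where the cited papers do real work that your summary elides.
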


We want to test a Gaussian latent tree model against the saturated alternative, that is,
$$H_0 : \Sigma \in \mathcal{M}(T) \quad \textrm{ vs. } \quad H_1: \Sigma \in PD(l) \setminus \mathcal{M}(T)$$
based on i.i.d.~samples $X_1, \ldots, X_n$ taken from a Gaussian distribution $N_{l}(0, \Sigma)$.
With our methodology from Sections~\ref{sec:methodology} and~\ref{sec:polynomial-hypotheses}, we can simultaneously test all constraints defining $\mathcal{M}(T)$ given in Proposition~\ref{prop:constraints}. The number of constraints is $p=2\binom{l}{4}+4\binom{l}{3}$ which is of order $\mathcal{O}(l^4)$. Even for moderate values of $l$, say $l=15$, this is a large number of constraints since $p=4550$. 

\subsection{Simulations} \label{sec:simulations}
We have published an implementation of our tests for Gaussian latent tree models in the \texttt{R} package \texttt{TestGGM}, available at \url{https://github.com/NilsSturma/TestGGM}. In the implementation we use $A=1000$ sets of Gaussian multipliers to compute the critical value $\hat{c}_{W}(1-\alpha)$. The kernel for estimating the polynomials is constructed as demonstrated in Example~\ref{ex:forming-tetrad-kernel}, where we treated the special case of tetrads.   The package \texttt{TestGGM} also provides a routine for determining the specific constraints for a given latent-tree input, i.e., for determining the set $\mathcal{Q}$. For finding a path $ph_T(u,v)$ we use the function \texttt{shortest\_paths} in the  \texttt{igraph} package \citep{igraph}.  \looseness=-1

\begin{figure}[t]
\captionsetup[subfigure]{labelformat=empty}
\centering
\begin{subfigure}{.49\linewidth}
\centering
\begin{tikzpicture}[align=center, node distance=2cm]
\node[circle,draw=black, fill=lightgray,inner sep=3pt] (1) at (0,0) {};
\node[circle,fill,inner sep=3pt] (2) at (0,-0.9) {};
\node[circle,fill,inner sep=3pt]  (3) at (-0.7,-0.7) {};
\node[circle,fill,inner sep=3pt]  (4) at (-0.9,0) {};
\node[circle,fill,inner sep=3pt]  (5) at (-0.7,0.7) {};
\node[circle,fill,inner sep=3pt]  (6) at (0,0.9) {};
\node[circle,fill,inner sep=3pt]  (7) at (0.7,0.7) {};
\node[circle,fill,inner sep=3pt]  (8) at (0.9,0) {};
\node[circle,draw=black,dashed,inner sep=3pt]  (9) at (0.7,-0.7) {};
\draw (1) -- (2);
\draw (1) -- (3);
\draw (1) -- (4);
\draw (1) -- (5);
\draw (1) -- (6);
\draw (1) -- (7);
\draw (1) -- (8);
\draw[dashed] (1) -- (9);
\end{tikzpicture}
\caption{(i)}
\end{subfigure}
\hfill
\begin{subfigure}{.49\linewidth}
\centering
\vspace{0.7cm}
\begin{tikzpicture}[align=center, node distance=2cm]
\node[circle,draw=black, fill=lightgray,inner sep=3pt] (1) at (0,0) {};
\node[circle,draw=black, fill=lightgray,inner sep=3pt] (2) at (-0.7,0) {};
\node[circle,draw=black, fill=lightgray,inner sep=3pt] (3) at (0.7,0) {};
\node[circle,fill,inner sep=3pt] (4) at (-1.4,0) {};
\node[circle,fill,inner sep=3pt] (5) at (-0.7,-0.7) {};
\node[circle,fill,inner sep=3pt] (6) at (0,-0.7) {};
\node[circle,fill,inner sep=3pt] (7) at (0.7,-0.7) {};
\node[circle,fill,inner sep=3pt] (8) at (1.4,0) {};
\draw (1) -- (2);
\draw[dashed] (1) -- (3);
\draw (2) -- (4);
\draw (2) -- (5);
\draw (1) -- (6);
\draw (3) -- (7);
\draw (3) -- (8);
\end{tikzpicture}
\caption{(ii)}
\label{subfigure-2}
\end{subfigure}
\caption{Graphical representation of (i) the star tree and (ii) the binary caterpillar tree. Solid black dots correspond to leaves (observed variables).}
\label{fig:tree-structures}
\end{figure}

We compare our test with the likelihood ratio (LR) test. By \citet{shiers2016thecorrelation} the dimension of a Gaussian latent tree model is $|E| + |L|$. Thus, under regularity conditions, the likelihood ratio test statistic approximates a chi-square distribution with $\binom{l+1}{2} - (|E| + |L|)$ degrees of freedom for $l \geq 4$ \citep{van1998asymptotic}. For the one-factor model, the likelihood ratio test is implemented by the function \texttt{factanal} in the \texttt{base} library of \texttt{R} \citep{R}. Note that it employs a Bartlett correction for better asymptotic approximation. For general trees the implementation shows some of the challenges of the likelihood ratio test. Optimizing the likelihood function on the subspace $\mathcal{M}(T)$ is difficult using the common non-convex methods due to the large number of constraints defining the subspace and the required positive definiteness of the covariance matrices. Therefore we implement an expectation maximization algorithm which was first proposed by \citet{dempster1977maximum} and is described in \citet{friedman2002structural} and \citet{mourad2013survey} for the special case of latent tree models. A drawback of this method is that global optimization is not guaranteed. 

In our numerical experiments we consider two different tree structures. One is the star tree, where the model is equal to the one-factor analysis model. Second, we consider the binary caterpillar tree which is extreme in a sense that it has the longest paths possible between leaves. In Figure~\ref{fig:tree-structures} both structures are illustrated. We fix dimension $l=15$,  sample size $n=500$ and generate data from both tree structures in three experimental setups:

\begin{figure}[t]
\centering
\includegraphics[width=0.8\linewidth]{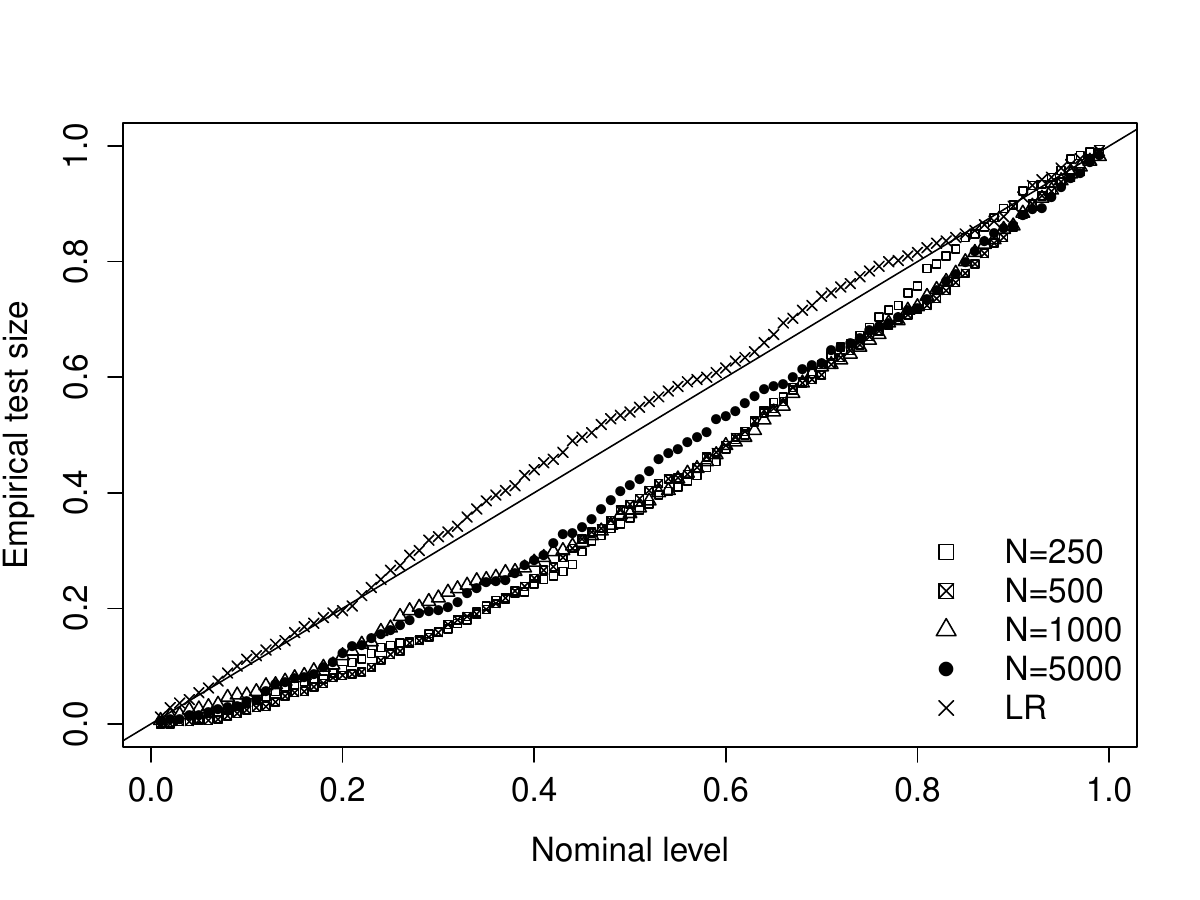}
\caption{Empirical sizes vs. nominal levels for testing tetrads based on $500$ experiments. The computational budget parameter $N$ is varied as indicated and empirical sizes of the LR test are also shown. Data is generated from setup (a) with $(l,n)=(15,500)$.}
\label{fig:sizes-regular}
\end{figure}

\begin{itemize}
    \item[(a)] Star tree: All edge parameters $\rho_e$ are equal to $\sqrt{0.5}$ while we take all variances $\omega_{v}=2$.
    \item[(b)] Star tree: Let $h$ be the unique, unobserved inner node. The parameters for the edges $\{h,1\}$ and $\{h,2\}$ are taken to be $0.998$ while all other edge parameters $\rho_e$  are independently generated from $N(0,0.1)$.
    The marginal variances are taken to be $\omega_v=100$ for $v=1,2$ and $\omega_v=1$ for all other leaves $v \in L$.
    \item[(c)] Caterpillar tree: The edge parameters $\rho_e$ are taken to be $0.998$ except for all edges incident to $4$ selected inner nodes in $V\setminus L$, where the edge parameters are independently generated from $N(0,0.1)$.
    All variances $\omega_{v}$ are taken to be $2$.
\end{itemize}

Setup (a) is a regular problem where the LR test is the gold standard. In contrast, setups (b) and (c) are designed to be irregular problems with small covariances; compare with Example~\ref{ex:tetrad-degeneracy} in the introduction. In setup (b) parameters are chosen such that the covariance matrix has exactly one off-diagonal entry which is far away from zero while all the remaining off-diagonal entries are close to zero. In this case the parameters are also close to an algebraic singularity of the parameter space of the star tree \citep[Proposition 32]{drton2007algebraic}. The LR test should fail since  the likelihood ratio test statistic does not follow a chi-square distribution at singularities \citep{drton2009likelihood}. Similarly, parameters in setup (c) are close to an algebraic singularity since some of the edge parameters are almost zero \citep{drton2017marginal} and thus we also expect the LR test to fail. Moreover, setup (c) is considered to emphasize that the proposed methodology allows for testing arbitrary Gaussian latent tree models.

\begin{figure}[p]
\captionsetup[subfigure]{labelformat=empty}
\centering
\begin{subfigure}{0.8\linewidth}
\centering
\includegraphics[width=\linewidth]{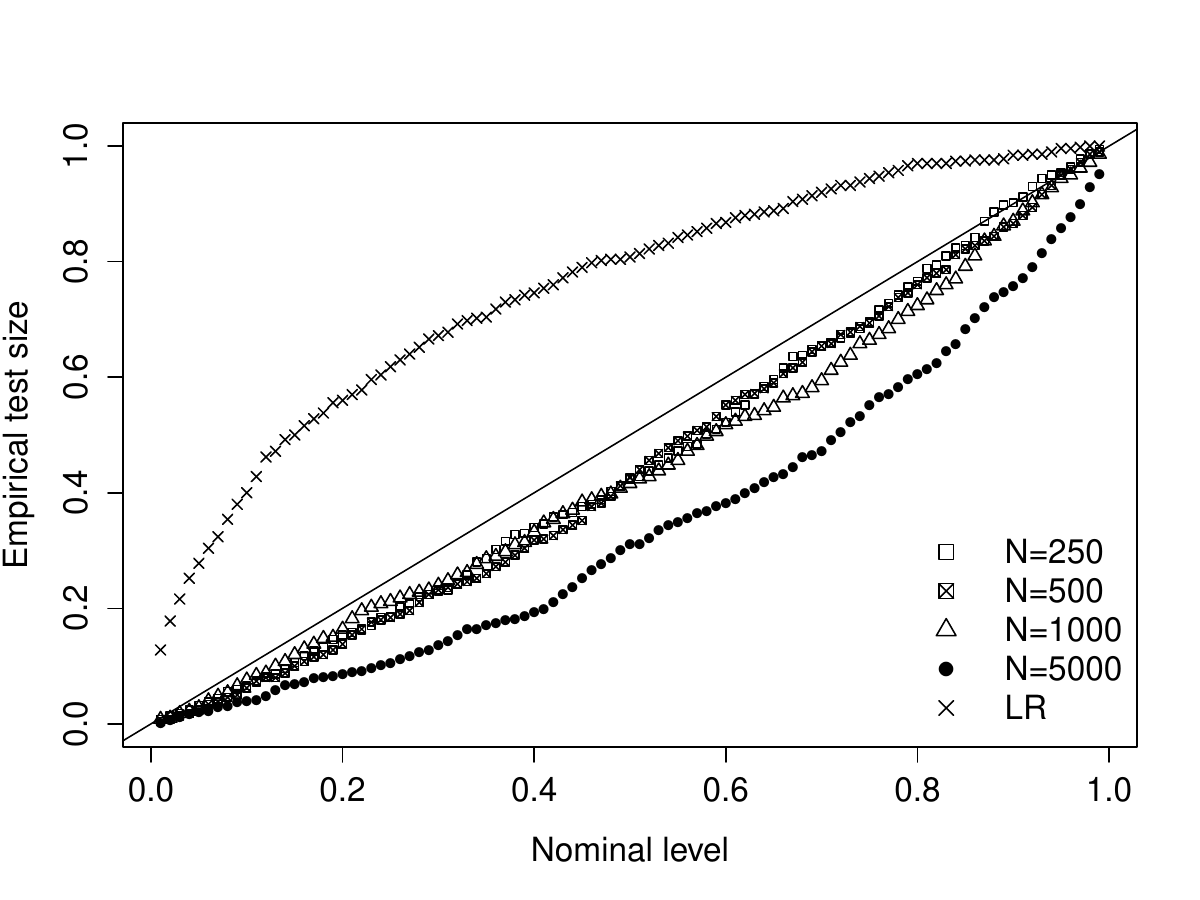}
\caption{Setup (b)}
\end{subfigure}
\hfill
\begin{subfigure}{0.8\linewidth}
\centering
\includegraphics[width=\linewidth]{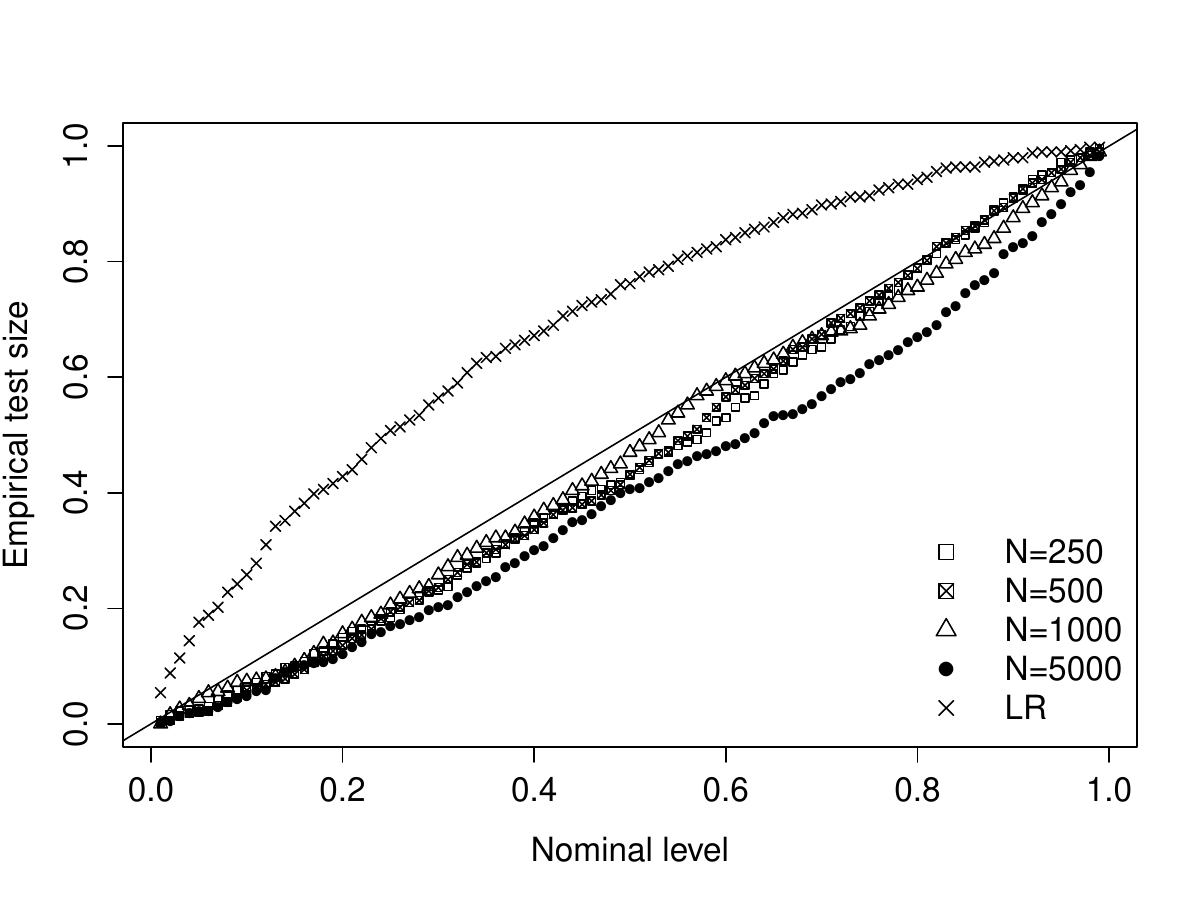}
\caption{Setup (c)}
\end{subfigure}
\caption{Empirical sizes vs. nominal levels for testing tetrads based on $500$ experiments. The computational budget parameter $N$ is varied as indicated and empirical sizes of the LR test are also shown. Data is generated from setups (b) and (c) with $(l,n)=(15,500)$. }
\label{fig:sizes-singular}
\end{figure}

In Figures~\ref{fig:sizes-regular} and~\ref{fig:sizes-singular} we compare empirical test sizes of our test with the LR test for different, fixed significance levels $\alpha \in (0,1)$ that we call the ``nominal level''. In addition, we consider different computational budget parameters $N$ to see its influence on the behaviour of our test. To better compare test sizes we only consider testing the equality constraints (tetrads) in Proposition~\ref{prop:constraints}. This simplifies our analysis since in this case we can expect that the test size is equal to the level $\alpha$. A statistical test that controls type I error is expected to have empirical test sizes below the 45 degree line in the plots. The lower the line, the more ``conservative'' the test is. On the other hand, a test does not control type I error if the empirical sizes are above the 45 degree line.  In this case the true null hypothesis is rejected too often.

\begin{figure}[t]
\centering
\includegraphics[width=0.8\linewidth]{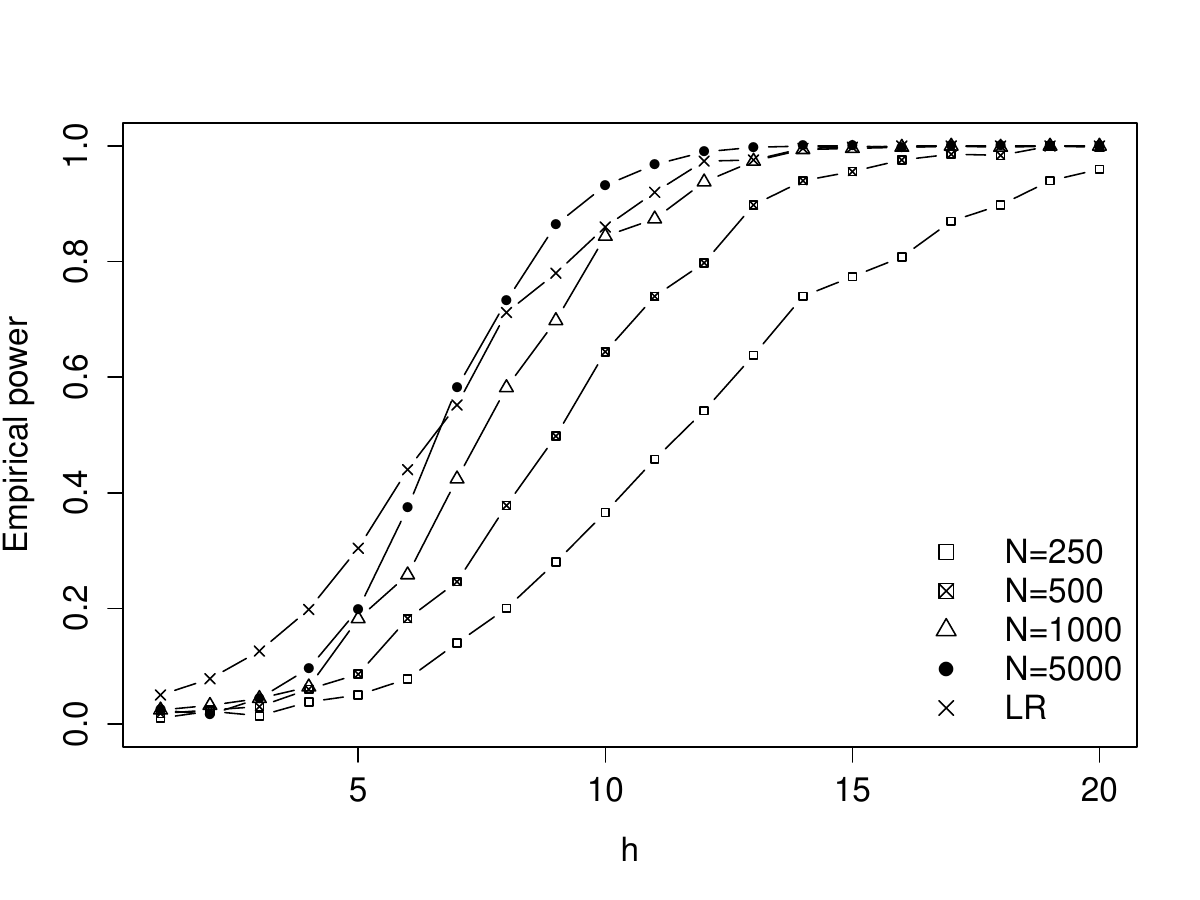}
\caption{Empirical power for different local alternatives based on $500$ experiments. The computational budget parameter $N$ is varied as indicated and empirical power of the LR test is also shown. Local alternatives are generated as described in the text for setup (a) with $(l,n)=(15,500)$ and level $\alpha=0.05$.}
\label{fig:power-setup1}
\end{figure}

The irregular setups (b) and (c) show the advantage of the proposed testing method because the empirical test sizes of the likelihood ratio test do not hold the nominal levels, that is, the empirical sizes are above the 45 degree line. In comparison, our testing method holds level for all choices of the computational budget parameter $N$. However, there is an interesting difference in the behaviour of the empirical test sizes between the regular setup (a) and setups (b) and (c) where parameters are close to an irregular point. In the regular setup, a higher computational budget parameter $N$ tends to yield empirical sizes closer to the nominal level $\alpha$ while in the irregular setups this is the other way round. The behaviour coincides with the theoretical guarantees discussed in Section~\ref{sec:incomplete-U-stat} and Remark~\ref{rem:comp-budget}, suggesting that the bootstrap approximation holds under mixed degeneracy only when choosing the computational budget $N$ appropriately, in this case $N=\mathcal{O}(n)$.

Besides size we also study the empirical power of our test.  For $\Sigma \in \mathcal{M}(T)$ we consider local alternatives 
\begin{equation} \label{eq:power}
\widetilde{\Sigma} = \Sigma + \gamma \gamma^{\top} \frac{h}{\sqrt{n}},
\end{equation}
where $\gamma = (0, \ldots, 0, 1, 1) \in \mathbb{R}^l$ is fixed and $h>0$ varies. According to Proposition~\ref{prop:power}, our test is consistent against alternatives outside of a neighborhood of size $\sqrt{\log(p)/n}$ up to an unknown constant. Thus, it is natural to consider alternatives as in~\eqref{eq:power} that also depend on the sample size through $\sqrt{n}$. Moreover, all entries of the alternative covariance matrix $\widetilde{\Sigma}$ are equal to the entries of $\Sigma$ except for four entries, which implies that only few constraints in Proposition~\ref{prop:constraints} are violated. However, since our test statistic is the maximum of an incomplete $U$-statistic, we expect that our test is consistent even for such sparse signals.

For the regular setup (a), the values of $h$ are plotted against the empirical power in Figure~\ref{fig:power-setup1}. 
This time we considered all constraints in Proposition~\ref{prop:constraints}, equalities and inequalities. 
The power increases with $h$ for all choices of the computational budget $N$. A higher budget empirically yields better power which is reasonable due to more precise estimation of the individual polynomials. For large computational budgets the power on the local alternatives is comparable to the likelihood ratio test. However, as discussed before, one should be sceptical about the bootstrap approximation for large computational budgets $N$ when the underlying true parameter is close to singular. The above experiments suggest that the choice $N=2n$ is a good trade-off between guarding against irregularities and statistical efficiency. A similar plot for the singular setups can be found in Appendix~\ref{sec:additional-simulations}. Moreover, we provide another simulation study in Appendix~\ref{sec:2-factor}, where we apply our methodology to test minors in the two-factor analysis model.

\subsection{Application to Gene Expression Data} 
In this section we analyze data from \citet{brawand2011theevolution}, where they obtained gene expression levels from sequencing of polyadenylated RNA from the cerebellum across different species such as gorillas, opossums or chickens. We focus on data of $20$ individuals from different species, where we have $5,636$ gene expression levels for each individual. Based on the assumption that species evolved from a common ancestor, a phylogenetic tree describes the evolutionary history among a collection of species \citep{semple2003phylogenetics}.  In the work of \citet{brawand2011theevolution} the authors reconstruct the phylogenetic tree in Figure~\ref{fig:brawand-tree} by neighbor-joining based on pairwise distance matrices. 
If we log-transform the original gene expression data, it is approximately normal distributed and it becomes a natural question to ask whether the data follows a Gaussian latent tree model where the underlying tree structure is the same. 
\looseness=-1

\begin{figure}[t]
\centering
\vspace{-1.5cm}
\includegraphics[angle=90,origin=c,width=0.6\linewidth]{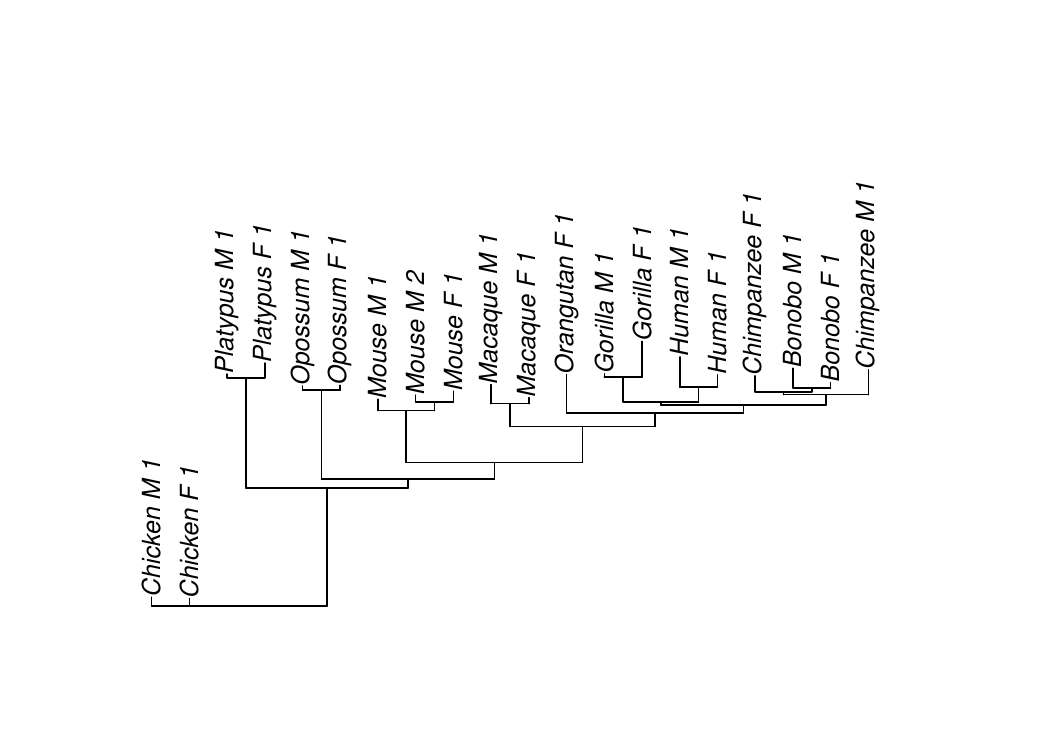}
\vspace{-2.2cm}
\caption{Mammalian gene expression phylogenies for cerebellum.}
\label{fig:brawand-tree}
\end{figure}

In our analysis to check tree compatibility we remove rows containing a zero such that we are left with $n=4615$ observations. Since the tree has $20$ leaves, we have to test $p=14250$ constraints that define the null hypothesis; recall proposition \ref{prop:constraints}. We compute $p$-values using the LR test and our test strategy with computational budget $N=2n$.  
Both tests indicate overwhelming evidence against the Gaussian latent tree model since we obtain a $p$-value of zero in both cases. 
To check whether the assumption of a Gaussian latent tree model might hold locally, we also run the same analysis on all substructures of the tree induced by $6$ neighboring leaves. This implies that $p=110$. Again, we only observe $p$-values equal to zero, indicating that the Gaussian latent tree assumption is also too strong for the considered subsets of variables.
We summarize that our approach is able to reach the same conclusion as the LR test, but our strategy is optimization-free and safe with respect to possible irregularities.

\section{Discussion} \label{discussion}
We proposed a general methodology to simultaneously test many constraints on a statistical parameter. The test statistic is given by a studentized maximum of an incomplete $U$-statistic and critical values are approximated by Gaussian multiplier bootstrap. Our method is applicable to all hypotheses that can be defined by estimable equality and inequality constraints. 
If the constraints are polynomial, we presented a general procedure to construct kernels. 
In the suggested methodology, there is no need to maximize a possibly multimodal function as one would do for a likelihood ratio test. Moreover, our method allows for many constraints, that is, the number of constraints can be much larger than the sample size. If the computational budget $N$ for constructing the incomplete $U$-statistic is appropriately chosen, typically of the same order as the sample size $n$, then the test asymptotically controls type I error even if the true parameter is irregular or close to irregular.

The latter fact is due to nonasymptotic Berry-Esseen type error bounds on the high-dimensional Gaussian and bootstrap approximation of incomplete U-statistics. Compared to previous work, we have shown that the approximation is also valid under mixed degeneracy if we choose a suitable computational budget $N$. This yields control of type I error as well as consistency of our testing methodology even in irregular settings. 

In practice, the requirement $N=\mathcal{O}(n)$ allows some flexibility in choosing the computational budget $N$. As illustrated in our simulations, a higher computational budget parameter yields more efficient estimates of the constraints and therefore a more powerful test. However, the computations become more involved for higher order kernels and one should be sceptical about the theoretical guarantees of the bootstrap approximation if $N$ is large and the underlying true parameter is close to an irregular point. Thus, there is a trade-off between efficiency and guarding against irregularities. In our simulations we found that choosing the computational budget parameter as $N=2n$ is a reasonable choice. 

If the inequality constraints defining the null hypothesis are strict, then there is a decrease in power. One may improve power by using a two-step approach for testing inequalities that was initially introduced in \citet{romano2014practical} and then further developed by \citet{bai2021two} for the high-dimensional setup. In the first step, a confidence region for the constraints of interest is constructed and in the second step, this set is used to provide information about which constraints are negative. In future work, it would be of interest to improve our strategy using the mentioned two-step approach.

\section*{Acknowledgments}
This project has received funding from the European Research Council (ERC) under the European Union’s Horizon 2020 research and innovation programme (grant agreement No 883818), and from the German Federal Ministry of Education and Research and the Bavarian State Ministry for Science and the Arts. Nils Sturma acknowledges support by the Munich Data Science Institute (MDSI) at Technical University of Munich (TUM) via the Linde/MDSI PhD Fellowship program.

\bibliographystyle{apalike}  
\bibliography{literature}

\newpage
\setcounter{page}{1}
\begin{center}
{\Large \textbf{Supplement to ``Testing Many Constraints in Possibly Irregular Models Using Incomplete U-Statistics''}}
\end{center}
\vspace{0.5cm}

This supplement contains additional material such as all technical proofs (Appendix~\ref{sec:proofs}), additional lemmas (Appendix~\ref{sec:useful-lemmas}), properties of sub-Weibull random variables (Appendix~\ref{sec:sub-Weibull}), additional simulation results for Gaussian latent tree models (Appendix~\ref{sec:additional-simulations}) and another application of our methodology, where we test minors in two-factor analysis models (Appendix~\ref{sec:2-factor}).

\begin{appendix}

\section{Proofs} \label{sec:proofs}
Since we only consider centered incomplete $U$-statistics, we assume without loss of generality $\mu = 0$ in all our proofs of the results in Section~\ref{sec:incomplete-U-stat}. 

\subsection{Proof of Theorem~\ref{thm:incompleteUstat}}
In this section we prove the main theorem. 
Let $R=[a,b] \in \mathbb{R}^p_{\mathrm{re}}$ be a hyperrectangle. For any partition $p=p_1 + p_2$ with $p_1, p_2 \geq 0$ , we write $R=(R^{(1)},R^{(2)})$, where $R^{(1)}$ and $R^{(2)}$ are hyperrectangles in $\mathbb{R}^{p_1}$ and $\mathbb{R}^{p_2}$. Moreover, for a given hyperrectangle and $t > 0$, we denote the following $t$-dependent quantities
\[
    R_{t} = [a-t,b+t], \quad R_{-t} = [a+t, b-t] \quad \text{and} \quad R_{\textrm{env},t} = R_{t} \setminus  R_{-t} = [a-t, a+t) \cup (b-t, b+t].
\] 
We begin with stating an abstract bound on \emph{complete} $U$-statistics; recall the definition
$
    U_n = \frac{1}{|I_{n,m}|} \sum_{\iota \in I_{n,m}}  h(X_{\iota}).
$ Let $\Gamma_g^{(1)} = \textrm{Cov}[g^{(1)}(X_1)]$ be the covariance matrix of the first $p_1$ indices of $g(X_1)$ and denote the symmetric $p \times p$ block matrix
\[
    \overline{\Gamma}_g = \begin{pmatrix}
    \Gamma_g^{(1)} & 0\\
    0 & 0
    \end{pmatrix}.
\]

\begin{lem} \label{lem:completeUstat}
Assume~\ref{C2} and~\ref{C4} -~\ref{C6} hold and let $R \in \mathbb{R}^p_{\mathrm{re}}$. Then there is a constant  $C_{\beta} > 0$ only depending on $\beta$ such that
\begin{align*}
    |P(\sqrt{n}(U_n - \mu) \in R) - P(m  \overline{Y}_g \in R)|
    \leq C_{\beta} \left(\frac{m^2 D_n^2 \log(p_1n)^{1+6/\beta}}{(\ubar{\sigma}_{g^{(1)}}^2 \wedge 1) \, n}\right)^{1/6} + \mathds{1}_{\{0 \in R^{(2)}_{\textrm{env},t}\}}
\end{align*}
with $t=C_{\beta}n^{-\min\{k, 1/3\}} m^2 D_n \log(p_2 n)^{2/\beta}$ and $\overline{Y}_g \sim N_p(0, \overline{\Gamma}_g)$.
\end{lem}

The bound in Lemma~\ref{lem:completeUstat} is not uniform for all hyperrectangles and is therefore of limited use on its own. However, it will be the main tool to prove Theorem~\ref{thm:incompleteUstat}.

\begin{proof}[Proof of Lemma~\ref{lem:completeUstat}]
Let $C_{\beta}$ be a constant that depends only on $\beta$ and that varies its value from place to place. For any other constant $c>0$ we assume without loss of generality 
\begin{equation} \label{eq:extra-assumption}
\frac{m^2\log(p_2)}{n} \leq c
\end{equation}
since otherwise the statement from Lemma~\ref{lem:completeUstat} becomes trivial by choosing $C_{\beta}$ large enough. First, we denote preliminary observations which we will use throughout the proof. Let $X$ be a random vector in $\mathbb{R}^{p_1}$ and $c$ be a non-random vector in $\mathbb{R}^{p_2}$.
\begin{enumerate}[label=(\roman*)]
    \item \label{obs1} $P((X,c) \in R) = P(\{X \in R^{(1)}\} \cap \{c \in R^{(2)}\}) = P(X \in R^{(1)}) \mathds{1}_{\{c \in R^{(2)}\}}.$ 
    \item \label{obs2} Let $Y$ be another random vector in $\mathbb{R}^{p_1}$. Then
        \begin{align*} 
        |P((Y,c) \in R) - P((X,c) \in R)| 
        &= |P(Y \in R^{(1)}) - P(X \in R^{(1)})| \mathds{1}_{\{c \in R^{(2)}\}} \\
        & \leq |P(Y \in R^{(1)}) - P(X \in R^{(1)})|.
        \end{align*}
    \item  \label{obs3} Suppose we are given two hyperrectangles $R=(R^{(1)},R^{(2)})$ and  $\tilde{R}=(R^{(1)},\tilde{R}^{(2)})$ having identical first part $R^{(1)}$. Then 
        \begin{align*}
            |P((X,c)\in\tilde{R}) - P((X,c)\in R)| 
            &= |P(X \in R^{(1)}) \mathds{1}_{\{c \in \tilde{R}^{(2)}\}} - P(X \in R^{(1)}) \mathds{1}_{\{c \in R^{(2)}\}}| \\
            &= P(X \in R^{(1)}) |\mathds{1}_{\{c \in \tilde{R}^{(2)}\}} - \mathds{1}_{\{c \in R^{(2)}\}}| \\
            &\leq |\mathds{1}_{\{c \in \tilde{R}^{(2)}\}} - \mathds{1}_{\{c \in R^{(2)}\}}|.
        \end{align*}
\end{enumerate}
Now, let $A_n = \sqrt{n} \, U_n$. Recall the notation of splitting the vector $A_n=(A_n^{(1)}, A_n^{(2)})$ according to the partition $p=p_1 + p_2$. Let $R=[a,b] \in \mathbb{R}^p_{\mathrm{re}}$ be a hyperrectangle and denote
\[
    \omega_n = \left(\frac{m^2 D_n^2 \log(p_1 n)^{1+6/\beta}}{(\ubar{\sigma}_{g^{(1)}}^2 \wedge 1) \, n}\right)^{1/6}.
\]
For any $t > 0$ we have
\begin{align*}
    P(A_n \in R)
    &\leq P(\{A_n \in R\} \cap \{\|A^{(2)}_n - 0\|_{\infty} \leq t\}) + P(\|A^{(2)}_n - 0\|_{\infty} > t) \\
    &\leq P(\{A_n^{(1)} \in R^{(1)}\} \cap \{0 \in R^{(2)}_t\}) + P(\|A^{(2)}_n - 0\|_{\infty} > t) \\
    &\leq P(\{mY_g^{(1)} \in R^{(1)}\} \cap \{0 \in R^{(2)}_t\}) + C_{\beta} \omega_n + P(\|A^{(2)}_n\|_{\infty} > t) \\
    &\leq P(m(Y_g^{(1)},0) \in R) + \mathds{1}_{\{0 \in R^{(2)}_{t} \setminus R^{(2)}\}} + C_{\beta} \omega_n + P(\|A^{(2)}_n\|_{\infty} > t),
\end{align*}
where the second to last inequality follows from observation~\ref{obs2} and Lemma~\ref{lem:CLT-complete-Ustat} that is applicable due to Conditions~\ref{C2},~\ref{C4} and~\ref{C6}. The last inequality follows from observation~\ref{obs3}. Observe that $(Y_g^{(1)},0) \sim \overline{Y}_g$ and $\mathds{1}_{\{0 \in R^{(2)}_{t} \setminus R^{(2)}\}} \leq \mathds{1}_{\{0 \in R^{(2)}_{\mathrm{env}, t}\}}$. Thus, it is left to bound  $P(\|A^{(2)}_n\|_{\infty} > t)$. We have
\begin{align*}
    P(\|A^{(2)}_n\|_{\infty} > t)
    &\leq P\left(\left\|A^{(2)}_n - \frac{m}{\sqrt{n}} \sum_{i=1}^n g^{(2)}(X_i)\right\|_{\infty} + \left\| \frac{m}{\sqrt{n}} \sum_{i=1}^n g^{(2)}(X_i)\right\|_{\infty} > t\right) \\
    &\leq P\left(\left\|A^{(2)}_n - \frac{m}{\sqrt{n}} \sum_{i=1}^n g^{(2)}(X_i)\right\|_{\infty} > \frac{t}{2}\right) + P\left(\left\|\frac{m}{\sqrt{n}} \sum_{i=1}^n g^{(2)}(X_i)\right\|_{\infty} > \frac{t}{2}\right).
\end{align*}
Due to~\eqref{eq:extra-assumption} we may apply  Theorem 5.1 in \citet{song2019approximating} to bound the first summand. Together with the Markov inequality we obtain
\begin{align}
     P\left(\left\|A^{(2)}_n - \frac{m}{\sqrt{n}} \sum_{i=1}^n g^{(2)}(X_i)\right\|_{\infty} > \frac{t}{2}\right)
     &= P\left(\left\|\frac{A^{(2)}_n}{\sqrt{n}} - \frac{m}{n} \sum_{i=1}^n g^{(2)}(X_i)\right\|_{\infty} > \frac{t}{2\sqrt{n}}\right) \nonumber \\ 
     &\leq C_{\beta} \frac{m^2 D_n \log(p_2)^{1+1/\beta}}{t \sqrt{n}}. \label{eq:first-summand}
\end{align}
To bound the second summand we apply Lemma A.2 in \citet{song2019approximating} which yields
\[
    P\left(\left\|\sum_{i=1}^n g^{(2)}(X_i)\right\|_{\infty} >  C_{\beta}(\sqrt{a_n} \log(p_2 n)^{1/2} + u_n \log(p_2 n)^{2/\beta} )\right) \leq \frac{4}{n},
\]
where $a_n = n \max_{p_1 + 1 \leq j\leq p} \|g_j(X_1)\|_2^2$ and $u_n = \max_{p_1 + 1 \leq j\leq p} \|g_j(X_)\|_{\psi_{\beta}}$. By Assumption~\ref{C5} we have $u_n \leq n^{-k} D_n$ and due to Lemma~\ref{lem:weibull-moments} it holds $a_n \leq C_{\beta} n \|g_j(X_i)\|^2_{\psi_{\beta}} \leq C_{\beta} n^{1-2k} D_n^2$. Hence,
\[
    P\left(\left\|\frac{m}{\sqrt{n}} \sum_{i=1}^n g^{(2)}(X_i)\right\|_{\infty} >  C_{\beta}\left(\frac{m D_n \log(p_2 n)^{1/2}}{n^k} + \frac{m D_n \log(p_2 n)^{2/\beta}}{n^{k+1/2}}\right)\right) \leq \frac{4}{n}.
\]
Now, choosing $t=C_{\beta}n^{-\min\{k, 1/3\}} m^2 D_n \log(p_2 n)^{2/\beta}$ we get
\[
    P\left(\left\|\frac{m}{\sqrt{n}} \sum_{i=1}^n g^{(2)}(X_i)\right\|_{\infty} > \frac{t}{2}\right) \leq \frac{4}{n} \leq \omega_n
\]
and due to~\eqref{eq:first-summand} we have
\[
    P\left(\left\|A^{(2)}_n - \frac{m}{\sqrt{n}} \sum_{i=1}^n g^{(2)}(X_i)\right\|_{\infty} > \frac{t}{2}\right) \leq C_{\beta} \frac{n^{\min\{k, 1/3\}}}{n^{1/2}} \leq C_{\beta}  n^{-1/6} \leq C_{\beta} \omega_n.
\]
Therefore, we conclude that $P(\|A^{(2)}_n\|_{\infty} > t) \leq C_{\beta} \omega_n$ and we have shown that
\[
    P(A_n \in R) \leq P(m \overline{Y}_g \in R) + C_{\beta} \omega_n + \mathds{1}_{\{0 \in R^{(2)}_{\mathrm{env}, t}\}}.
\]
Likewise, for the other direction it holds
\begin{align*}
    &P(A_n \in R) + P(\|A_n^{(2)} - 0\|_{\infty} > t)  = P(\{A_n^{(1)} \in R^{(1)}\} \cap \{A_n^{(2)} \in R^{(2)}\}) + P(\|A_n^{(2)} - 0\|_{\infty} > t) \\
    & \geq P(\{A_n^{(1)} \in R^{(1)}\} \cap [\{A_n^{(2)} \in R^{(2)}\} \cup \{|A_n^{(2)} - 0\|_{\infty} > t\}]) 
     \geq P(\{A_n^{(1)} \in R^{(1)}\} \cap \{0 \in R^{(2)}_{-t}\}).
\end{align*}
Thus, we have by Lemma~\ref{lem:CLT-complete-Ustat} and observations~\ref{obs2} and~\ref{obs3},
\begin{align*}
     P(A_n \in R) &\geq P(\{A_n^{(1)} \in R^{(1)}\} \cap \{0 \in R^{(2)}_{-t}\}) - P(\|A_n^{(2)} - 0\|_{\infty} > t) \\
     &\geq P(\{m Y_g^{(1)} \in R^{(1)}\} \cap \{0 \in R^{(2)}_{-t}\}) - C_{\beta} \omega_n - P(\|A_n^{(2)}\|_{\infty} > t) \\
     &\geq  P(m(Y_g^{(1)},0) \in R) - \mathds{1}_{\{0 \in R^{(2)} \setminus R^{(2)}_{-t}\}} - C_{\beta} \omega_n - P(\|A_n^{(2)}\|_{\infty} > t).
\end{align*}
Since $\mathds{1}_{\{0 \in R^{(2)} \setminus R^{(2)}_{-t}\}} \leq \mathds{1}_{\{0 \in R^{(2)}_{\mathrm{env}, t}\}}$ and $P(\|A_n^{(2)}\|_{\infty} > t) \leq C_{\beta} \omega_n$, the proof is complete.
\end{proof}

If the bounds $a^{(2)},b^{(2)}$ of the hyperrectangle $R^{(2)}=[a^{(2)},b^{(2)}]$ in Lemma~\ref{lem:completeUstat} depend on a (Gaussian) random vector, then one may use anti-concentration inequalities to bound  $\mathbb{E}[ \mathds{1}_{\{0 \in R^{(2)}_{\textrm{env},t}\}}] = P(0 \in R^{(2)}_{\textrm{env},t})$. We use this strategy to prove Theorem~\ref{thm:incompleteUstat}. 
The proof is split into two further propositions, where we first prove the approximation $\sqrt{n}(U_{n,N}^{\prime}) - \mu \approx N_p(0, m^2, \overline{\Gamma}_g + \alpha_n \Gamma_h)$ and then n $N_p(0, m^2, \overline{\Gamma}_g + \alpha_n \Gamma_h) \approx N_p(0, m^2, \Gamma_g + \alpha_n \Gamma_h)$ on the hyperrectangles.

\begin{prop} \label{prop:1}
Assume~\ref{C1} -~\ref{C6} hold. Then there is a constant $C_{\beta} > 0$ only depending on $\beta$ such that
\begin{align} 
    \sup_{R \in \mathbb{R}^p_{\mathrm{re}}} &|P(\sqrt{n}(U_{n,N}^{\prime} - \mu) \in R) - P(\overline{Y} \in R)| \leq C_{\beta} \{\omega_{n,1} + \omega_{n,2} + \omega_{n,3}\}
\end{align}
with $\overline{Y} \sim N_p(0,m^2 \overline{\Gamma}_g + \alpha_n \Gamma_h)$. 
\end{prop}

\begin{proof}[Proof of Proposition~\ref{prop:1}]
Let $C_{\beta} > 0$ be a constant that depends only on $\beta$ and that varies its value from place to place. As in the proofs of Theorem 3.1 in \citet{chen2019randomized} and Theorem 2.4 in \citet{song2019approximating} we write
$
    U_{n,N}^{\prime} = (N/\widehat{N})(U_n + \sqrt{1-\rho_n} B_n)
$
with $B_n = \frac{1}{N} \sum_{\iota \in I_{n,r}} \frac{Z_{\iota}-\rho_n}{\sqrt{1-\rho_n}}h(X_{\iota})$. For any hyperrectangle $R=[a,b] \in \mathbb{R}^p_{\mathrm{re}}$ we have
\begin{align*}
    &P(\sqrt{n}(U_n + \sqrt{1-\rho_n} B_n) \in R) \\
    &= \mathbb{E}\left[P|_{X_1^n}\left(\sqrt{N}B_n \in \left(\frac{1}{\sqrt{\alpha_n (1-\rho_n)}}R - \sqrt{\frac{N}{1-\rho_n}}U_n\right)\right)\right]\\
    &\leq \mathbb{E}\left[P|_{X_1^n}\left(Y_h \in \left(\frac{1}{\sqrt{\alpha_n (1-\rho_n)}}R - \sqrt{\frac{N}{1-\rho_n}}U_n\right)\right)\right] + C_{\beta} \omega_{n,1}\\
    &= P(\sqrt{n}U_n \in [R - \sqrt{\alpha_n(1-\rho_n)} Y_h]) +  C_{\beta}\omega_{n,1}  \\
    &= \mathbb{E}[P|_{Y_h}(\sqrt{n}U_n \in [R - \sqrt{\alpha_n(1-\rho_n)} Y_h])] + C_{\beta}\omega_{n,1}  \\
    & \leq \mathbb{E}[P|_{Y_h}(m \overline{Y}_g \in [R - \sqrt{\alpha_n(1-\rho_n)} Y_h]) + \mathds{1}_{\{0 \in \tilde{R}^{(2)}_{\textrm{env},t}\}}] + C_{\beta}\omega_{n,1}  \\
    & = P(m \overline{Y}_g \in [R - \sqrt{\alpha_n(1-\rho_n)} Y_h]) + \mathbb{E}[\mathds{1}_{\{0 \in \tilde{R}^{(2)}_{\textrm{env},t}\}}] + C_{\beta} \omega_{n,1},
\end{align*}
where $\tilde{R} = R - \sqrt{\alpha_n(1-\rho_n)} Y_h$ and $t=C_{\beta}n^{-\min\{k, 1/3\}} m^2 D_n \log(p_2 n)^{2/\beta}$. The first inequality follows from Lemma~\ref{lem:CLT} since $B_n$ is an independent sum given the samples $X_1,\ldots, X_n$ and Conditions~\ref{C1} -~\ref{C3} are satisfied; for a detailed proof see \citet[Lemma A.14]{song2019approximating}. The second inequality follows from Lemma~\ref{lem:completeUstat}. To bound the expectation appearing on the right-hand side we use an anti-concentration inequality due to Nazarov. It is stated in Lemma A.1 in \cite{chernozhukov2017central} and a detailed proof is given in the note \citet{chernozhukov2017detailed}. Let $\tilde{a} = a -  \sqrt{\alpha_n(1-\rho_n)} Y_h$ and  $\tilde{b} = b -  \sqrt{\alpha_n(1-\rho_n)} Y_h$ be random vectors in $\mathbb{R}^p$ such that the rectangle $\tilde{R}$ is given by $\tilde{R}=[\tilde{a}, \tilde{b}]$. Then we have
\begin{align*}
    \mathbb{E}&[\mathds{1}_{\{0 \in\tilde{R}^{(2)}_{\textrm{env},t}\}}] = P(0 \in \tilde{R}^{(2)}_{\textrm{env},t}) = P(\{\tilde{a}^{(2)}-t \leq 0 < \tilde{a}^{(2)}+t\} \cup \{\tilde{b}^{(2)}-t < 0 \leq \tilde{b}^{(2)} + t\}) \\
    &\leq P(\tilde{a}^{(2)}-t \leq 0 < \tilde{a}^{(2)}+t) + P(\tilde{b}^{(2)}-t < 0 \leq \tilde{b}^{(2)} + t) \\
    &=P(a^{(2)}-t \leq \sqrt{\alpha_n(1-\rho_n)} Y_h^{(2)} < a^{(2)}+t)  + P(b^{(2)}-t < \sqrt{\alpha_n(1-\rho_n)} Y_h^{(2)} \leq b^{(2)}+t) \\
    &= P\Bigl((\frac{a^{(2)}-t}{\sqrt{\alpha_n(1-\rho_n)}} \leq  Y_h^{(2)} < \frac{a^{(2)}+t}{\sqrt{\alpha_n(1-\rho_n)}}\Bigr)  + P\Bigl(\frac{b^{(2)}-t}{\sqrt{\alpha_n(1-\rho_n)}} <  Y_h^{(2)} \leq \frac{b^{(2)}+t}{\sqrt{\alpha_n(1-\rho_n)}}\Bigr) \\
    &\leq C_{\beta} \ \frac{t}{\ubar{\sigma}_h \sqrt{\alpha_n(1-\rho_n)}}\sqrt{\log(p_2)} \leq C_{\beta} \ \frac{t}{\ubar{\sigma}_h \sqrt{\alpha_n}}\sqrt{\log(p_2)} \\
    &\leq C_{\beta} \ \frac{N^{1/2} m^2 D_n \log(p_2 n)^{1/2 + 2/\beta}}{\ubar{\sigma}_h n^{\min\{1/2+k, \, 5/6\}}} = C_{\beta} \omega_{n,2}.
\end{align*}
Importantly, Nazarov's inequality is applicable because $\mathbb{E}[Y_{h,j}^{(2)}] \geq \ubar{\sigma}_h$ for all $j=1, \ldots, p$ and $\ubar{\sigma}_h > 0$ due to Condition~\ref{C3}. The second to last inequality follows from the fact that $1-\rho_n \geq 1/2$. We have
\begin{align*}
    &P(\sqrt{n}(U_n + \sqrt{1-\rho_n} B_n) \in R) \leq P(m \overline{Y}_g \in [R - \sqrt{\alpha_n(1-\rho_n)} Y_h]) + C_{\beta} \{\omega_{n,1} + \omega_{n,2}\}\\
    &= P(m \overline{Y}_g + \sqrt{\alpha_n(1-\rho_n)} Y_h \in R) + C_{\beta} \{\omega_{n,1} + \omega_{n,2}\} \\
    &= P \left( \alpha_n^{-1/2} \Lambda_h^{-1/2} m \overline{Y}_g + \sqrt{1-\rho_n} \Lambda_h^{-1/2} Y_h \in \alpha_n^{-1/2} \Lambda_h^{-1/2} R \right) + C_{\beta} \{\omega_{n,1} + \omega_{n,2}\}.
\end{align*}
Observe that $\mathbb{E}[(\sqrt{1-\rho_n}\sigma_{h,j}^{-1} Y_{h,j})^2] = 1-\rho_n \geq 1/2$ for any $1 \leq j \leq p$ and $\|\Lambda_h^{-1/2} \Gamma_h \Lambda_h^{-1/2} \|_{\infty} = 1$ by definition of $\Lambda_h$. Since $\rho_n = N/\binom{n}{m} \leq C_{\beta} N/n^m$ we have by the Gaussian comparison inequality \citep[Lemma C.5]{chen2019randomized}
\begin{align*}
    &P(\sqrt{n}(U_n + \sqrt{1-\rho_n} B_n) \in R) \\
    &\leq P \left( \alpha_n^{-1/2} \Lambda_h^{-1/2} (m \overline{Y}_g + \sqrt{\alpha_n} Y_h) \in \alpha_n^{-1/2} \Lambda_h^{-1/2} R \right) + C_{\beta} \{\omega_{n,1} + \omega_{n,2}\} + C_{\beta} \left(\frac{N \log(p)^2}{ n^m}\right)^{1/3}\\
    &\leq P(m \overline{Y}_g + \sqrt{\alpha_n} Y_h \in R) + C_{\beta} \{\omega_{n,1} + \omega_{n,2} + \omega_{n,3}\} 
\end{align*}
Likewise, we can show
\[
    P(\sqrt{n}(U_n + \sqrt{1-\rho_n} B_n) \in R) \geq P(m \overline{Y}_g + \sqrt{\alpha_n} Y_h \in R) - C_{\beta} \{\omega_{n,1} + \omega_{n,2}  + \omega_{n,3}\}.
\]
It is left to show that the latter two inequalities hold with $\sqrt{n}(U_n + \sqrt{1-\rho_n} B_n)$ replaced by $\sqrt{n}U_{n,N}^{\prime}$. This part of the proof is similar to step $5$ in the proof of Theorem 3.1 in \citet{chen2019randomized} and steps $2$ and $3$ in the proof of Theorem 2.4 in \citet{song2019approximating} and therefore omitted. 
\end{proof}

\begin{prop} \label{prop:2}
Assume~\ref{C2},~\ref{C3} and~\ref{C5} hold. Then, for a constant $C_{\beta} > 0$ only depending on $\beta$, it holds
\vspace{-0.5cm}
\begin{align*} 
    \sup_{R \in \mathbb{R}^p_{\mathrm{re}}} &|P(\overline{Y} \in R) - P(Y \in R)| \leq C_{\beta} \omega_{n,3}.
\end{align*}
\end{prop}

For the proof of Proposition~\ref{prop:2}, we define a $p \times p$ diagonal matrix $\Lambda_h$ such that $\Lambda_{h,jj} = \sigma_{h,j}^2$ for $1 \leq j \leq p.$

\begin{proof}[Proof of Proposition~\ref{prop:2}]
The proof is an application of the Gaussian comparison inequality \citep[Lemma C.5]{chen2019randomized} with appropriate normalizing. Let $C_{\beta} > 0$ be a constant that depends only on $\beta$ and that varies its value from place to place. Observe that $\mathbb{E}[(\alpha_n^{-1/2} \sigma_{h,j}^{-1} \overline{Y}_j )^2] \geq \mathbb{E}[(\sigma_{h,j}^{-1} Y_{h,j})^2]=1$ for any $1 \leq j \leq p$ due to independence of $\overline{Y}_g$ and $Y_h$ and Condition~\ref{C3}. Moreover, \looseness=-1
\begin{align*}
    &\|\alpha_n^{-1} \Lambda_h^{-1/2} \textrm{Cov}[\overline{Y}]\Lambda_h^{-1/2} - \alpha_n^{-1} \Lambda_h^{-1/2} \textrm{Cov}[Y]\Lambda_h^{-1/2} \|_{\infty} = \| \alpha_n^{-1} m^2 \Lambda_h^{-1/2} (\overline{\Gamma}_g - \Gamma_g) \Lambda_h^{-1/2}\|_{\infty} \\
    &\leq \frac{N m^2}{\ubar{\sigma}_h^2 n} \|\overline{\Gamma}_g - \Gamma_g\|_{\infty} \leq \frac{N m^2}{\ubar{\sigma}_h^2 n} \max\left\{\max_{p_1 + 1 \leq j \leq p} \sigma_{g,j}^2, \left(\max_{1 \leq j \leq p_1} \sigma_{g,j}\right) \cdot \left(\max_{p_1 + 1 \leq j \leq p} \sigma_{g,j}\right) \right\} \\
    &\leq C_{\beta} \frac{N m^2}{\ubar{\sigma}_h^2 n} \max\{n^{-2k} D_n^2, n^{-k}D_n^2\} \leq C_{\beta} \frac{N m^2 D_n^2}{\ubar{\sigma}_h^2 n^{1+k}},
\end{align*}
where we used that $\max_{1 \leq j \leq p_1} \sigma_{g,j} \leq \max_{1 \leq j \leq p_1} \sigma_{h,j} \leq  C_{\beta} D_n$ and $\max_{p_1 + 1 \leq j \leq p} \sigma_{g,j} \leq C_{\beta} n^{-k} D_n$ due to Conditions~\ref{C2} and~\ref{C5} and Lemma~\ref{lem:weibull-moments}. Note that Assumption~\ref{C2} also implies $\|g_j(X_1)-\mu_j\|_{\psi_{\beta}} \leq D_n$ for all $j = 1, \ldots, p$. Thus, we have by the Gaussian comparison inequality for any $R \in \mathbb{R}^p_{\mathrm{re}}$
\begin{align*}
    P(\overline{Y} \in R) &= P(\alpha_n^{-1/2} \Lambda_h^{-1/2} \overline{Y} \in \alpha_n^{-1/2} \Lambda_h^{-1/2} R) \\
    &\leq P(\alpha_n^{-1/2} \Lambda_h^{-1/2} Y \in \alpha_n^{-1/2} \Lambda_h^{-1/2} R) + C_{\beta} \left(\frac{N m^2 D_n^2 \log(p)^2}{\ubar{\sigma}_h^2 n^{1+k}}\right)^{1/3}\\
    &\leq P(Y \in R) + C_{\beta} \omega_{n,3}.
\end{align*}
Similarly, we can show $P(\overline{Y} \in R) \geq P(Y \in R) - C_{\beta} \omega_{n,3}$, which concludes the proof.
\end{proof}
    
\begin{proof}[Proof of Theorem~\ref{thm:incompleteUstat}]
The theorem follows from Proposition~\ref{prop:1} and Proposition~\ref{prop:2}.
\end{proof}

\subsection{Other Proofs of Section~\ref{sec:incomplete-U-stat}}

\begin{proof}[Proof of Corollary~\ref{cor:Gaussian-case}]
    Let $C_{s,m} > 0$ be a constant that depends only on $s$ and $m$ and that may change its value in different occurences. Since the polynomials $h_j$ are not constant, we have that $s \geq 1$.  Define $\beta=1/s \in (0,1]$ and $D_n=C_{s,m} \overline{\sigma}_h$. Since $\overline{\sigma}_h \geq \underline{\sigma}_h$, it holds that $D_n \geq 1$ by potentially  increasing $C_{s,m}$. We will first prove that Assumptions \ref{C1}, \ref{C2} and \ref{C6} as well as mixed degeneracy are satisfied, where we fix an arbitrary $k \in (0,1)$. This implies that the Berry-Esseen type bound of Theorem~\ref{thm:incompleteUstat} holds. We then optimize this bound over all $k \in (0,1)$ to obtain the desired result.

    To show that \ref{C1} holds, it is enough to note that $\|h_j(X_1^m)\|_{2+l} \leq C_{s,m} \sigma_{h,j}$ holds for $l=1,2$ due to the hypercontractivity property of polynomials in Gaussian random variables (Theorem~3.2.10 in \citeauthor{pena1999decoupling}, \citeyear{pena1999decoupling} and Lemma~2.2 in \citeauthor{leung2024singularityagnostic}, \citeyear{leung2024singularityagnostic}). To show \ref{C6}, observe that each $g_j$ is also a polynomial in Gaussian variables of degree at most $2s$. Hence, it also holds that $\|g_j(X_1)\|_{2+l} \leq C_{s,m} \sigma_{g,j}$ for $l=1,2$. This implies that \ref{C6} is satisfied since $\sigma_{g,j}^2 \leq \sigma_{h,j}^2$ by Jensen's inequality. Assumption \ref{C2} follows from Lemma \ref{lem:hypercontractivity}. 
    
    It remains to show mixed degeneracy, that is, Assumptions \ref{C4} and \ref{C5}. Fix an arbitrary $k \in (0,1)$ and define $\underline{\sigma}_{g^{(1)}}^2 = C_{s,m} n^{-2k} \, \overline{\sigma}_h^2$. Since each $g_j$ is a polynomial in Gaussian variables of degree at most $2s$, we have by Lemma \ref{lem:hypercontractivity} that $\|g_j(X_1)\|_{\psi_{1/s}} \leq C_{s,m} \sigma_{g,j}$. Hence, if $\sigma_{g,j}^2 < \underline{\sigma}_{g^{(1)}}^2$ for some index $j \in \{1, \ldots, p\}$, then  
    \[
    \|g_j(X_1)\|_{\psi_{1/s}} \leq  C_{s,m}  \underline{\sigma}_{g^{(1)}} \leq C_{s,m} n^{-k} \, \overline{\sigma}_h.
    \]
    That is, if \ref{C4} is not satisfied, then \ref{C5} must be satisfied for this index $j$, and we conclude that mixed degeneracy holds.

    Now, Theorem~\ref{thm:incompleteUstat} implies that 
    \[
        \sup_{R \in \mathbb{R}^p_{\mathrm{re}}} |P(\sqrt{n} U_{n,N}^{\prime} \in R) - P(Y \in R)| \leq C_{\beta} \{\omega_{n,1} + \omega_{n,2} + \omega_{n,3}\}.
    \]
    Since we have $\beta=1/s$, $D_n=C_{s,m} \overline{\sigma}_h$ and $\underline{\sigma}_{g^{(1)}}^2 = C_{s,m} n^{-2k} \, \overline{\sigma}_h^2$ with $k \in (0,1)$, we can further bound $\omega_{n,1}$, $\omega_{n,2}$ and $\omega_{n,3}$ as follows:
    \begin{align*}
        \omega_{n,1} &\leq C_{s,m} \left(\frac{(\overline{\sigma}_h^2 \vee 1) \log(pn)^{1+6s}}{(\underline{\sigma}_h^2 \wedge 1) \, n^{1-k}}\right)^{1/6}
        \leq C_{s,m} \frac{(\overline{\sigma}_h^2 \vee 1) \log(pn)^{1/6+s}}{(\underline{\sigma}_h^2 \wedge 1) \, n^{(1-k)/6}}, \\
        \omega_{n,2} &\leq C_{s,m} \frac{(\overline{\sigma}_h^2 \vee 1) \log(pn)^{1/2+2s}}{(\underline{\sigma}_h^2 \wedge 1) \, n^{\min\{k, 1/3\}}}, \text{ and } \\
        \omega_{n,3} &\leq C_{s,m} \frac{(\overline{\sigma}_h^2 \vee 1) \log(p)^{2/3}}{(\underline{\sigma}_h^2 \wedge 1) \, n^{k/3}},
    \end{align*}
    where we have also used that $n \leq N \leq C_{s,m} n$. 
    Since $k \in (0,1)$, this implies the overall bound
    \[
        \omega_{n,1} + \omega_{n,2} + \omega_{n,3} \leq C_{s,m} \frac{(\overline{\sigma}_h^2 \vee 1) \log(p)^{1/2+2s}}{(\underline{\sigma}_h^2 \wedge 1) \, n^{\min\{(1-k)/6,k/3\}}}.
    \]
    It remains to choose $k \in (0,1)$ such that $\min\{(1-k)/6,k/3\}$ is maximal. This yields $k=1/3$ and $\min\{(1-k)/6,k/3\}=1/9$.
\end{proof}

Recall the definition of the diagonal matrix $\Lambda_h$, that is, $\Lambda_{h,jj} = \sigma_{h,j}^2$ for $1 \leq j \leq p.$ We define the standardized kernel $\tilde{h}(x_1, \ldots, x_m) = \Lambda_h^{-1/2}h(x_1, \ldots, x_m)$. In accordance, let $\Gamma_{\tilde{h}} = \textrm{Cov}[\tilde{h}(X_1^m)]$ be the covariance matrix of the standardized kernel and let $\tilde{U}_{n,N}^{\prime} = \Lambda_h^{-1/2} {U}^{\prime}_{n,N} $ be the standardized incomplete $U$-statistic.

\begin{proof}[Proof of Lemma~\ref{lem:bootstrap}]
Throughout the proof let $C > 0$ be a constant only depending on $\beta, \zeta_1$ and $C_1$ and that varies its value from place to place. It will be useful to note that
\[
    \max \left\{ \frac{m^{2/\beta}}{\ubar{\sigma}_h^4}, \frac{N^2m^4}{\ubar{\sigma}_h^4 n^2}, \frac{N m^2}{\ubar{\sigma}_h^2 n}\right\} \leq A_n.
\]
We start by observing $\mathbb{E}[(\alpha_n^{-1/2} \sigma_{h,j}^{-1} Y_j)^2] \geq \mathbb{E}[(\sigma_{h,j}^{-1} Y_{h,j})^2] \geq 1$ due to the independence of $Y_g$ and $Y_h$ and Condition~\ref{C3}. Thus, by the Gaussian comparison inequality \citep[Lemma C.5]{chen2019randomized} we have
\begin{align}
    &\sup_{R \in \mathbb{R}^p_{\mathrm{re}}}\left|P|_{\mathcal{D}_n}(U_{n,n_1}^\# \in R) - P(Y \in R)\right| \nonumber\\
    &\leq \sup_{R \in \mathbb{R}^p_{\mathrm{re}}}\left|P|_{\mathcal{D}_n}(U_{n,n_1}^\# \in R) - P|_{\mathcal{D}_n}( m U_{n_1,g}^\# + \sqrt{\alpha_n} Y_h \in R)\right| \nonumber\\
    &\quad + \sup_{R \in \mathbb{R}^p_{\mathrm{re}}}\left|P|_{\mathcal{D}_n}( m U_{n_1,g}^\# + \sqrt{\alpha_n} Y_h \in R) - P(Y \in R)\right|\nonumber \\
    &= \sup_{R \in \mathbb{R}^p_{\mathrm{re}}}\left|P|_{\mathcal{D}_n}(\alpha_n^{-1/2} \Lambda_h^{-1/2} U_{n,n_1}^\# \in R) - P|_{\mathcal{D}_n}\left(\Lambda_h^{-1/2} (\alpha_n^{-1/2}  m U_{n_1,g}^\# + Y_h) \in R\right)\right| \nonumber\\
    &\quad + \sup_{R \in \mathbb{R}^p_{\mathrm{re}}}\left|P|_{\mathcal{D}_n}\left( \Lambda_h^{-1/2} (\alpha_n^{-1/2}  m U_{n_1,g}^\# + Y_h) \in R\right) - P(\alpha_n^{-1/2} \Lambda_h^{-1/2} Y \in R)\right| \nonumber\\
    &\leq C \left(\widehat{\Delta}_{\tilde{h}} \log(p)^2\right)^{1/3} + C \left(\frac{Nm^2}{\ubar{\sigma}_h^2 n}\widehat{\Delta}_g \log(p)^2\right)^{1/3}, \nonumber
\end{align}
where
\[
    \widehat{\Delta}_{\tilde{h}} = \left\|\frac{1}{\widehat{N}} \sum_{\iota \in I_{n,m}} Z_{\iota}  (\tilde{h}(X_{\iota}) - \tilde{U}_{n,N}^{\prime})(\tilde{h}(X_{\iota}) - \tilde{U}_{n,N}^{\prime})^{\top} - \Gamma_{\tilde{h}}\right\|_{\infty}
\]
and
\begin{align*}
    \widehat{\Delta}_g &= \left\|\frac{1}{n_1} \sum_{i_1 \in S_1} (G_{i_1} - \overline{G}) (G_{i_1} - \overline{G})^{\top}- \Gamma_g\right\|_{\infty}.
\end{align*}
Now, it is enough to show for each of the two summands involving $\widehat{\Delta}_{\tilde{h}}$ and $\widehat{\Delta}_g$ that the probability of being greater than $C n^{-(\zeta_1 \wedge \zeta_2)/6}$ is at most $C/n$. We start with the first summand.

\underline{Step 1: Bounding $\widehat{\Delta}_{\tilde{h}}$.} 
This is similar to the proof of Theorem 4.1 in \citet{chen2019randomized} and to the proof of Theorem 3.1 in \citet{song2019approximating}. However, we give a full proof for completeness verifying that everything remains true under our assumptions even though some parts will be identical. In particular, we track the  constants $\beta$, $m$ and $\ubar{\sigma}_h^2$. Observe that for any integer $l$, there exists some constant $\tilde{C}$ that depends only on $l$ and $\zeta_1$ such that
\begin{equation} \label{eq:observation}
    \frac{\log(n)^{l}}{n^{\zeta_1}} \leq \tilde{C}
\end{equation}
for all $n \geq 1$. Define 
\begin{align*}
    &\widehat{\Delta}_{\tilde{h},1} = \left\|\frac{1}{N} \sum_{\iota \in I_{n,m}} (Z_{\iota} - \rho_n)\tilde{h}(X_{\iota})\tilde{h}(X_{\iota})^{\top} \right\|_{\infty}, \quad \widehat{\Delta}_{\tilde{h},2} = \left\|\widehat{\Gamma}_{\tilde{h}} - \Gamma_{\tilde{h}}\right\|_{\infty},\\
    &\widehat{\Delta}_{\tilde{h},3} = |N/\widehat{N}-1| \|\Gamma_{\tilde{h}}\|_{\infty} \quad \textrm{and} \quad
    \widehat{\Delta}_{\tilde{h},4} = \left\| \frac{1}{N} \sum_{\iota \in I_{n,m}} Z_{\iota}\tilde{h}(X_{\iota})\right\|_{\infty}^2,
\end{align*}
where $\widehat{\Gamma}_{\tilde{h}} = |I_{n,m}|^{-1} \sum_{\iota \in I_{n,m}} \tilde{h}(X_{\iota})\tilde{h}(X_{\iota})^{\top}$ and observe that 
\[
    \widehat{\Delta}_{\tilde{h}} \leq |N/\widehat{N}|\left(\widehat{\Delta}_{\tilde{h},1} + \widehat{\Delta}_{\tilde{h},2}\right) + \widehat{\Delta}_{\tilde{h},3} + |N/\widehat{N}|^2\widehat{\Delta}_{\tilde{h},4}.
\]
First, we consider the quantity $|N/\widehat{N}|$. We can assume without loss of generality  $C_1 n^{-\zeta_1} \leq 1/16$ since otherwise we may always take $C$ to be large enough and the statement of Lemma~\ref{lem:bootstrap} becomes trivial. Hence, by Condition~\eqref{eq:bootstrapA1}, we have $\sqrt{\log(n)/N} \leq (C_1 n^{-\zeta_1})^{1/2} \leq 1/4$ and thus it follows by Lemma A.12 in \citet{song2019approximating} that
$ 
    P(|N/\widehat{N}-1| > C)  \leq 2/n.
$
Since $|N/\widehat{N}-1| \geq |N/\widehat{N}|-1$ we also have $P(|N/\widehat{N}| > C) \leq 2/n$. Therefore it suffices to show
\[
    P\left(\widehat{\Delta}_{\tilde{h},i} \log(p)^2 > C n^{-\zeta_1/2}\right) \leq \frac{C}{n}
\]
for all $i=1,\ldots,4$, which we naturally divide into four sub-steps.

\underline{Step 1.1: Bounding $\widehat{\Delta}_{\tilde{h},1}$.} 
Conditioned on $\{X_1, \ldots, X_n\}$ we have by Lemma A.3 in \citet{song2019approximating} the inequality
\begin{equation} \label{eq:bound-delta-h1}
    P|_{X_1^n}\left(N \widehat{\Delta}_{\tilde{h},1} > C(\sqrt{N V_n \log(pn)} + M_1 \log(p n))\right) \leq \frac{C}{n},
\end{equation}
where 
$
    V_n = \max_{1 \leq j,l\leq p} |I_{n,m}|^{-1} \sum_{\iota \in I_{n,m}} \tilde{h}_j(X_{\iota})^2 \tilde{h}_l(X_{\iota})^2 
$
and
$
    M_1 = \max_{\iota \in I_{n,m}} \max_{1 \leq j \leq p}$ $\tilde{h}_j(X_{\iota})^2.
$
Next we shall find bounds such that the probability for $V_n$ and $M_1$ being larger than the bounds is at most $C/n$. First, by Lemma~\ref{lem:maximum-weibull}, Lemma~\ref{lem:product-weibull} and due to Assumption~\ref{C2},
\begin{align*}
    \|M_1\|_{\psi_{\beta/2}} &\leq \log(p |I_{n,m}|)^{2/\beta}  \max_{\iota \in I_{n,m}} \max_{1 \leq j \leq p}  \|\tilde{h}_j(X_1^m)^2\|_{\psi_{\beta/2}} \\
    &\leq C \ubar{\sigma}_{h}^{-2}\log(p n^m)^{2/\beta}  \max_{\iota \in I_{n,m}} \max_{1 \leq j \leq p}  \|h_j(X_1^m)\|_{\psi_{\beta}}^2 \\
    &\leq C \ubar{\sigma}_{h}^{-2} m^{2/\beta} \log(pn)^{2/\beta} D_n^2.
\end{align*}
Thus, by the definition of $\|\cdot\|_{\psi_{\beta/2}}$ together with the Markov inequality,
\[
    P(M_1 > C \ubar{\sigma}_{h}^{-2} m^{2/\beta} D_n^2 \log(pn)^{2/\beta} \log(n)^{2/\beta}) \leq 2/n.
\]
Second, we apply Lemma A.6 in \citet{song2019approximating} to bound $V_n$. Due to Assumption~\ref{C1} we have
\[
    \max_{1 \leq j,l\leq p} \mathbb{E}[\tilde{h}_j(X_1^m)^2 \tilde{h}_l(X_1^m)^2] \leq  \ubar{\sigma}_{h}^{-2} D_n^2
\]
and due to Assumption~\ref{C2} and Lemma~\ref{lem:product-weibull} it holds
\[
    \max_{1 \leq j,l\leq p} \|\tilde{h}_j(X_1^m)^2 \tilde{h}_l(X_1^m)^2\|_{\psi_{\beta/4}} \leq \ubar{\sigma}_{h}^{-4} \max_{1 \leq j\leq p} \|h_j(X_1^m)\|_{\psi_{\beta}}^4 \leq \ubar{\sigma}_{h}^{-4} D_n^4.
\]
By  Condition~\eqref{eq:bootstrapA1} and Observation~\eqref{eq:observation} we have 
\begin{align*}
    &\frac{m D_n^2 \log(pn)^{1+4/\beta} \log(n)^{4/\beta - 1}}{\ubar{\sigma}_{h}^{2} \, n} 
    + \frac{m^{2/\beta} D_n^2 \log(pn)^{1+8/\beta} \log(n)^{8/\beta-1}}{ \ubar{\sigma}_{h}^{2} \, n^2} \\
    & \leq C_1 n^{-\zeta_1} \log(n)^{4/\beta - 1} + C_1 n^{-2\zeta_1} \log(n)^{8/\beta - 1} \leq C
\end{align*}
and we obtain the bound $P(V_n > C \ubar{\sigma}_{h}^{-2} D_n^2) \leq \frac{8}{n}$  from \citet[Lemma A.6]{song2019approximating}. Here, we used that $|I_{n,m}|^{-1} \leq C n^{-2} m^{-1/\beta}$; cf. \citet[Lemma A.9]{song2019approximating}. Putting the results together and recalling~\eqref{eq:bound-delta-h1} we have by Fubini
\[
    P\left(\widehat{\Delta}_{\tilde{h},1} > C(N^{-1/2} \ubar{\sigma}_{h}^{-1} D_n \log(pn)^{1/2} + N^{-1} \ubar{\sigma}_{h}^{-2} m^{2/\beta}D_n^2 \log(pn)^{1+4/\beta})\right) \leq \frac{C}{n}
\]
Now, we use again Condition~\eqref{eq:bootstrapA1} to observe
\begin{align*}
    &\log(p)^2 \, C \left(\frac{D_n \log(pn)^{1/2}}{\ubar{\sigma}_{h} \, N^{1/2}} + \frac{m^{2/\beta}D_n^2 \log(pn)^{1+4/\beta}}{\ubar{\sigma}_{h}^2 \, N} \right) \\
    &\leq C \left( \left(\frac{D_n^2 \log(pn)^{5}}{\ubar{\sigma}_h^2 \, N}\right)^{1/2} + \frac{m^{2/\beta}D_n^2 \log(pn)^{3+4/\beta}}{\ubar{\sigma}_h^2 \, N} \right) \\
    &\leq C \left(n^{-\zeta_1/2} + n^{-\zeta_1}\right) \leq Cn^{-\zeta_1/2},
\end{align*}
which implies 
\[
    P\left(\widehat{\Delta}_{\tilde{h},1} \log(p)^2 > C n^{-\zeta_1/2}\right) \leq \frac{C}{n}.
\]

\underline{Step 1.2: Bounding $\widehat{\Delta}_{\tilde{h},2}$.}
By Lemma A.5 in \citet{song2019approximating} we have 
\[
    P\left(\widehat{\Delta}_{\tilde{h},2} > C((n^{-1} m \kappa^2 \log(pn))^{1/2} + n^{-1} m z_n \log(pn)^{4/\beta})\right) \leq \frac{4}{n},
\]
where
\begin{align*}
    &\kappa^2 = \max_{1\leq j,l \leq p} \mathbb{E}[\{\tilde{h}_j(X_1^m)\tilde{h}_l(X_1^m) - \Gamma_{\tilde{h},jl}\}^2], \\
    &z_n = \max_{1\leq j,l \leq p} \|\tilde{h}_j(X_1^m)\tilde{h}_l(X_1^m) - \Gamma_{\tilde{h},jl}\|_{\psi_{\beta/2}}.
\end{align*}
Due to Condition~\ref{C1} we have the bound $\kappa^2 \leq  \max_{1\leq j,l \leq p} \mathbb{E}[\tilde{h}_j(X_1^m)^2\tilde{h}_l(X_1^m)^2] \leq \ubar{\sigma}_h^{-2} D_n^2$ and due to Condition~\ref{C2} and Lemmas~\ref{lem:centering} and~\ref{lem:product-weibull} we have $z_n \leq C \ubar{\sigma}_h^{-2} D_n^2$. Thus,
\[
    P\left(\widehat{\Delta}_{\tilde{h},2} > C\left(\left(\frac{m D_n^2 \log(pn)}{\ubar{\sigma}_h^{2} \, n} \right)^{1/2} + \frac{m D_n^2 \log(pn)^{4/\beta}}{\ubar{\sigma}_h^{2} \, n}\right)\right) \leq \frac{4}{n}.
\]
Using Condition~\eqref{eq:bootstrapA1} we conclude by the same arguments as in Step 1.1,
\[
    P(\widehat{\Delta}_{\tilde{h},2} \log(p)^2 > C n^{-\zeta_1/2}) \leq C/n.
\]

\underline{Step 1.3: Bounding $\widehat{\Delta}_{\tilde{h},3}$.}
By Lemma A.12 in \citet{song2019approximating} we have
\[
    P\left(|N/\widehat{N}-1| > C \sqrt{\log(n)/N}\right) \leq \frac{2}{n}.
\]
Since $\|\Gamma_{\tilde{h}}\|_{\infty} = 1$ this yields
\[
    P\left(\widehat{\Delta}_{\tilde{h},3} \log(p)^2 > C \left(\frac{\log(n) \log(p)^4}{N}\right)^{1/2} \right) \leq \frac{2}{n}.
\]
Observing $\frac{\log(n) \log(p)^4}{N} \leq C_1 n^{-\zeta_1}$ by Condition~\eqref{eq:bootstrapA1} we conclude
\[
    P\left(\widehat{\Delta}_{\tilde{h},3} \log(p)^2 > C n^{-\zeta_1/2} \right) \leq \frac{C}{n}.
\]

\underline{Step 1.4: Bounding $\widehat{\Delta}_{\tilde{h},4}$.}
Observe that $\widehat{\Delta}_{\tilde{h},4} \leq 2(\widehat{\Delta}_{\tilde{h},5}^2 + \widehat{\Delta}_{\tilde{h},6}^2),$ where
\[
    \widehat{\Delta}_{\tilde{h},5} = \left\|\frac{1}{N} \sum_{\iota \in I_{n,m}} (Z_{\iota} - \rho_n) \tilde{h}(X_{\iota})\right\|_{\infty} \quad \textrm{and} \quad \widehat{\Delta}_{\tilde{h},6} = \left\|\frac{1}{|I_{n,m}|} \sum_{\iota \in I_{n,m}}  \tilde{h}(X_{\iota})\right\|_{\infty}
\]
such that it is enough to bound the two terms $\widehat{\Delta}_{\tilde{h},5}^2$ and $\widehat{\Delta}_{\tilde{h},6}^2$ separately. But bounding $\widehat{\Delta}_{\tilde{h},5}$ is similar to bounding $\widehat{\Delta}_{\tilde{h},1}$ by applying \citet[Lemma A.3]{song2019approximating}. Moreover, bounding $\widehat{\Delta}_{\tilde{h},6}$ is similar to bounding $\widehat{\Delta}_{\tilde{h},2}$,  where we applied \citet[Lemma A.5]{song2019approximating}. Therefore, we omit the details.

\underline{Step 2: Bounding $\widehat{\Delta}_g$.}
By the same argument as in the proof of Theorem 4.2 in \citet{chen2019randomized} and using $\max_{1 \leq j \leq p} \sigma_{g,j} \leq C D_n$, which holds true due to Condition~\ref{C2} and Lemma~\ref{lem:weibull-moments},  we have
\[
    \widehat{\Delta}_g \leq C (D_n \widehat{\Delta}_{g,1}^{1/2} + \widehat{\Delta}_{g,1} + \widehat{\Delta}_{g,2} + \widehat{\Delta}_{g,3}^2),
\]
where $\widehat{\Delta}_{g,1}$ is defined in~\eqref{eq:delta_g_1} and
\begin{align*}
    \widehat{\Delta}_{g,2} = \left\| \frac{1}{n_1} \sum_{i_1 \in S_1} \{g(X_{i_1}) g(X_{i_1})^{\top} - \Gamma_{g}\}\right\|_{\infty}, \quad \widehat{\Delta}_{g,3} = \left\|\frac{1}{n_1}\sum_{i_1 \in S_1}g(X_{i_1})\right\|_{\infty}.
\end{align*}
Thus, similar to Step 1, we divide the proof into three sub-steps.

\underline{Step 2.1: Bounding $\widehat{\Delta}_{g,1}$.}
By Condition~\eqref{eq:bootstrapA2} we have
\[
    P\left(\frac{N m^2}{\ubar{\sigma}_h^2 n} D_n \widehat{\Delta}_{g,1}^{1/2} \log(p)^2 > C_1^{1/2} n^{-\zeta_2 / 2}\right) \leq \frac{C}{n}
\]
and 
\[
    P\left(\frac{N m^2}{\ubar{\sigma}_h^2 n} \widehat{\Delta}_{g,1} \log(p)^2 > C_1 n^{-\zeta_2}\right) \leq \frac{C}{n},
\]
where we used $D_n \geq 1$ and $p \geq 3$ for the second bound.

\underline{Step 2.2: Bounding $\widehat{\Delta}_{g,2}$.}
By Lemma A.2 in \citet{song2019approximating} we have
\[
    P\left(\widehat{\Delta}_{g,2} > C( n_1^{-1} \nu \log(pn)^{1/2} + n_1^{-1} u_n \log(pn)^{4/\beta}) \right) \leq \frac{4}{n},
\]
where
\begin{align*}
    & \nu^2 = \max_{1 \leq j,l \leq p} \sum_{i_1 \in S_1} \mathbb{E}[\{g_j(X_{i_1}) g_l(X_{i_1}) - \Gamma_{g,jl}\}^2], \\
    & u_n = \max_{i_1 \in S_1}\max_{1 \leq j,l \leq p} \|g_j(X_{i_1}) g_l(X_{i_1}) - \Gamma_{g,jl}\|_{\psi_{\beta/2}}.
\end{align*}
Due to Condition~\ref{C2} and Lemmas~\ref{lem:product-weibull},~\ref{lem:weibull-moments} and~\ref{lem:centering}  we have the bounds $\nu^2 \leq C n_1 D_n^4$ and $u_n \leq C D_n^2$. Thus,
\[
    P\left(\widehat{\Delta}_{g,2} > C( n_1^{-1/2} D_n^2 \log(pn)^{1/2} + n_1^{-1} D_n^2 \log(pn)^{4/\beta}) \right) \leq \frac{4}{n}.
\]
Now, we use Condition~\eqref{eq:bootstrapA1} to observe
\begin{align*}
    &\frac{N m^2}{\ubar{\sigma}_h^2 n} \log(p)^2 C \left(\frac{D_n^2 \log(pn)^{1/2}}{n_1^{1/2}} + \frac{D_n^2 \log(pn)^{4/\beta}}{n_1}\right) \\
    &\leq C \left(\left(\frac{N^2 m^4 D_n^4 \log(pn)^5}{\ubar{\sigma}_h^4 \, n^2 \, n_1}\right)^{1/2} + \frac{N m^2 D_n^2 \log(pn)^{2+4/\beta}}{\ubar{\sigma}_h^2 \, n \, n_1}\right) \\
    &\leq C \left( n^{-\zeta_1 /2} + n^{-\zeta_1} \right) \leq C n^{-\zeta_1 /2}.
\end{align*}
Thus we conclude
\[
    P\left( \frac{N m^2}{\ubar{\sigma}_h^2 n}  \widehat{\Delta}_{g,2} \log(p)^2 > C n^{-\zeta_1 /2}\right)\leq \frac{4}{n} \leq \frac{C}{n}.
\]

\underline{Step 2.3: Bounding $\widehat{\Delta}_{g,3}$.}
As in Step 2.2 we apply Lemma A.2 in \citet{song2019approximating} once again and obtain
\[
    P\left(\widehat{\Delta}_{g,3} > C( n_1^{-1} \tilde{\nu} \log(pn)^{1/2} + n_1^{-1} \tilde{u}_n \log(pn)^{2/\beta}) \right) \leq \frac{4}{n},
\]
where $\tilde{\nu}^2 = \max_{1 \leq j\leq p} \sum_{i_1 \in S_1} \mathbb{E}[g_j(X_{i_1})^2]$ and  $\tilde{u}_n = \max_{i_1 \in S_1}\max_{1 \leq j \leq p} \|g_j(X_{i_1})\|_{\psi_{\beta}}.$ Due to Assumption~\ref{C2} and Lemma~\ref{lem:weibull-moments} and Lemma~\ref{lem:centering}, we have the bounds $\tilde{\nu}^2 \leq C n_1 D_n^2$ and $\tilde{u}_n \leq C D_n$ giving
\[
    P\left(\widehat{\Delta}_{g,3}^2 > C( n_1^{-1/2} D_n  \log(pn)^{1/2} + n_1^{-1} D_n \log(pn)^{2/\beta}) \right) \leq \frac{4}{n}.
\]
By the same arguments as in Step 2.2 we conclude
\[
    P\left( \frac{N m^2}{\ubar{\sigma}_h^2 n}  \widehat{\Delta}_{g,3}^2 \log(p)^2 > C n^{-\zeta_1 /2}\right) \leq \frac{C}{n}.
    \qedhere
\]
\end{proof}

\begin{proof}[Proof of Theorem~\ref{thm:bootstrap}]
Throughout the proof let $C > 0$ be a constant only depending on $\beta, \nu, \zeta$ and $C_1$ and that varies its value from place to place. Let $\zeta_1 = \zeta$ and $\zeta_2 = \zeta - 1/\nu$. Then, due to Lemma~\ref{lem:bootstrap}, it suffices to show that~\eqref{eq:bootstrapA2} holds. By Step 2 of the proof of Theorem 3.3 in \citet{song2019approximating} we have 
\[
    \mathbb{E}[\widehat{\Delta}_{g,1}^\nu] \leq C \left(\frac{m D_n^2 \log(p)^{1+2/\beta}}{n}\right)^\nu.
\]
Then we have by the Markov inequality
\begin{align*}
    &P(A_n D^2_n \widehat{\Delta}_{g,1} \log(p)^4 > C_1 n^{-(\zeta - 1/\nu)} ) \\
    &\quad \leq C A_n^\nu D_n^{2\nu} \log(p)^{4 \nu} n^{(\zeta - 1/\nu) \nu}  \left(\frac{m D_n^2 \log(p)^{1+2/\beta}}{n}\right)^\nu \\
    &\quad = C n^{-1} \left(\frac{n^{\zeta} A_n D_n^4 \log(p)^{5+2/\beta}}{n}\right)^\nu \\
    &\quad \leq C n^{-1} \left(n^{\zeta} C_1 n^{-\zeta} \right)^\nu = \frac{C}{n},
\end{align*}
where the last inequality follows from Condition~\eqref{eq:main-bootstrap-cond}.
\end{proof}

\begin{proof}[Proof of Corollary~\ref{cor:studentization}]
Throughout the proof let $C > 0$ be a constant only depending on $\beta, \zeta$ and $C_1$ and that varies its value from place to place. We begin by bounding the quantity $\max_{1 \leq j \leq p}|\hat{\sigma}_j^2/\sigma_j^2-1|$. Note that
\begin{align*}
    \left|\frac{\hat{\sigma}_j^2}{\sigma_j^2}-1\right| &= \left|\frac{m^2(\hat{\sigma}_{g,j}^2-\sigma_{g,j}^2) + \alpha_n(\hat{\sigma}_{h,j}^2-\sigma_{h,j}^2)}{m^2\sigma_{g,j}^2+\alpha_n \sigma_{h,j}^2}\right| \\
    &\leq \frac{m^2|\hat{\sigma}_{g,j}^2-\sigma_{g,j}^2|}{m^2\sigma_{g,j}^2+\alpha_n \sigma_{h,j}^2} + \frac{\alpha_n|\hat{\sigma}_{h,j}^2-\sigma_{h,j}^2|}{m^2\sigma_{g,j}^2+\alpha_n \sigma_{h,j}^2}\\
    &\leq \frac{m^2}{\alpha_n \sigma_{h,j}^2} |\hat{\sigma}_{g,j}^2-\sigma_{g,j}^2| + \frac{1}{\sigma_{h,j}^2} |\hat{\sigma}_{h,j}^2-\sigma_{h,j}^2|.
\end{align*}
Taking the maximum we get 
\begin{align*}
    \max_{1 \leq j \leq p}\left|\frac{\hat{\sigma}_j^2}{\sigma_j^2}-1\right| \leq \frac{N m^2}{\ubar{\sigma}_{h}^2 n} \max_{1 \leq j \leq p}|\hat{\sigma}_{g,j}^2-\sigma_{g,j}^2| + \max_{1 \leq j \leq p}  \frac{1}{\sigma_{h,j}^2}|\hat{\sigma}_{h,j}^2-\sigma_{h,j}^2| \leq  \frac{N m^2}{\ubar{\sigma}_{h}^2 n} \widehat{\Delta}_g + \widehat{\Delta}_{\tilde{h}},
\end{align*}
where $\widehat{\Delta}_g$ and $\widehat{\Delta}_{\tilde{h}}$ were defined in the proof of Lemma~\ref{lem:bootstrap}. We have shown in Step 1 in the proof of Lemma~\ref{lem:bootstrap}
that 
\[
   P\left(\widehat{\Delta}_{\tilde{h}} \log(p)^2  > Cn^{-\zeta/2}\right) \leq \frac{C}{n}.
\]
Moreover, if we take $\nu = 7/\zeta$ and $\zeta_2 = \zeta - 1/\nu$, then in the proofs of Lemma~\ref{lem:bootstrap} and Theorem~\ref{thm:bootstrap} we have shown that
\[
   P\left(\frac{N m^2}{\ubar{\sigma}_{h}^{-2} n} \widehat{\Delta}_g \log(p)^2 > C n^{-3\zeta/7} \right) \leq \frac{C}{n}.
\]
Hence, we have
\begin{equation} \label{eq:covariance-ratio}
    P\left( \max_{1 \leq j \leq p}|\hat{\sigma}_j^2/\sigma_j^2-1| \log(p)^2 > C n^{-3\zeta/7} \right) \leq \frac{C}{n}.
\end{equation}
The rest of the proof is the same as the proof for Corollary A.1 in \citet{chen2019randomized} and thus omitted.
\end{proof}

\subsection{Proofs of Section~\ref{sec:methodology}}
\begin{proof}[Proof of Corollary~\ref{cor:quantiles}]
Throughout the proof let $C > 0$ be a constant only depending on $\beta, \zeta$ and $C_1$ and that varies its value from place to place. Moreover, we write $\mu_j = f_j(\theta)$ and for $t \in \mathbb{R}$ we consider hyperrectangles of the form
$
    H_t = \{x \in \mathbb{R}^p: - \infty \leq x_j \leq t \textrm{ for all } j=1, \ldots, p\}.
$
The last steps in the proof of Corollary~\ref{cor:studentization} are equal to the last steps in the proof of  \citet[Corollary A.1]{chen2019randomized} and they show that we have the bound
\begin{equation}  \label{eq:bound-for-max-1}
    \sup_{t \in \mathbb{R}} \left|P\left(\max_{1 \leq j\leq p} \sqrt{n} \ (U_{n,N,j}^{\prime} - \mu_j) / \hat{\sigma}_j \leq t \right) - P(\max_{1 \leq j\leq p} Y_j / \sigma_j \leq t) \right|  \leq C n^{-\zeta/7},
\end{equation}
where $Y \sim N_p(m^2 \Gamma_g + \alpha_n \Gamma_h )$. Moreover, with  probability at least $1-C/n$ we have
\begin{equation} \label{eq:bound-for-max-2}
\sup_{t \in \mathbb{R}} \left|P(\max_{1 \leq j\leq p} Y_j / \sigma_j \leq t ) - P|_{\mathcal{D}_n}(W \leq t) \right| \leq C n^{-\zeta/7}.
\end{equation}

Now, let $\alpha \in (0,1)$. To simplify notation we write $T_0 = \max_{1 \leq j\leq p} \sqrt{n} \ (U_{n,N,j}^{\prime} - \mu_j) / \hat{\sigma}_j$ and  $Y_0 = \max_{1 \leq j\leq p} Y_j / \sigma_j$. Moreover, we denote by $c_{Y_0}(1-\alpha)$ the $(1-\alpha)$-quantile of $Y_0$.  To begin, we use~\eqref{eq:bound-for-max-2} to establish a relation between the quantiles of $Y_0$ and $W$. With probability at least $1-C/n$ we have
\begin{align*}
    P|_{\mathcal{D}_n}(W \leq c_{Y_0}(1-\alpha-C n^{-\zeta/7})) & \leq P(Y_0 \leq  c_{Y_0}(1-\alpha-C n^{-\zeta/7})) + C n^{-\zeta/7} \\
    &= 1-\alpha-C n^{-\zeta/7} +  C n^{-\zeta/7} \\
    &= 1 - \alpha.
\end{align*}
Thus, by definition of a quantile, we have the relation
\begin{equation} \label{eq:rel-quantiles-1}
    P(c_W(1-\alpha) \geq c_{Y_0}(1-\alpha-C n^{-\zeta/7})) \geq 1-C/n.
\end{equation}
Similarly, we can show $P|_{\mathcal{D}_n}(W \leq c_{Y_0}(1-\alpha+C n^{-\zeta/7})) \geq 1-\alpha$ and therefore we have
\begin{equation} \label{eq:rel-quantiles-2}
    P(c_W(1-\alpha) \leq c_{Y_0}(1-\alpha+C n^{-\zeta/7})) \geq 1-C/n.
\end{equation}
Now, we apply~\eqref{eq:rel-quantiles-1} and~\eqref{eq:bound-for-max-1}  to conclude
\begin{align*}
    P(T_0 > c_W(1-\alpha)) & \leq P(T_0 > c_{Y_0}(1-\alpha - C n^{-\zeta/7})) + C/n \\
    & \leq P(Y_0 > c_{Y_0}(1-\alpha - C n^{-\zeta/7})) + C n^{-\zeta/7} + C/n \\
    &=  \alpha + C n^{-\zeta/7} + C n^{-\zeta/7} + C/n \\
    &= \alpha + C n^{-\zeta/7},
\end{align*}
where we used that $Y_0$ is a continuous distribution with no point mass. The constant $C$ may have changed its value in the last step. Equivalently, by using the relation~\eqref{eq:rel-quantiles-2}, we get the other direction 
\begin{align*}
    P(T_0 > c_W(1-\alpha)) &\geq P(T_0 > c_{Y_0}(1-\alpha + C n^{-\zeta/7})) - C/n\\
    &\geq P(Y_0 > c_{Y_0}(1-\alpha + C n^{-\zeta/7})) - C n^{-\zeta/7} - C/n\\
    &= \alpha - C n^{-\zeta/7} - C n^{-\zeta/7} - C/n \\
    &= \alpha - C n^{-\zeta/7}.
\end{align*}
Noting that both inequalities hold uniformly over all $\alpha \in (0,1)$ finishes the proof.
\end{proof}

\begin{proof}[Proof of Proposition~\ref{prop:power}]
Throughout the proof let $C > 0$ be a constant only depending on $\beta, \zeta$ and $C_1$, that varies its value from place to place. Write $\mu_j = f_j(\theta)$ and let $j^{\ast} \in \{1 \leq j \leq p\}$ be any index such that $\mu_{j^{\ast}}/\sigma_{j^{\ast}} = \max_{1 \leq j \leq p} (\mu_j/\sigma_j)$. To simplify notation, we also define $\kappa_{p,\alpha} = \sqrt{2 \log(p)} + \sqrt{2 \log(1/\alpha)}$ and $\delta =  C n^{-3\zeta/7} \log(p)^{-2}$ such that Inequality~\eqref{eq:power-assumption} is equivalent to
\[
    \mu_{j^{\ast}}/\sigma_{j^{\ast}} \geq (1+\varepsilon)\left(1 + \delta \right) \frac{\kappa_{p,\alpha}}{\sqrt{n}}.
\]
We assume without loss of generality $\delta \leq 1/2$, since otherwise the statement from Corollary~\ref{prop:power} becomes trivial by choosing $C$ large enough. Now, let $A$ be the event such that $|\widehat{\sigma}_{j^{\ast}}/\sigma_{j^{\ast}}-1| \leq \delta$. Then, we have
\begin{align*}
    &P\left(\mathcal{T} > c_W(1-\alpha)\right) \\
    &= P\left(\sqrt{n} \, U_{n,N,j^{\ast}}/\widehat{\sigma}_{j^{\ast}} > c_W(1-\alpha) \right) \\
    &= P\left(\sqrt{n} \, \mu_{j^{\ast}}/\widehat{\sigma}_{j^{\ast}} + \sqrt{n} \, (U_{n,N,j^{\ast}} - \mu_{j^{\ast}})/\widehat{\sigma}_{j^{\ast}}  > c_W(1-\alpha)\right) \\
    &= P\left((\sigma_{j^{\ast}}/\widehat{\sigma}_{j^{\ast}}) \sqrt{n} \, \mu_{j^{\ast}}/\sigma_{j^{\ast}}   + \sqrt{n} \, (U_{n,N,j^{\ast}} - \mu_{j^{\ast}})/\widehat{\sigma}_{j^{\ast}}  > c_W(1-\alpha)\right) \\
    &\geq  P\left( (1/(1+\delta)) \sqrt{n} \, \mu_{j^{\ast}}/\sigma_{j^{\ast}} + \sqrt{n} \, (U_{n,N,j^{\ast}} - \mu_{j^{\ast}})/\widehat{\sigma}_{j^{\ast}}  > c_W(1-\alpha)\right) - P\left(A^{c}\right) \\
    &\geq  P\left((1+\varepsilon) \kappa_{p,\alpha} + \sqrt{n} \, (U_{n,N,j^{\ast}} - \mu_{j^{\ast}})/\widehat{\sigma}_{j^{\ast}}  > c_W(1-\alpha)\right) - P\left(A^{c}\right),
\end{align*}
where we used that $c_W(1-\alpha) \geq 0$, since $\alpha \in (0,1/2)$. Furthermore, by
 \citet[Lemma D.4]{chernozhukov2019inference} we have the bound  $c_W(1-\alpha) \leq \kappa_{p,\alpha}$. Hence, it follows that
\begin{align*}
     P\left(\mathcal{T} > c_W(1-\alpha)\right) 
     &\geq P\left(\sqrt{n} \, (U_{n,N,j^{\ast}} - \mu_{j^{\ast}})/\widehat{\sigma}_{j^{\ast}}  > - \varepsilon \, \kappa_{p,\alpha} \right) - P\left(A^{c}\right) \\
     &=  P\left((\sigma_{j^{\ast}}/\widehat{\sigma}_{j^{\ast}}) \sqrt{n} \, (U_{n,N,j^{\ast}} - \mu_{j^{\ast}})/\sigma_{j^{\ast}}  > - \varepsilon \, \kappa_{p,\alpha} \right) - P\left(A^{c}\right) \\
     &\geq P\left(\sqrt{n} \, (U_{n,N,j^{\ast}} - \mu_{j^{\ast}})/\sigma_{j^{\ast}}  > - (1-\delta) \varepsilon \, \kappa_{p,\alpha} \right) - P\left(A^{c}\right) \\
     &= 1 - P\left(\sqrt{n} \, (U_{n,N,j^{\ast}} - \mu_{j^{\ast}})/\sigma_{j^{\ast}}  \leq - (1-\delta) \varepsilon \, \kappa_{p,\alpha} \right) - P\left(A^{c}\right). 
\end{align*}
By Theorem~\ref{thm:incompleteUstat}, we have
\begin{align*}
    &P\left(\sqrt{n} \, (U_{n,N,j^{\ast}} - \mu_{j^{\ast}})/\sigma_{j^{\ast}}  \leq - (1-\delta) \varepsilon \, \kappa_{p,\alpha} \right) \\
    &\leq P\left(Y_{j^{\ast}}/\sigma_{j^{\ast}}  \leq - (1-\delta) \varepsilon \, \kappa_{p,\alpha} \right) + C\{\omega_{n,1}+\omega_{n,2}+\omega_{n,3}\} \\
    &\leq P\left(Y_{j^{\ast}}/\sigma_{j^{\ast}}  \leq - (1-\delta) \varepsilon \, \kappa_{p,\alpha} \right) + C n^{-\zeta/7}.
\end{align*}
Moreover, by Equation~\ref{eq:covariance-ratio} in the proof of Corollary~\ref{cor:studentization}, we have 
\[
    P\left(A^{c}\right) = P\left(|\widehat{\sigma}_{j^{\ast}}/\sigma_{j^{\ast}}-1| > \delta\right) \leq \frac{C}{n} \leq  C n^{-\zeta/7}.
\]
This implies that
\begin{align*}
    P\left(\mathcal{T} > c_W(1-\alpha)\right) 
    &\geq 1 - P\left(Y_{j^{\ast}}/\sigma_{j^{\ast}}  \geq (1-\delta) \varepsilon \, \kappa_{p,\alpha} \right) - C n^{-\zeta/7}.
\end{align*}
Finally, by the Chernoff bound for univariate normal distributions, we have
\begin{align*}
    P\left(Y_{j^{\ast}}/\sigma_{j^{\ast}}  \geq (1-\delta) \varepsilon \, \kappa_{p,\alpha} \right)
    & \leq \exp\left(-\frac{1}{2}(1-\delta)^2 \varepsilon^2 \, \kappa_{p,\alpha}^2\right) \\
    &\leq \exp\left(-\frac{1}{2}(1-\delta)^2 \varepsilon^2 \, (2 \log(p) + 2 \log(1/\alpha))\right) \\
    &= \exp\left(- (1-\delta)^2 \varepsilon^2 \, \log(p/\alpha)\right) \\
    &\leq \exp\left(- \frac{1}{C} \, \varepsilon^2 \, \log(p/\alpha)\right),
\end{align*}
where we used $\delta \leq 1/2$ in the last step. 
\end{proof}

\subsection{Proofs of Section~\ref{sec:polynomial-hypotheses}}

\begin{proof}[Proof of Proposition~\ref{prop:criterion-regular-points}]
By Step 3) in the construction of the kernel $h$, we have that
\begin{align*}
h(X_1^m) &= \frac{1}{m!} \sum_{\pi \in S_m}  \breve{h}(X_{\pi(1)}, \ldots, X_{\pi(m)}) \\
&= \frac{1}{m} \sum_{k=1}^m \underbrace{\frac{1}{(m-1)!} \sum_{\pi^{\ast} \in S_{m-1}^{\ast}} \breve{h}(X_{\pi^{\ast}(2)}, \ldots, X_{\pi^{\ast}(k-1)}, X_1, X_{\pi^{\ast}(k)}, \ldots, X_{\pi^{\ast}(m)})}_{=:\breve{h}^{(k)}(X_1^m)} \\
&= \frac{1}{m} \sum_{k=1}^m \breve{h}^{(k)}(X_1^m),
\end{align*}
where $S^{\ast}_{m-1}$ denotes the group of all permutations of the set $\{2, \ldots, m\}$. 
Now, let $k \leq m$ be a fixed integer and define $K=\lceil k/\eta \rceil$. Then we can write $k=(K-1) \eta + l$ for $1 \leq l \leq \eta$. Applying the definition of $\breve{h}$ in step 2) of the construction, we see that
\begin{align*}
\mathbb{E}[\breve{h}^{(k)}(X_1^m)|X_1] &= \underbrace{a_0 + \sum_{r=1}^{K-1} \sum\limits_{\substack{(i_1, \ldots, i_r) \\i_{j} \in \{1, \ldots, d\}}} a_{(i_1, \ldots, i_r)} \theta_{i_1} \cdots \theta_{i_r}}_{=: R^{(K-1)}(\theta)} \\
& \hspace{0.82cm} + \sum_{r=K}^{s} \sum\limits_{\substack{(i_1, \ldots, i_r) \\i_{j} \in \{1, \ldots, d\}}} a_{(i_1, \ldots, i_r)} \theta_{i_1} \cdots \theta_{i_{K-1}} \hat{\theta}_{i_K,l}(X_1) \theta_{i_{K+1}} \cdots \theta_{i_{r}} \\
&= R^{(K-1)}(\theta) + \sum_{i_K \in D(f)} \hat{\theta}_{i_K,l}(X_1) \tilde{g}_{i_K}^{(K)}(\theta),
\end{align*}
where $R^{(K-1)}$ and $\tilde{g}_{i_K}^{(K)}$ are polynomials in $\mathbb{R}[\theta_1, \ldots, \theta_d]$ and $\tilde{g}_{i_K}^{(K)}$ is defined by
\[
    \tilde{g}_{i_K}^{(K)}(\theta) = \sum_{r=K}^{s}  \sum\limits_{\substack{(i_1, \ldots, i_{K-1},i_{K+1},\ldots, i_r) \\i_{j} \in \{1, \ldots, d\}}} a_{(i_1, \ldots, i_r)} \theta_{i_1} \cdots \theta_{i_{K-1}} \theta_{i_{K+1}} \cdots \theta_{i_{r}}.
\]
Now, recall that $m=\eta s$ and observe that we can write $g(X_1)$ as follows.
\begin{align*}
g(X_1) &= \mathbb{E}[h(X_1^m)|X_1] = \frac{1}{m} \sum_{k=1}^m \mathbb{E}[\breve{h}^{(k)}(X_1^m)|X_1] \\
& = \frac{1}{m} \sum_{K=1}^s \sum_{l=1}^{\eta} \left(R^{(K-1)}(\theta) + \sum_{i \in D(f)} \hat{\theta}_{i,l}(X_1) \tilde{g}_{i}^{(K)}(\theta)\right) \\
&= \frac{\eta}{m} \sum_{K=1}^s R^{(K-1)}(\theta) + \frac{1}{m}  \sum_{K=1}^s \sum_{l=1}^{\eta}  \sum_{i \in D(f)} \hat{\theta}_{i,l}(X_1) \tilde{g}_{i}^{(K)}(\theta) \\
&= \underbrace{\frac{\eta}{m} \sum_{K=1}^{s} R^{(K-1)}(\theta)}_{=: R(\theta)} +    \sum_{i \in D(f)} \underbrace{\sum_{l=1}^{\eta} \hat{\theta}_{i,l}(X_1)}_{= \hat{\theta}_f(X_1)_i} \underbrace{\left( \frac{1}{m} \sum_{K=1}^s \tilde{g}_{i}^{(K)}(\theta)\right)}_{=:\tilde{g}_i(\theta)} \\
&= R(\theta) +  \sum_{i \in D(f)} \hat{\theta}_f(X_1)_i \,\, \tilde{g}_i(\theta) \\
&= R(\theta) + \tilde{g}(\theta)^{\top} \, \hat{\theta}_f(X_1),
\end{align*}
where $\tilde{g}(\theta)$ is the vector $(\tilde{g}_i(\theta))_{i \in D(f)}$. We observe that the variance of $g(X_1)$ is given by
\[
    \var_{\theta}[g(X_1)] = \tilde{g}(\theta)^{\top} \Cov[\hat{\theta}_f(X_1)] \, \tilde{g}(\theta).
\]
Note that this is a polynomial, i.e., $\var_{\theta}[g(X_1)] \in \mathbb{R}[\theta_1, \ldots,\theta_d]$. We will now argue that $\var_{\theta}[g(X_1)]$ is not the zero polynomial. 
It is easy to see by the above derivations that $\mathbb{E}_{\theta}[h(X_1^m)]=f(\theta)= R(\theta)$ if all components of $\tilde{g}(\theta)$ are identically to the zero polynomial. But this is a contradiction since the degree of the polynomial $R(\theta)$ is at most $s-1$. Hence, there is at least one component in $\tilde{g}(\theta)$ that is not zero. Since $\Cov[\hat{\theta}_f(X_1)]$ is positive definite, we conclude that $\var_{\theta}[g(X_1)]$ is also not identical to the zero polynomial. The proof is finished by recalling that the zero set of a real polynomial that is not the zero polynomial has Lebesgue measure zero; see the lemma of \citet{okamoto1973distinctness}.
\end{proof}

\section{Additional Lemmas} \label{sec:useful-lemmas}
The following Lemma concerns Gaussian approximation for independent sums. It is a generalization of Proposition 2.1 in \citet{chernozhukov2017central} and implicitly proved in the work of \citet{song2019approximating}. However, we state an explicit version here since our proofs rely on this Lemma.

\begin{lem} \label{lem:CLT}
Let $X_1, \ldots, X_n$ be i.i.d.~centered $\mathbb{R}^p$-valued random vectors. Let $\beta \in (0,1]$ be an absolute constant and assume there exists $\ubar{\sigma}^2 > 0$ and $D_n \geq 1$ such that, for all $i=1, \ldots, n$,
\begin{align*}
    &\mathbb{E}[X_{ij}^2] \geq \ubar{\sigma}^2, \qquad \mathbb{E}[X_{ij}^{2+l}] \leq \mathbb{E}[X_{ij}^2] D_n^l \ \textrm{ for } \ j=1,\ldots,p, \ \ l=1,2, \\
    &\|X_{ij}\|_{\psi_{\beta}} \leq D_n \textrm{ for } j=1,\ldots,p.
\end{align*}
Then there is a constant $C_{\beta} > 0$ only depending on $\beta$ such that
\[
\sup_{R \in \mathbb{R}^p_{\mathrm{re}}} \left|P\left(\frac{1}{\sqrt{n}} \sum_{i = 1}^n X_i \in R\right) - P(Y \in R)\right| \leq C_{\beta} \left(\frac{D_n^2 \log(pn)^{1+6/\beta}}{\ubar{\sigma}^2 \, n}\right)^{1/6},
\]
where $Y \sim N_p(0, \Sigma)$ with $\Sigma = \mathbb{E}[X_i X_i^{\top}]$.
\end{lem}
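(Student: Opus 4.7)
The plan is to reduce the claim to the corresponding Gaussian approximation result for bounded (hence sub-Exponential) independent sums---for example Proposition~2.1 of \citet{chernozhukov2017central}---by means of a truncation argument in the spirit of the appendix proofs of \citet{song2019approximating}. First, I would fix a truncation level $\tau_n = c_\beta D_n \log(pn)^{1/\beta}$ with $c_\beta$ chosen large enough, and define centered truncated coordinates $\widetilde{X}_{ij} = X_{ij}\mathbbm{1}\{|X_{ij}|\leq\tau_n\} - \mathbb{E}[X_{ij}\mathbbm{1}\{|X_{ij}|\leq\tau_n\}]$. The sub-Weibull tail $P(|X_{ij}|>\tau_n)\leq 2\exp(-(\tau_n/D_n)^\beta)$ combined with a union bound over $i\leq n$ and $j\leq p$ shows that $X_i$ and $\widetilde{X}_i$ agree up to a deterministic shift on an event of probability at least $1-1/n$. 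Using $\mathbb{E}[X_{ij}]=0$ together with Cauchy--Schwarz, the additive shift $\mathbb{E}[X_{ij}\mathbbm{1}\{|X_{ij}|>\tau_n\}]$ is also of negligible order.

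Next, I would apply the Gaussian approximation bound for bounded independent sums to the $\widetilde{X}_i$. These random vectors satisfy $|\widetilde{X}_{ij}|\leq 2\tau_n$; their variances agree with those of $X_{ij}$ up to a sub-Weibull tail correction, so the lower bound $\ubar\sigma^2$ is preserved up to a constant factor for $n$ large; and the moment bounds $\mathbb{E}|\widetilde{X}_{ij}|^{2+l}\leq D_n^l\,\mathbb{E}[\widetilde{X}_{ij}^2]$ for $l=1,2$ are inherited from the analogous assumptions on $X_{ij}$. Applying the CCK hyperrectangle Gaussian approximation then yields a bound of the form $C(\tau_n^2 \log(pn)^{\gamma}/(n\ubar\sigma^2))^{1/6}$ for the appropriate log exponent $\gamma$, and substituting $\tau_n^2 = c_\beta^2 D_n^2 \log(pn)^{2/\beta}$ produces the target rate.

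To close the argument, the Gaussian vector $\widetilde{Y}\sim N_p(0,\widetilde{\Sigma})$ associated to the truncated variables must be replaced by the target $Y\sim N_p(0,\Sigma)$. This can be handled by the Gaussian comparison inequality (e.g.\ Lemma~C.5 of \citet{chen2019randomized}) applied to the covariance difference $\|\widetilde{\Sigma}-\Sigma\|_\infty$, which is itself controlled by the $L^2$-tail of $X_{ij}$ beyond $\tau_n$. Nazarov's anti-concentration inequality then absorbs the small additive errors accumulated in the preceding steps into the final hyperrectangle probability bound.

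The main obstacle is the careful bookkeeping of logarithmic factors to arrive at the stated exponent $1+6/\beta$. The truncation level $\tau_n\asymp D_n\log(pn)^{1/\beta}$ is precisely dictated by balancing the tail residual $P(|X_{ij}|>\tau_n)$ against the factor $\tau_n^2$ entering the CCK bound, and reconciling the naive log-exponent with the claimed one requires the moment-based form of the CCK bound, in which $D_n$ enters through the third/fourth-moment condition and $\tau_n$ enters only through the uniform bound on $\widetilde{X}_{ij}$. Tracking this distinction carefully, in the manner implicit in the appendix of \citet{song2019approximating}, is what yields the claimed rate.
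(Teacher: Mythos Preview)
Your proposal is correct and aligns with the paper's approach. The paper does not give a self-contained argument but simply records that the lemma follows from Proposition~2.1 of \citet{chernozhukov2017central}, extended to sub-Weibull tails via Lemma~A.8 of \citet{song2019approximating} and normalized as in the proof of Corollary~2.2 there; your truncation-plus-CCK sketch is precisely how those cited results are proved, so you are unpacking the same argument rather than taking a different route.
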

\begin{proof}
Essentially, this Lemma is obtained from the high dimensional central limit theorem \citep[Proposition 2.1]{chernozhukov2017central} with generalizing for $\beta \in (0,1]$ and suitable normalization such that the constant $\ubar{\sigma}^2$ is explicit in the bound. Generalizing for $\beta \in (0,1]$ is established in \citet[Lemma A.8]{song2019approximating} and the proof with suitable normalization is implicit in the proof of Corollary 2.2 in \citet{song2019approximating}.
\end{proof}

The next Lemma verifies Gaussian approximation of complete $U$-statistics. It is proved in \citet{song2019approximating}. Since the Lemma is essential for the proof of our main theorem, we state it here.

\begin{lem}[\citeauthor{song2019approximating}, \citeyear{song2019approximating}, Corollary 2.2] \label{lem:CLT-complete-Ustat}
Assume the sub-Weibull condition~\ref{C2} is satisfied. Moreover, assume~\ref{C4} holds with $p_1 = p$ and~\ref{C6} holds. Then there is a constant $C_{\beta} > 0$ only depending on $\beta$ such that 
\[
    \sup_{R \in \mathbb{R}^p_{\mathrm{re}}} \left|P\left(\sqrt{n}(U_n - \mu) \in R\right) - P(mY_g \in R)\right| \leq C_{\beta} \left(\frac{m^2 D_n^2 \log(pn)^{1+6/\beta}}{(\ubar{\sigma}_g^2 \wedge 1) \, n}\right)^{1/6},
\]
where $Y_g \sim N_p(0, \Gamma_g)$.
\end{lem}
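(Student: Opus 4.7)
I will use the classical Hoeffding projection. Write $U_n - \mu = L_n + R_n$, where $L_n = (m/n)\sum_{i=1}^n(g(X_i)-\mu)$ is the H\'ajek projection and $R_n = \sum_{c=2}^m \binom{m}{c} H_n^{(c)}$ collects the completely degenerate $c$-th order Hoeffding components. The plan is to approximate $\sqrt{n}L_n$ by $mY_g$ via the high-dimensional CLT for independent sums (Lemma~\ref{lem:CLT}) and to pass the remainder $\sqrt{n}R_n$ through this approximation by Nazarov's anti-concentration inequality after controlling $\|\sqrt{n}R_n\|_\infty$ with high probability.

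\textbf{Gaussian step.} Apply Lemma~\ref{lem:CLT} to the centered i.i.d.~vectors $X_i' := m(g(X_i)-\mu)$. Assumption \ref{C4} with $p_1=p$ gives $\mathbb{E}[(X_{ij}')^2] \geq m^2\ubar{\sigma}_g^2$; \ref{C6} supplies the polynomial moment condition with effective parameter $mD_n$; and \ref{C2} combined with Jensen's inequality for conditional expectations (which ensures $\|g_j(X_1)-\mu_j\|_{\psi_\beta} \leq \|h_j(X_1^m)-\mu_j\|_{\psi_\beta}$) gives $\|X_{ij}'\|_{\psi_\beta} \leq mD_n$. Taking the effective lower variance as $m^2(\ubar{\sigma}_g^2\wedge 1)$ to cover the regimes $\ubar{\sigma}_g^2\leq 1$ and $\ubar{\sigma}_g^2>1$ uniformly, Lemma~\ref{lem:CLT} delivers
\[
\sup_{R \in \mathbb{R}^p_{\mathrm{re}}}\bigl|P(\sqrt{n}L_n \in R) - P(mY_g \in R)\bigr| \leq C_\beta\Bigl(\tfrac{m^2 D_n^2 \log(pn)^{1+6/\beta}}{(\ubar{\sigma}_g^2\wedge 1)\,n}\Bigr)^{1/6}.
\]

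\textbf{Remainder, anti-concentration, and conclusion.} A sub-Weibull maximal inequality for completely degenerate higher-order $U$-statistics (of Adamczak / de la Pe\~na--Gin\'e type, extended to $\beta\in(0,1]$ via decoupling and Rosenthal-type bounds for sub-Weibull sums) yields $\|\sqrt{n}R_n\|_\infty \leq C_\beta\, m^2 D_n \log(pn)^{a_\beta}/\sqrt{n}$ on an event of probability $\geq 1 - 1/n$; the $m^2$ arises from the $\binom{m}{2}$ coefficient of the dominant second-order component, higher orders being negligible. For any $R=[a,b]\in\mathbb{R}^p_{\mathrm{re}}$ and $t>0$ the inclusion $\{\sqrt{n}(U_n-\mu) \in R\} \subseteq \{\sqrt{n}L_n \in R_t\} \cup \{\|\sqrt{n}R_n\|_\infty > t\}$ (and its lower-bound analogue with $R_{-t}$), together with the Gaussian approximation applied at $R_{\pm t}$, reduces the task to controlling $P(mY_g \in R_t\setminus R_{-t})$. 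By Nazarov's anti-concentration inequality this is bounded by $C\,t\sqrt{\log p}/(m\ubar{\sigma}_g)$, since the coordinates of $mY_g$ have standard deviation at least $m\ubar{\sigma}_g$. Choosing $t$ of order $m^2 D_n \log(pn)^{a_\beta}/\sqrt{n}$ makes both the remainder tail and the anti-concentration contribution of order $n^{-1/2}$ (up to logs), which is dominated by the $n^{-1/6}$ Gaussian approximation rate of the first step, yielding the stated bound.

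\textbf{Main obstacle.} The principal technical hurdle is the sub-Weibull maximal inequality for completely degenerate higher-order $U$-statistics, with sharp tracking of the $m$, $D_n$, and $\log(pn)$ dependence, valid for arbitrary $\beta\in(0,1]$. Classical results cover only sub-Gaussian and sub-exponential regimes; the extension here rests on decoupling for degenerate $U$-statistics combined with Rosenthal-type inequalities for sub-Weibull sums, which are carried out in \citet{song2019approximating} and already underlie Lemma~\ref{lem:CLT}. A secondary bookkeeping issue is the presence of $\ubar{\sigma}_g^2\wedge 1$ in the final bound, which is what permits a single, uniform statement across the small- and large-variance regimes.
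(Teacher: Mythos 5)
The paper gives no proof of this lemma: it is imported verbatim from \citet{song2019approximating} (their Corollary 2.2), and the paper's ``proof'' consists of that citation. Your outline reconstructs essentially the argument behind the cited result, which is also the template the paper itself follows for Proposition \ref{prop:completeUstat}: Hoeffding decomposition, Lemma \ref{lem:CLT} applied to the H\'ajek projection, control of the degenerate remainder, and Nazarov's anti-concentration for $mY_g$. Your Gaussian step is correct as stated: with $X_i'=m(g(X_i)-\mu)$, condition \ref{C6} gives the moment condition with effective parameter $mD_n$, \ref{C2} plus Jensen gives $\|X_{ij}'\|_{\psi_\beta}\leq mD_n$, \ref{C4} with $p_1=p$ gives lower variance $m^2\ubar{\sigma}_g^2$, and replacing $\ubar{\sigma}_g^2$ by $\ubar{\sigma}_g^2\wedge 1$ (and dropping the cancellation of $m^2$) only enlarges the bound.

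One step is mis-stated, although the architecture survives. For the degenerate remainder $R_n = U_n-\mu-(m/n)\sum_{i=1}^n(g(X_i)-\mu)$, the tool actually available (Theorem 5.1 of \citet{song2019approximating}, the same one the paper invokes in the proof of Proposition \ref{prop:completeUstat}) is a bound on $\mathbb{E}\|\sqrt{n}R_n\|_{\infty}$ of order $m^2D_n\log(p)^{1+1/\beta}/\sqrt{n}$. Via Markov, $P(\|\sqrt{n}R_n\|_{\infty}>t)\leq C_\beta m^2D_n\log(p)^{1+1/\beta}/(t\sqrt{n})$, and with your choice $t\asymp m^2D_n\log(pn)^{a_\beta}/\sqrt{n}$ this failure probability is only of constant order (up to logarithms), not $\leq 1/n$; a genuine $1-1/n$ event at that scale would require an exponential-type concentration inequality for degenerate sub-Weibull $U$-statistics, which is neither what is cited nor what is needed. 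The standard repair, and what the cited proof (and the paper's Proposition \ref{prop:completeUstat}) does, is to keep the Markov tail probability as an additive error term and optimize $t$ jointly against the Nazarov contribution $C\,t\sqrt{\log p}/(m\ubar{\sigma}_g)$; the balanced contribution is then a $1/4$- or $1/2$-power expression, and it is dominated by the stated $1/6$-power bound only after the usual reduction ``assume the right-hand side is at most $1$, otherwise enlarge $C_\beta$ and the claim is trivial'' --- a reduction you implicitly rely on, but do not state, when asserting that the $n^{-1/2}$ (up to logs) terms are dominated by the $n^{-1/6}$ rate.
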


\section{Sub-Weibull Random Variables} \label{sec:sub-Weibull}

In this section we collect important properties of sub-Weibull random variables. For $0 < \beta < 1$ we only have that $\|\cdot\|_{\psi_{\beta}}$ is a quasinorm, i.e. the triangle inequality does not hold. Nevertheless, we have the following result which is a substitute.

\begin{lem} \label{lem:triangle-inequality}
For any $0<\beta<1$ and any random variables $X_1, \ldots, X_n$, we have
$$\left\|\sum_{i=1}^{n}X_i\right\|_{\psi_{\beta}} \leq n^{\frac{1}{\beta}} \sum_{i=1}^{n} \|X_i\|_{\psi_{\beta}}.$$
\end{lem}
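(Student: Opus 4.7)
The plan is to exploit two elementary facts: subadditivity of the map $x\mapsto x^\beta$ for $\beta\in(0,1]$, and Jensen's inequality for the convex function $\exp$. After discarding indices with $\|X_i\|_{\psi_\beta}=0$ (the corresponding $X_i$ vanish a.s., so they contribute nothing to either side), set $t_i=\|X_i\|_{\psi_\beta}$ and $T=n^{1/\beta}\sum_{i=1}^n t_i$. It suffices to verify that $\mathbb{E}\,\psi_\beta(|\sum_{i=1}^n X_i|/T)\leq 1$, since by the definition of $\|\cdot\|_{\psi_\beta}$ as an infimum this yields $\|\sum_{i=1}^n X_i\|_{\psi_\beta}\leq T$, which is exactly the claimed bound.

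First, I would combine the triangle inequality $|\sum_{i=1}^n X_i|\leq \sum_{i=1}^n|X_i|$ with the subadditivity inequality $(a_1+\cdots+a_n)^\beta\leq a_1^\beta+\cdots+a_n^\beta$, valid for $\beta\in(0,1]$ and $a_i\geq 0$, to obtain the pointwise bound $(|\sum_{i=1}^n X_i|/T)^\beta\leq \sum_{i=1}^n(|X_i|/T)^\beta$. Exponentiating and then applying Jensen's inequality (equivalently AM--GM) to the convex function $\exp$ gives
\[
\exp\!\bigl((|\textstyle\sum_{i=1}^n X_i|/T)^\beta\bigr)\;\leq\;\prod_{i=1}^n \exp\bigl((|X_i|/T)^\beta\bigr)\;\leq\;\frac{1}{n}\sum_{i=1}^n \exp\bigl(n(|X_i|/T)^\beta\bigr).
\]

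Next, rewriting $n(|X_i|/T)^\beta=(n^{1/\beta}|X_i|/T)^\beta$, taking expectations, and subtracting $1$ from both sides of the resulting inequality yields
\[
\mathbb{E}\,\psi_\beta\bigl(|\textstyle\sum_{i=1}^n X_i|/T\bigr)\;\leq\;\frac{1}{n}\sum_{i=1}^n \mathbb{E}\,\psi_\beta\bigl(n^{1/\beta}|X_i|/T\bigr).
\]
By the choice of $T$, we have $n^{1/\beta}/T=1/\sum_j t_j\leq 1/t_i$ for each $i$, so by monotonicity of $\psi_\beta$ each summand on the right is bounded by $\mathbb{E}\,\psi_\beta(|X_i|/t_i)$. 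A routine monotone-convergence argument (or, equivalently, replacing $t_i$ by $t_i+\epsilon$, using the infimum definition of $\|\cdot\|_{\psi_\beta}$, and letting $\epsilon\downarrow 0$) then shows $\mathbb{E}\,\psi_\beta(|X_i|/t_i)\leq 1$ for every $i$; averaging gives $\mathbb{E}\,\psi_\beta(|\sum_{i=1}^n X_i|/T)\leq 1$ as needed.

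The only real obstacle is that for $\beta<1$ the Orlicz functional is merely a quasinorm, so the usual Orlicz-norm triangle inequality is unavailable; the slack introduced by the Jensen step $\exp(\sum_i a_i)\leq n^{-1}\sum_i \exp(n a_i)$ is precisely what produces the factor $n^{1/\beta}$ on the right-hand side of the claim.
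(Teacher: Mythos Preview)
Your proof is correct and follows essentially the same approach as the paper: both arguments combine the subadditivity of $x\mapsto x^\beta$, the AM--GM inequality applied to $\exp$, and the trivial bound $\sum_j t_j\geq t_i$ to reduce to the defining inequality $\mathbb{E}\,\psi_\beta(|X_i|/t_i)\leq 1$. The only difference is the order in which the AM--GM step and the bound $\sum_j t_j\geq t_i$ are applied, and you are slightly more careful about the edge cases $t_i=0$ and the attainment of the infimum.
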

\begin{proof} 
For $n=2$ the claim is equal to Lemma A.3 in \citet{goetze2021concentration}. We generalize the statement for arbitrary $n\in \mathbb{N}$ using similar arguments. Let $L_i=\|X_i\|_{\psi_{\beta}}$ and define $t \defeq n^{\frac{1}{\beta}} \sum_{i=1}^{n} L_i$. We have 
\begin{eqnarray*}
&& \mathbb{E}\left[\exp\left(\frac{\left| \sum_{i=1}^n X_i \right|^{\beta}}{t^{\beta}} \right)\right] \\
&\leq& \mathbb{E}\left[\exp\left(\frac{\left( \sum_{i=1}^n |X_i| \right)^{\beta}}{t^{\beta}} \right)\right] \overset{\textrm{(1)}}{\leq} \mathbb{E}\left[\exp\left(\frac{\sum_{i=1}^n |X_i|^{\beta}}{n\left(\sum_{i=1}^{n} L_i\right)^{\beta}}  \right)\right]\\
&=& \mathbb{E}\left[\prod_{i=1}^n \exp\left(\frac{|X_i|^{\beta}}{n\left(\sum_{i=1}^{n} L_i\right)^{\beta}}\right)\right]\\
&\leq&\mathbb{E}\left[\prod_{i=1}^n \exp\left(\frac{|X_i|^{\beta}}{ n L_i^{\beta}}\right)\right] = \mathbb{E}\left[\prod_{i=1}^n \exp\left(\frac{|X_i|^{\beta}}{ L_i^{\beta}}\right)^{\frac{1}{n}}\right]\\
&\overset{\textrm{(2)}}{\leq}& \mathbb{E}\left[\frac{1}{n} \sum_{i=1}^n \exp\left(\frac{|X_i|^{\beta}}{ L_i^{\beta}}\right)\right]
= \frac{1}{n} \sum_{i=1}^n \underbrace{\mathbb{E}\left[ \exp\left(\frac{|X_i|^{\beta}}{ L_i^{\beta}}\right)\right]}_{\leq 2 \textrm{ for all } i} \leq 2.
\end{eqnarray*}
Here, (1) follows from the inequality $\left( \sum_{i=1}^{n}x_i\right)^{\beta} \leq \sum_{i=1}^{n}x_i^{\beta}$ valid for all $x_1, \ldots, x_n \geq 0$ and $\beta \in [0,1]$, and (2) from $\prod_{i=1}^n a_i^{\frac{1}{n}} \leq \frac{1}{n} \sum_{i=1}^n a_i$ for all $a_1, \ldots, a_n \geq 0$. The latter inequality is known as the inequality of geometric and arithmetic means. By the calculation and the definition of the sub-Weibull norm we have 
$$\left\|\sum_{i=1}^{n}X_i\right\|_{\psi_{\beta}} \leq t = n^{\frac{1}{\beta}} \sum_{i=1}^{n} \|X_i\|_{\psi_{\beta}}.$$
\end{proof}

For products of sub-Weibull random variables we recall the following useful result from \citet{kuchibhotla2022moving}.

\begin{lem}\label{lem:product-weibull} 
Let $X_1, \ldots, X_n$ be random variables satisfying $\|X_i\|_{\psi_{\beta_i}} < \infty$ for some $\beta_i > 0$, $i=1, \ldots, n$. Then
$$\left\|\prod_{i=1}^{n} X_i\right\|_{\psi_{\beta}} \leq \prod_{i=1}^{n} \|X_i\|_{\psi_{\beta_i}} \quad \textrm{ where } \frac{1}{\beta}= \sum_{i=1}^{n} \frac{1}{\beta_i}.$$
\end{lem}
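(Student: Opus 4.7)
The plan is to unwind the definition of the sub-Weibull (quasi)norm and reduce the claim to a pointwise inequality. Set $L_i = \|X_i\|_{\psi_{\beta_i}}$, so that by definition $\mathbb{E}[\exp(|X_i|^{\beta_i}/L_i^{\beta_i})] \leq 2$ for each $i$. Let $t = \prod_{i=1}^n L_i$. By definition of $\|\cdot\|_{\psi_\beta}$, it suffices to prove
$$
\mathbb{E}\!\left[\exp\!\left(\frac{|\prod_{i=1}^n X_i|^{\beta}}{t^{\beta}}\right)\right] \;=\; \mathbb{E}\!\left[\exp\!\left(\prod_{i=1}^n (|X_i|/L_i)^{\beta}\right)\right] \;\leq\; 2.
$$

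The key algebraic step is to choose exponents $p_i = \beta_i/\beta$. Since $\tfrac{1}{\beta} = \sum_{i=1}^n \tfrac{1}{\beta_i}$, these satisfy $\sum_{i=1}^n 1/p_i = 1$, so Young's inequality for products yields
$$
\prod_{i=1}^n (|X_i|/L_i)^{\beta} \;\leq\; \sum_{i=1}^n \frac{1}{p_i}\,(|X_i|/L_i)^{\beta p_i} \;=\; \sum_{i=1}^n \frac{1}{p_i}\,(|X_i|/L_i)^{\beta_i}.
$$
Exponentiating (and using monotonicity of $\exp$) then factors the bound:
$$
\exp\!\left(\prod_{i=1}^n (|X_i|/L_i)^{\beta}\right) \;\leq\; \prod_{i=1}^n \exp\!\left((|X_i|/L_i)^{\beta_i}\right)^{1/p_i}.
$$

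To finish, take expectations and apply the generalized Hölder inequality with conjugate exponents $p_1,\ldots,p_n$ (or, equivalently, Young's inequality again on the random variables $Y_i = \exp((|X_i|/L_i)^{\beta_i})$):
$$
\mathbb{E}\!\left[\prod_{i=1}^n Y_i^{1/p_i}\right] \;\leq\; \prod_{i=1}^n \mathbb{E}[Y_i]^{1/p_i} \;\leq\; \prod_{i=1}^n 2^{1/p_i} \;=\; 2^{\sum_i 1/p_i} \;=\; 2.
$$
Combining the two displays gives $\mathbb{E}[\psi_\beta(|\prod_i X_i|/t)] \leq 1$, so by definition $\|\prod_i X_i\|_{\psi_\beta} \leq t = \prod_i L_i$, which is the claim.

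The only nontrivial choice is matching the exponents: $p_i = \beta_i/\beta$ is dictated by the requirement $\sum 1/p_i = 1$ coming from the hypothesis $1/\beta = \sum 1/\beta_i$, and this precise matching is what turns the pointwise Young bound into a sum of the individual sub-Weibull exponents. No regularity on $\beta$ (such as $\beta \geq 1$) is needed since the argument uses only monotonicity of $\exp$ and Hölder/Young, which are insensitive to whether $\|\cdot\|_{\psi_\beta}$ is a norm or merely a quasinorm.
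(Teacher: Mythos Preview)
Your proof is correct: Young's inequality with exponents $p_i=\beta_i/\beta$ followed by the generalized H\"older inequality is the standard argument, and every step checks out (in particular $p_i>1$ since $\beta<\beta_i$, and the infimum in the definition of $\|\cdot\|_{\psi_{\beta_i}}$ is attained by monotone convergence, so $\mathbb{E}[\exp(|X_i|^{\beta_i}/L_i^{\beta_i})]\le 2$ is legitimate). The paper itself does not prove this lemma but simply cites \citet[Proposition D.2]{kuchibhotla2020moving}; your self-contained argument is essentially the one found there.
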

\begin{proof}
See \citet[Proposition D.2]{kuchibhotla2022moving}. 
\end{proof}

Combining Lemma~\ref{lem:triangle-inequality} and Lemma~\ref{lem:product-weibull} we get a bound on polynomial functions in random variables. This is especially useful for showing that the estimators $h_j(X_1^m)$, $j=1, \ldots, p$ are sub-Weibull.

\begin{lem} \label{lem:bound-polynom}
Let $\beta>0$ and let $X_1, \ldots, X_n$ be (possibly dependent) random variables satisfying $\|X_i\|_{\psi_{\beta}} \leq C$ for some constant $C>0$. Let $f:\mathbb{R}^n \rightarrow \mathbb{R}$ be a polynomial of total degree $s$ with $t$ terms. Then we have
$$\|f(X_1, \ldots, X_n)\|_{\psi_{\frac{\beta}{s}}}  \leq A \ \begin{cases}
    t \ C^s & \frac{\beta}{s} \geq 1 \\
    t^{\frac{s}{\beta}+1} \ C^s& \frac{\beta}{s} < 1,
\end{cases}$$
where $A$ is the maximum over all absolute values of the coefficients in the polynomial $f$.
\end{lem}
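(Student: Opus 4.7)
The plan is to decompose $f$ into its individual monomials, control each one separately via Lemma \ref{lem:product-weibull}, and then aggregate using the appropriate (quasi-)triangle inequality for the $\psi_{\beta/s}$ norm. Concretely, I would write
\[
f(X_1,\ldots,X_n) \;=\; \sum_{k=1}^t a_k M_k, \qquad M_k \;=\; \prod_{j=1}^{d_k} X_{i_j^{(k)}},
\]
where each monomial $M_k$ has degree $d_k \leq s$ and coefficient $|a_k| \leq A$.

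Next I would bound each monomial in the $\psi_{\beta/s}$ norm. A direct application of Lemma \ref{lem:product-weibull} with $\beta_j = \beta$ for every factor gives $\|M_k\|_{\psi_{\beta/d_k}} \leq C^{d_k}$, but this is a $\psi_{\beta/d_k}$ rather than a $\psi_{\beta/s}$ bound. To align the norms, I would pad $M_k$ to exactly $s$ factors by rewriting $M_k = X_{i_1^{(k)}} \cdots X_{i_{d_k}^{(k)}} \cdot 1 \cdots 1$ with $s - d_k$ copies of the constant random variable $1$, and then invoke Lemma \ref{lem:product-weibull} with $\beta_j = \beta$ for all $s$ factors. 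The arithmetic $s/\beta = \sum_{j=1}^s 1/\beta$ matches exactly, so the resulting norm is $\psi_{\beta/s}$ and $\|M_k\|_{\psi_{\beta/s}} \leq C^{d_k} \cdot \|1\|_{\psi_\beta}^{s-d_k}$. Since $\|1\|_{\psi_\beta} = (\log 2)^{-1/\beta}$ is a fixed $\beta$-dependent constant, this yields $\|M_k\|_{\psi_{\beta/s}} \leq c_\beta\, C^s$, with any extra $\beta$-dependent factors absorbed into $c_\beta$.

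Finally I would split into the two regimes based on whether $\psi_{\beta/s}$ is an honest norm. When $\beta/s \geq 1$, the standard triangle inequality applied to $\sum_k a_k M_k$ gives
\[
\|f\|_{\psi_{\beta/s}} \;\leq\; \sum_{k=1}^t |a_k|\, \|M_k\|_{\psi_{\beta/s}} \;\leq\; A \cdot t \cdot c_\beta\, C^s,
\]
matching the stated bound. When $\beta/s < 1$, Lemma \ref{lem:triangle-inequality} applied with $n = t$ and exponent $\beta/s$ produces a quasi-triangle inequality with the extra prefactor $t^{1/(\beta/s)} = t^{s/\beta}$, so
\[
\|f\|_{\psi_{\beta/s}} \;\leq\; t^{s/\beta} \sum_{k=1}^t |a_k|\, \|M_k\|_{\psi_{\beta/s}} \;\leq\; A \cdot t^{s/\beta+1} \cdot c_\beta\, C^s,
\]
exactly reproducing the exponent $s/\beta + 1$ in the Lemma statement.

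The only mildly delicate step is bridging the gap between the $\psi_{\beta/d_k}$ estimate that Lemma \ref{lem:product-weibull} produces naturally and the target $\psi_{\beta/s}$ norm when a monomial has degree strictly less than $s$; the padding with constant factors handles this cleanly, at the cost of only a $\beta$-dependent multiplicative constant that the statement of the Lemma implicitly absorbs.
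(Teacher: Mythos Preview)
Your approach is essentially identical to the paper's: bound each monomial via Lemma~\ref{lem:product-weibull} and then aggregate with the (quasi-)triangle inequality, splitting on whether $\beta/s \geq 1$. The paper's proof only treats monomials of degree exactly $s$ and is silent on lower-degree terms, so your padding-with-ones step is a genuine addition that the paper omits; however, note that the extra factor $\|1\|_{\psi_\beta}^{s-d_k}=(\log 2)^{-(s-d_k)/\beta}$ you pick up is \emph{not} present in the lemma's stated bound, so your claim that it is ``implicitly absorbed'' is not accurate as the lemma is written---the stated inequality is only clean for homogeneous polynomials (which covers all of the paper's applications), and for general $f$ either proof yields the bound only up to a $\beta$-dependent constant.
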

\begin{proof}
For random variables $X_1, \ldots, X_s$  we have by Lemma~\ref{lem:product-weibull}
$$
\left\|\prod_{i=1}^{s} X_i\right\|_{\psi_{\frac{\beta}{s}}} \leq \|X_i\|_{\psi_{\beta}}^s.
$$
Hence, for $f(X_1, \ldots, X_n)$ being a monomial of degree $s$, it follows the inequality
$$\|f(X_1, \ldots, X_n)\|_{\psi_{\frac{\beta}{s}}} \leq A \ C^s.$$
To prove the general case where $f(X_1, \ldots, X_n)$ is a polynomial with $t$ terms we make a case distinction. For $\frac{\beta}{s} \geq 1$ we have that $\|\cdot\|_{\psi_{\frac{\beta}{s}}}$ is a norm and the triangle inequality holds. For $\frac{\beta}{s} < 1$ the claim follows immediately from Lemma ~\ref{lem:triangle-inequality}. 
\end{proof}

The next well-known Lemma states that sub-Weibull random variables satisfy a stronger moment condition than the existence of all finite $q$-th moments.

\begin{lem} \label{lem:weibull-moments}
For any $\beta > 0$ and any random variable $X$ we have
$$d_{\beta} \sup_{q \geq 1} \frac{\|X\|_q}{q^{\frac{1}{\beta}}} \leq \|X\|_{\psi_{\beta}} \leq D_{\beta} \sup_{q \geq 1} \frac{\|X\|_q}{q^{\frac{1}{\beta}}},$$
where $d_{\beta} = \begin{cases}
    \frac{1}{2} & \beta \geq 1 \\
    \frac{(\beta e)^{1/\beta}}{2} & \beta < 1
\end{cases}$ and $D_{\beta} = \begin{cases}
    2e & \beta \geq 1 \\
    2e^{1/\beta} & \beta < 1.
\end{cases}$
\end{lem}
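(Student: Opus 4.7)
The strategy is to pass between the sub-Weibull norm and the growth rate of the $L^q$ norms via the Taylor expansion $\psi_\beta(x) = \exp(x^\beta) - 1 = \sum_{k=1}^\infty x^{k\beta}/k!$, matching individual terms in the series against moment bounds.

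For the lower bound, I will set $K = \|X\|_{\psi_\beta}$ (finite, else the statement is trivial) so that $\mathbb{E}[\psi_\beta(|X|/K)] \leq 1$ by the definition of the sub-Weibull norm. Interchanging the expectation with the non-negative series (Tonelli), I get $\sum_{k=1}^\infty \mathbb{E}[|X|^{k\beta}]/(k!\,K^{k\beta}) \leq 1$, and since all terms are non-negative each individual summand is at most $1$, yielding $\mathbb{E}[|X|^{k\beta}] \leq k!\,K^{k\beta}$ for every integer $k \geq 1$. Taking $(k\beta)$-th roots and applying Stirling in the form $k! \leq e\sqrt{k}\,(k/e)^k$ gives $\|X\|_{k\beta} \leq c_\beta\, k^{1/\beta}\, K$ with the constant depending on the regime. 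To upgrade this from the arithmetic progression $\{k\beta\}$ to all real $q \geq 1$, I will choose $k = \lceil q/\beta \rceil$ so that $k\beta \geq q$, use monotonicity of $L^q$-norms on the probability space to write $\|X\|_q \leq \|X\|_{k\beta}$, and observe that $k \leq q/\beta + 1 \leq 2q/\beta$ when $q \geq \beta$, which absorbs cleanly into the $q^{1/\beta}$ factor. The residual case $1 \leq q < \beta$ (only relevant if $\beta > 1$) follows from the single bound $\|X\|_q \leq \|X\|_\beta \leq K$.

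For the upper bound, let $M = \sup_{q \geq 1} \|X\|_q/q^{1/\beta}$ and seek a $t$ of order $D_\beta M$ such that $\mathbb{E}[\psi_\beta(|X|/t)] \leq 1$. Expanding termwise, I bound $\mathbb{E}[|X|^{k\beta}]/(k!\,t^{k\beta})$ separately depending on whether $k\beta \geq 1$ or $k\beta < 1$. For indices with $k\beta \geq 1$, the defining property of $M$ gives $\mathbb{E}[|X|^{k\beta}] \leq M^{k\beta}(k\beta)^{k}$, and Stirling's lower bound $k! \geq (k/e)^k$ collapses each term to $(\beta e M^\beta/t^\beta)^k$, producing a geometric series summable and arbitrarily small once $t^\beta > 2\beta e M^\beta$. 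For $\beta \geq 1$ every $k \geq 1$ satisfies $k\beta \geq 1$, so this alone closes the argument and yields $t \leq (2\beta e)^{1/\beta} M \leq 2eM$ using that $2\beta e \leq (2e)^\beta$ for $\beta \geq 1$. For $\beta < 1$, the finitely many indices $1 \leq k < 1/\beta$ with $k\beta < 1$ are handled separately via Jensen's inequality, $\|X\|_{k\beta} \leq \|X\|_1 \leq M$, contributing $\sum_{k < 1/\beta}(M/t)^{k\beta}/k!$, which shrinks to zero as $t/M$ grows; taking $t = 2e^{1/\beta}M$ makes both contributions together bounded by $1$.

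The main obstacle is bookkeeping to match the specific constants $d_\beta$ and $D_\beta$, especially in the $\beta < 1$ regime where the series naturally splits at $k = \lceil 1/\beta \rceil$ and the ``head'' (Jensen-type) and ``tail'' (geometric) contributions must be balanced by a single choice of $t$. The conceptual content, however, is entirely elementary: the $\psi_\beta$-integrability of $X$ and the growth rate $\|X\|_q \asymp q^{1/\beta}$ encode the same tail information, and Stirling's formula translates faithfully between them.
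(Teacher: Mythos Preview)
The paper does not actually prove this lemma; it simply cites \citet[Lemma A.2]{goetze2021concentration}. Your proposal therefore goes well beyond what the paper offers, and the approach you outline --- expanding $\psi_\beta$ as a power series and trading terms against moments via Stirling --- is the standard one for this equivalence. Your upper-bound argument is correct and, with the head/tail split you describe for $\beta<1$, does recover the stated $D_\beta$ in both regimes (one can check that with $t=2e^{1/\beta}M$ the head contributes at most $e^{1/e}-1<0.445$ and the geometric tail with ratio $\beta/2^\beta\leq 1/2$ starting at $k_0=\lceil 1/\beta\rceil\geq 2$ contributes at most $1/2$).

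For the lower bound, your route through integer $k=\lceil q/\beta\rceil$ and $L^p$-monotonicity is sound but, as you anticipated, slightly lossy: the rounding $k\leq 2q/\beta$ combined with the factor $(e\sqrt{k})^{1/(k\beta)}$ (which is $e^{1/\beta}$ at $k=1$) yields a constant of order $(2/\beta)^{1/\beta}$ rather than the sharper $2(\beta e)^{-1/\beta}$ when $\beta<1$. To hit $d_\beta$ exactly one bypasses the integer grid via the pointwise inequality $u^a\leq (a/e)^a e^{u}$, valid for all $u,a>0$: taking $u=(|X|/K)^\beta$ and $a=q/\beta$ gives
\[
\mathbb{E}\big[|X/K|^q\big]\;\leq\;\Big(\tfrac{q}{\beta e}\Big)^{q/\beta}\,\mathbb{E}\big[e^{(|X|/K)^\beta}\big]\;\leq\;2\Big(\tfrac{q}{\beta e}\Big)^{q/\beta},
\]
hence $\|X\|_q\leq 2^{1/q}(\beta e)^{-1/\beta}q^{1/\beta}K$, which delivers $d_\beta$ on the nose. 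This is a cosmetic sharpening; your argument already establishes the equivalence with \emph{some} constants, which is all the paper ever uses.
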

\begin{proof}
See for example \citet[Lemma A.2]{goetze2021concentration}.
\end{proof}

According to Lemma~\ref{lem:weibull-moments}, the parameter $\beta$ measures how fast $\|X\|_q$ increases with $q$. For small $\beta$, the norms $\|X\|_q$ are allowed to increase fast and viceversa. In particular, Lemma~\ref{lem:weibull-moments} implies that for all $q \geq 1$ there is a constant $C_{\beta}>0$ only depending on $\beta$ such that $\|X\|_q \leq C_{\beta} \|X\|_{\psi_{\beta}}$. 
 In the special case of polynomials in Gaussian variables, the sub-Weibull norm is always bounded by the standard deviation. This follows from the hypercontractivity property of polynomials in Gaussian random variables.
\begin{lem} \label{lem:hypercontractivity}
    Let $\mathbf{X}=(X_1, \ldots, X_r) \sim N_r(0, \Sigma)$ be an $r$-variate centered Gaussian random vector, and let $f: \mathbb{R}^r \rightarrow \mathbb{R}$ be a polynomial of total degree $s$. If $0 < \beta \leq 2/s$, then 
    \[
        \|f(\mathbf{X})]\|_{\psi_{\beta}} \leq C_{\beta, s} \, \|f(\mathbf{X})\|_2,
    \]
    where $C_{\beta, s}$ is a constant depending only on $\beta$ and $s$.
\end{lem}
\begin{proof}
    By Lemma \ref{lem:weibull-moments}, we have the inequality 
    \begin{equation} \label{eq:bound-by-q-norm}
        \|f(\mathbf{X})\|_{\psi_{\beta}} \leq C_{\beta} \, \sup_{q \geq 1} \frac{\|f(\mathbf{X})\|_q}{q^{1/\beta}}.
    \end{equation}
    Now, for all $q \geq 2$, by the hypercontractivity property of polynomials in Gaussian random variables (Theorem 3.2.10 in \citeauthor{pena1999decoupling}, \citeyear{pena1999decoupling} and Lemma 2.2 in \citeauthor{leung2024singularityagnostic}, \citeyear{leung2024singularityagnostic}),
    \begin{equation}\label{eq:bound-from-hypercontr}
        \|f(\mathbf{X})\|_q \leq C_s \, q^{s/2} \, \|f(\mathbf{X})\|_2.
    \end{equation}
    Since $\|f(\mathbf{X})\|_1 \leq \|f(\mathbf{X})\|_2$, we obtain by combining \eqref{eq:bound-by-q-norm} and \eqref{eq:bound-from-hypercontr} that
    \[
        \|f(\mathbf{X})\|_{\psi_{\beta}} \leq C_{\beta, s} \, \|f(\mathbf{X})\|_2 \, \sup_{q \geq 2} q^{\frac{s}{2} - \frac{1}{\beta}}.
    \]
    To conclude the proof, we note that $s/2 \leq 1/\beta$ holds due to our assumptions, which implies that the supremum is achieved for $q=2$.
\end{proof}

Next, we cite a result from \citet{chen2019randomized} that gives a sub-Weibull bound on the maximum of multiple  random variables in relation to the bound on the individual variables. 

\begin{lem} \label{lem:maximum-weibull} 
Let $\beta > 0$  and let $X_1, \ldots, X_n$ be (possibly dependent) random variables such that $\|X_i\|_{\psi_{\beta}} < \infty$ for all $i=1, \ldots, n$. Then for $n \geq 2$
$$
\left\|\max_{1\leq i \leq n} |X_i|\right\|_{\psi_{\beta}} 
\leq   C_{\beta} \log(n)^{\frac{1}{\beta}} \max_{1 \leq i \leq n}\|X_i\|_{\psi_{\beta}},
$$
where $C_{\beta}$ is a constant depending only on $\beta$.
\end{lem}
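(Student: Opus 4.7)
\textbf{Proof proposal for Lemma \ref{lem:maximum-weibull}.} Let $M = \max_{1 \leq i \leq n}\|X_i\|_{\psi_\beta}$. The plan is to pass to the $L^q$ moment characterization provided by Lemma \ref{lem:weibull-moments}, take a crude union bound in $L^q$, optimize over $q$ to extract the logarithmic dependence on $n$, and then convert back to the sub-Weibull quasinorm.

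First, applying Lemma \ref{lem:weibull-moments} to each $X_i$ yields
\[
\|X_i\|_q \;\leq\; d_\beta^{-1}\, q^{1/\beta}\, \|X_i\|_{\psi_\beta} \;\leq\; d_\beta^{-1}\, q^{1/\beta}\, M
\]
for every $q \geq 1$. Using the trivial pointwise inequality $\max_i |X_i|^q \leq \sum_i |X_i|^q$ and taking expectations gives
\[
\bigl\|\max_{1 \leq i \leq n}|X_i|\bigr\|_q \;\leq\; n^{1/q} \max_{1 \leq i \leq n}\|X_i\|_q \;\leq\; d_\beta^{-1}\, n^{1/q}\, q^{1/\beta}\, M.
\]

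The next step is to convert the factor $n^{1/q}$ into a factor $\log(n)^{1/\beta}$ (modulo an extra $q^{1/\beta}$) by choosing $q$ adaptively. For $q \geq \log n$ we have $n^{1/q} \leq e$, and the bound becomes $e\, d_\beta^{-1}\, q^{1/\beta}\, M$. For $1 \leq q < \log n$ we use the monotonicity $\|\cdot\|_q \leq \|\cdot\|_{\log n}$ and apply the previous case at $q = \log n$ to obtain $\bigl\|\max_i|X_i|\bigr\|_q \leq e\, d_\beta^{-1}\, (\log n)^{1/\beta}\, M$. Combining the two regimes and using $\max(q,\log n) \leq q + \log n \leq 2 q \log n$ (valid for $n \geq e$), one concludes
\[
\bigl\|\max_{1 \leq i \leq n}|X_i|\bigr\|_q \;\leq\; e\, d_\beta^{-1}\, (2 \log n)^{1/\beta}\, q^{1/\beta}\, M
\]
for every $q \geq 1$.

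Finally, applying the other side of Lemma \ref{lem:weibull-moments},
\[
\bigl\|\max_{1 \leq i \leq n}|X_i|\bigr\|_{\psi_\beta} \;\leq\; D_\beta \sup_{q \geq 1}\frac{\bigl\|\max_i|X_i|\bigr\|_q}{q^{1/\beta}} \;\leq\; D_\beta \cdot e\, d_\beta^{-1} \cdot 2^{1/\beta}\, \log(n)^{1/\beta}\, M,
\]
which is the claim with $C_\beta = 2^{1/\beta} e\, D_\beta/d_\beta$. The small case $n=2$ (where $\log n < 1$) can be absorbed by enlarging $C_\beta$. There is no real obstacle here; the only subtle point is the balancing argument in the $q \leq \log n$ regime, which is necessary because the bound $n^{1/q} q^{1/\beta}$ blows up at $q=1$ but must still translate to a bound of order $q^{1/\beta}\log(n)^{1/\beta}$ uniformly over $q$.
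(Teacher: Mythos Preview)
Your argument is correct. The paper itself does not give a proof of this lemma but simply refers to \citet[Lemma C.1]{chen2019randomized}; your route via the moment characterization of Lemma \ref{lem:weibull-moments}, the $L^q$ union bound $\bigl\|\max_i |X_i|\bigr\|_q \leq n^{1/q}\max_i\|X_i\|_q$, and the balancing step at $q=\log n$ is the standard way to obtain such a maximal inequality and is essentially what underlies the cited result.
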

\begin{proof}
See \citet[Lemma C.1]{chen2019randomized}.
\end{proof}

The important conclusion of Lemma~\ref{lem:maximum-weibull} is that the bound on the individual variables $\|X_i\|_{\psi_{\beta}}$ yields a bound on the rate of growth on $\|\max_{1\leq i \leq n}|X_i|\|_{\psi_{\beta}}$. It is determined by the slowly growing function $\log(n)^{1/\beta}$, i.e. at most logarithmic in $n$. 
The last Lemma is a result considering centered random variables.

\begin{lem} \label{lem:centering}
For any $\beta>0$ and any random variable $X$ with $\mathbb{E}[X] < \infty$ we have
$$
\|X-\mathbb{E}[X]\|_{\psi_{\beta}} \leq C_{\beta} \|X\|_{\psi_{\beta}},
$$
where $C_{\beta}$ is a constant depending only on $\beta$.
\end{lem}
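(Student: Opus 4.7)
The proof is a direct two-step estimate that uses nothing beyond the quasi-triangle inequality (Lemma \ref{lem:triangle-inequality}) and the moment bound (Lemma \ref{lem:weibull-moments}).

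First I would handle the deterministic ``shift'' $c = \mathbb{E}[X]$ on its own. Since $c$ is constant, the defining inequality $\mathbb{E}[\psi_\beta(|c|/t)] = e^{(|c|/t)^\beta} - 1 \leq 1$ has the explicit solution $t = |c|/(\log 2)^{1/\beta}$, so
\[
\|c\|_{\psi_\beta} = \frac{|c|}{(\log 2)^{1/\beta}}.
\]
Next, by Jensen's inequality $|\mathbb{E}[X]| \leq \mathbb{E}[|X|] = \|X\|_1$, and Lemma \ref{lem:weibull-moments} applied with $q = 1$ gives $\|X\|_1 \leq d_\beta^{-1} \|X\|_{\psi_\beta}$. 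Combining,
\[
\|\mathbb{E}[X]\|_{\psi_\beta} \leq \frac{1}{d_\beta (\log 2)^{1/\beta}}\,\|X\|_{\psi_\beta} =: A_\beta \|X\|_{\psi_\beta}.
\]

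Finally, I would split into two cases according to whether $\psi_\beta$ yields a genuine norm. If $\beta \geq 1$, the triangle inequality applies directly:
\[
\|X - \mathbb{E}[X]\|_{\psi_\beta} \leq \|X\|_{\psi_\beta} + \|\mathbb{E}[X]\|_{\psi_\beta} \leq (1 + A_\beta)\|X\|_{\psi_\beta}.
\]
If $0 < \beta < 1$, I apply Lemma \ref{lem:triangle-inequality} with $n = 2$ to the pair $(X, -\mathbb{E}[X])$, obtaining
\[
\|X - \mathbb{E}[X]\|_{\psi_\beta} \leq 2^{1/\beta}\bigl(\|X\|_{\psi_\beta} + \|\mathbb{E}[X]\|_{\psi_\beta}\bigr) \leq 2^{1/\beta}(1 + A_\beta)\|X\|_{\psi_\beta}.
\]
In either case the constant depends only on $\beta$, proving the lemma. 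There is no real obstacle here; the only subtlety worth flagging is that for $\beta < 1$ one must replace the ordinary triangle inequality by the factor-$2^{1/\beta}$ quasi-triangle inequality, but that is already supplied by Lemma \ref{lem:triangle-inequality}.
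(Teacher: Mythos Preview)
Your proof is correct and follows essentially the same route as the paper's: split into the cases $\beta \geq 1$ (triangle inequality) and $\beta < 1$ (quasi-triangle inequality from Lemma~\ref{lem:triangle-inequality}), then bound $\|\mathbb{E}[X]\|_{\psi_\beta}$ via Jensen's inequality and Lemma~\ref{lem:weibull-moments} with $q=1$. The only cosmetic difference is that you compute the constant $\|c\|_{\psi_\beta} = |c|/(\log 2)^{1/\beta}$ explicitly, whereas the paper simply records $\|a\|_{\psi_\beta} \leq C_\beta |a|$ for constants $a$.
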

\begin{proof}
The proof is similar to the proof of \citet[Lemma 2.6.8]{vershynin2018high} which treats the special case $\beta = 2$. Let $C_{\beta}$ be a constant only depending on $\beta$ but the value can change from place to place. For $\beta \geq 1$ recall that $\|\cdot\|_{\psi_{\beta}}$ is a norm. Thus we can use the triangle inequality and get
$$\|X-\mathbb{E}[X]\|_{\psi_{\beta}} \leq \|X\|_{\psi_{\beta}} + \|\mathbb{E}[X]\|_{\psi_{\beta}}.$$
If $0 < \beta < 1$ we use Lemma~\ref{lem:triangle-inequality} with $n=2$ and get
$$\|X-\mathbb{E}[X]\|_{\psi_{\beta}} \leq 2^{\frac{1}{\beta}} \left(\|X\|_{\psi_{\beta}} + \|\mathbb{E}[X]\|_{\psi_{\beta}}\right).$$
We only have to bound the second term of both inequalities. Note that, for any constant $a$, we trivially have $\|a\|_{\psi_{\beta}} \leq C_{\beta} |a|$ by the Definition of $\|\cdot\|_{\psi_{\beta}}$. Using this for $a=\mathbb{E}[X]$, we get
\begin{eqnarray*}
\|\mathbb{E}[X]\|_{\psi_{\beta}} &\leq& C_{\beta} |\mathbb{E}[X]| \\
&\leq& C_{\beta} \mathbb{E}[|X|] \quad (\textrm{by Jensen's inequality}) \\
&=& C_{\beta} \|X\|_1 \\
&\leq& C_{\beta} \|X\|_{\psi_{\beta}} \quad(\textrm{by Lemma~\ref{lem:weibull-moments} with } q=1).
\end{eqnarray*}
\end{proof}

\clearpage

\section{Additional Simulation Results for Gaussian Latent Tree Models} \label{sec:additional-simulations} 
\begin{figure}[ht]
\captionsetup[subfigure]{labelformat=empty}
\centering
\begin{subfigure}{0.75\linewidth}
\centering
\includegraphics[width=\linewidth]{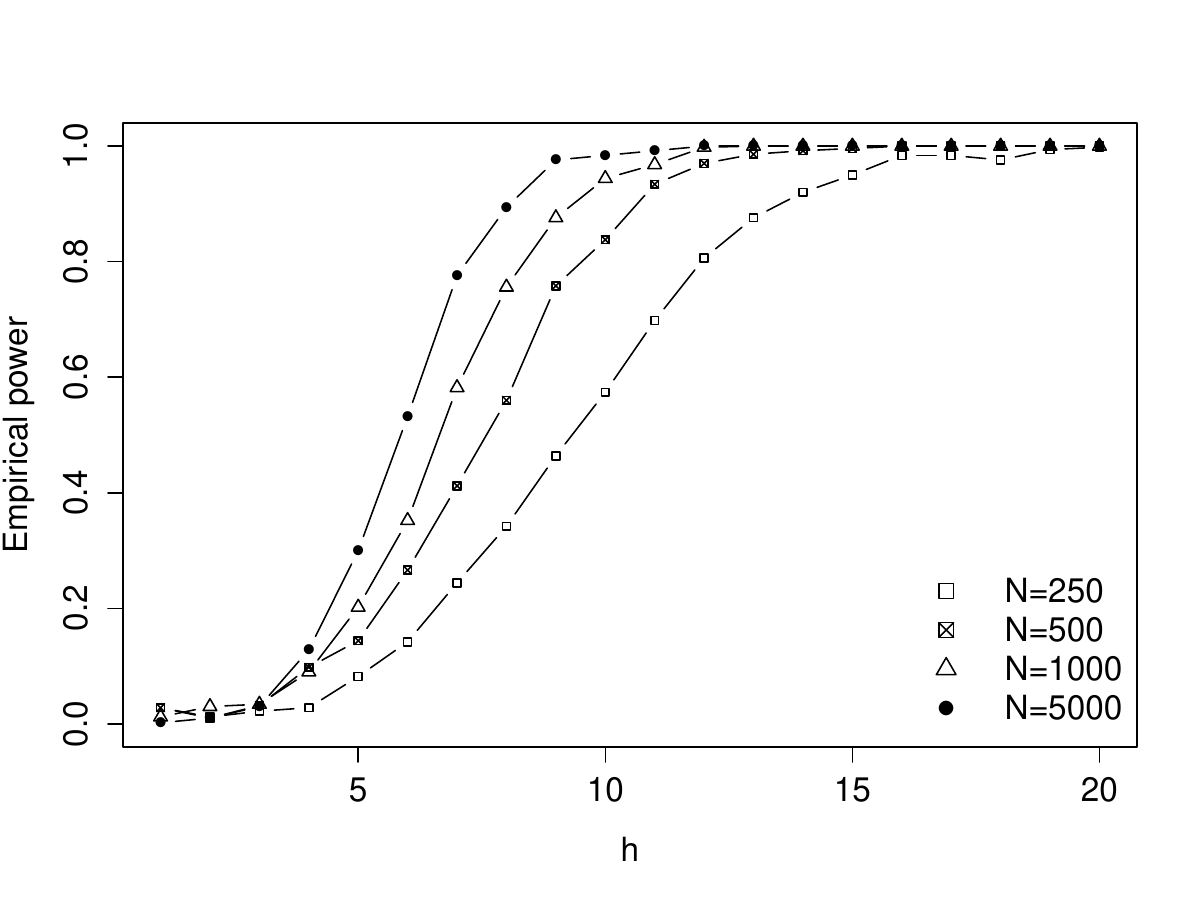}
\caption{Setup (b)}
\end{subfigure}
\hfill
\begin{subfigure}{0.75\linewidth}
\centering
\includegraphics[width=\linewidth]{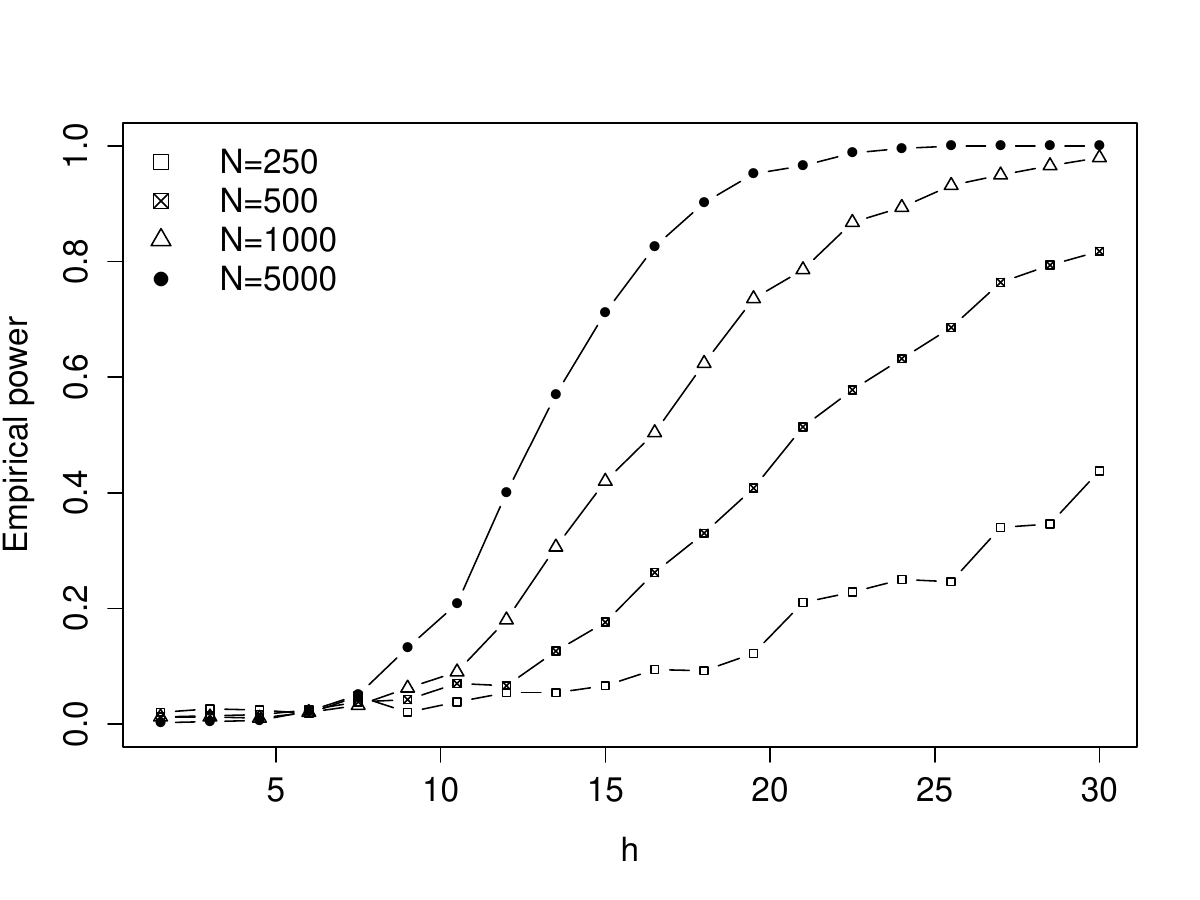}
\caption{Setup (c)}
\end{subfigure}
\caption{Empirical power for different local alternatives based on $500$ experiments. The computational budget parameters $N$ is varied as indicated. Local alternatives are generated as described in Section~\ref{sec:tree-models} for setups (b) and (c) with $(l,n)=(15,500)$ and level $\alpha=0.05$.}
\label{fig:power-setup2-3}
\end{figure}

\section{Testing Minors in Two-Factor Analysis Models} \label{sec:2-factor}
In this section, we consider testing model invariants of two-factor analysis models as another application of our testing methodology. By \citet{drton2007algebraic} we have the following parametric representation of two-factor analysis models.
\begin{prop}
The two-factor analysis model is the family of Gaussian distributions $N_l(\mu, \Sigma)$ on $\mathbb{R}^l$ whose mean vector $\mu$ is an arbitrary vector in $\mathbb{R}^l$ and whose covariance matrix $\Sigma$ is in the set
$$
F_{l,2} = \{\Lambda + \Psi \in \mathbb{R}^{l \times l}: \Lambda > 0 \textrm{ diagonal, } \Psi \geq 0 \textrm{ symmetric, } \rank(\Psi) \leq 2 \}.
$$
\end{prop}
Here the notation $B > 0$ means that $B$ is a positive definite matrix, and  $B \geq 0$ means that $B$ is positive semidefinite. Given a covariance matrix $\Sigma \in F_{l,2}$ in the two-factor analysis model, all off-diagonal $3 \times 3$ minors are vanishing \citep{drton2007algebraic}. We are interested in testing all of these model invariants simultaneously based on i.i.d.~samples $X_1, \ldots, X_n \sim N(0,\Sigma)$, where the covariance matrix $\Sigma$ in unknown. Up to sign, there are $p=10 \binom{l}{6}$ off-diagonal $(3 \times 3)$-minors, that is, the number of minors grows very fast with the dimension $l$. Since minors of $\Sigma=(\sigma_{uv})$ are polynomials in the entries $\sigma_{uv}$, we can apply the kernel proposed in Section ~\ref{sec:polynomial-hypotheses}. We consider two experimental setups:
\begin{itemize}
    \item[(\textit{Reg})] $\Psi = \Gamma \Gamma^{\top}$ with $\Gamma \in \mathbb{R}^{l \times 2}$ and all entries of $\Gamma$ are independently generated based on a standard normal distribution. All diagonal entries $\Lambda_{vv}$ are taken to be $1$.
    \item[(\textit{Irreg})] $\Psi = \beta_1 \beta_1^{\top} + \beta_2 \beta_2^{\top}$ with $\beta_1, \beta_2 \in \mathbb{R}^l$. All entries of $\beta_1$ are equal to $1$.  The first two entries of $\beta_2$, i.e., $\beta_{21}$ and $\beta_{22}$, are taken to be $10$ while all other entries of $\beta_2$ are independently generated based on a normal distribution with mean $0$ and variance $0.2$. All diagonal entries $\Lambda_{vv}$ are taken to be $1/3$. 
\end{itemize}

Setup \textit{Reg} is designed to be a regular setup while setup \textit{Irreg} is irregular. Moreover, the parameters in the irregular setup are close to an algebraic singularity \citep[Example 33]{drton2007algebraic} such that we expect that the likelihood ratio test fails to control test size.

In the implementation we use $A = 1000$ sets of Gaussian multipliers to compute the critical value $c_W(1 - \alpha)$. We apply a more general version of the divide-and-conquer bootstrap discussed in Section~\ref{subsec:bootstrap-approx}, where we use a block-diagonal sampling scheme with a block size of $L=25$; see \citet{chen2019randomized}. However, the theoretical bound for the asymptotic approximation remains the same. As before, we compare our methodology with the likelihood ratio test implemented by the \texttt{factanal} function in \texttt{R} \citep{R}. 

In Figure~\ref{fig:sizes-2-factor} we compare empirical test sizes for different, fixed nominal levels $\alpha \in (0,1)$. Once again, we see the advantage of our test methodology in the irregular setup. While the likelihood ratio test fails to control test size, our test has lower empirical size than nominal level for all computational budgets $N$, i.e., our test controls type I error, albeit conservatively.  Moreover,  we compare the empirical power in Figure~\ref{fig:power-2-factor}, where we construct the local alternatives equivalently to the alternatives considered in Section~\ref{sec:tree-models}. That is, for $\Sigma \in F_{l,2}$, the alternatives are of the form $\widetilde{\Sigma} = \Sigma + \gamma \gamma^{\top} h/\sqrt{n}$, where $\gamma = (0, \ldots, 0,1,1) \in \mathbb{R}^l$ and $h>0$ varies. As before, we observe that the empirical power is better for larger computational budgets $N$.

\begin{figure}[p]
\captionsetup[subfigure]{labelformat=empty}
\centering
\begin{subfigure}{0.8\linewidth}
\centering
\includegraphics[width=\linewidth]{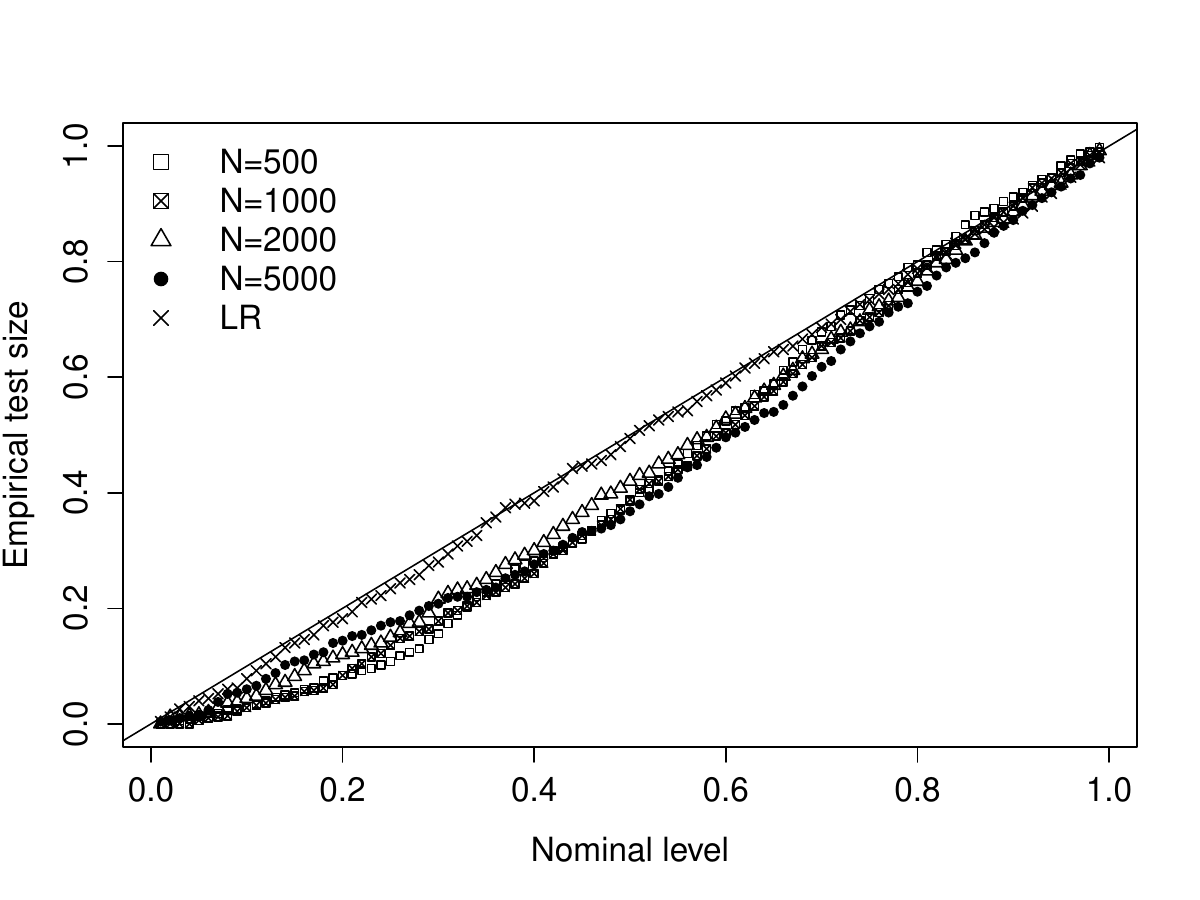}
\caption{Regular setup}
\end{subfigure}
\hfill
\begin{subfigure}{0.8\linewidth}
\centering
\includegraphics[width=\linewidth]{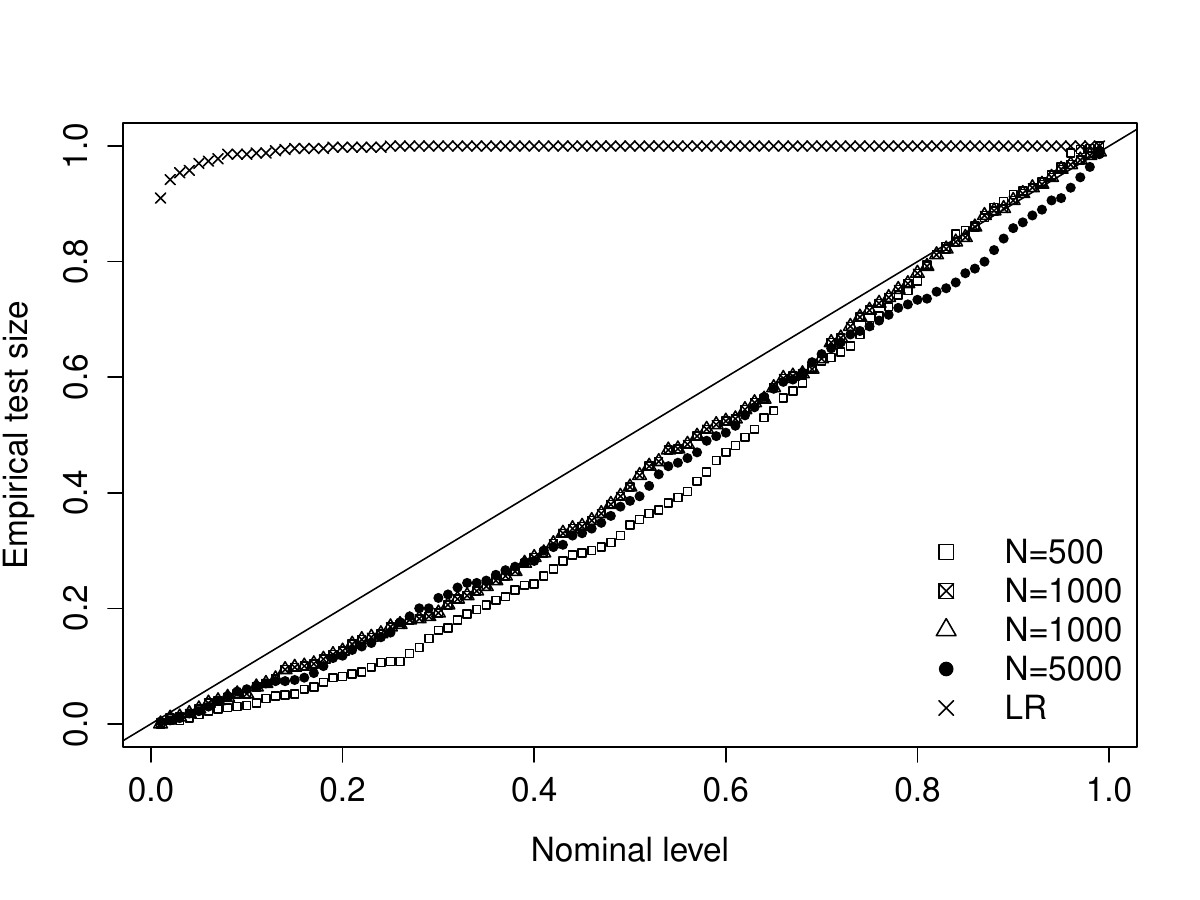}
\caption{Irregular setup}
\end{subfigure}
\caption{Empirical sizes vs. nominal levels for testing $(3 \times 3)$-minors based on $500$ experiments. The computational budget parameter $N$ is varied as indicated and empirical sizes of the LR test are also shown. Data is generated from regular and singular setups with $(l,n)=(10,500)$. }
\label{fig:sizes-2-factor}
\end{figure}

\begin{figure}[p]
\captionsetup[subfigure]{labelformat=empty}
\centering
\begin{subfigure}{0.8\linewidth}
\centering
\includegraphics[width=\linewidth]{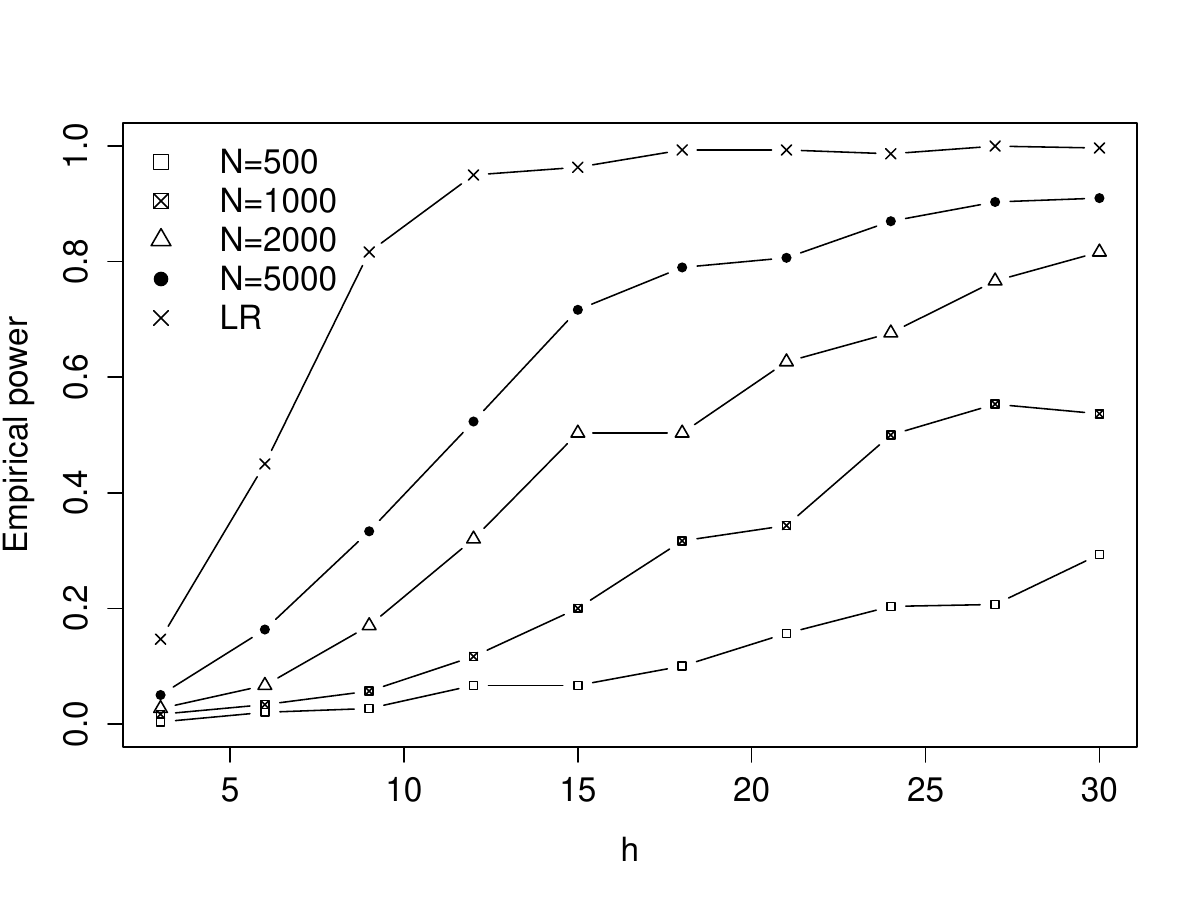}
\caption{Regular setup}
\end{subfigure}
\hfill
\begin{subfigure}{0.8\linewidth}
\centering
\includegraphics[width=\linewidth]{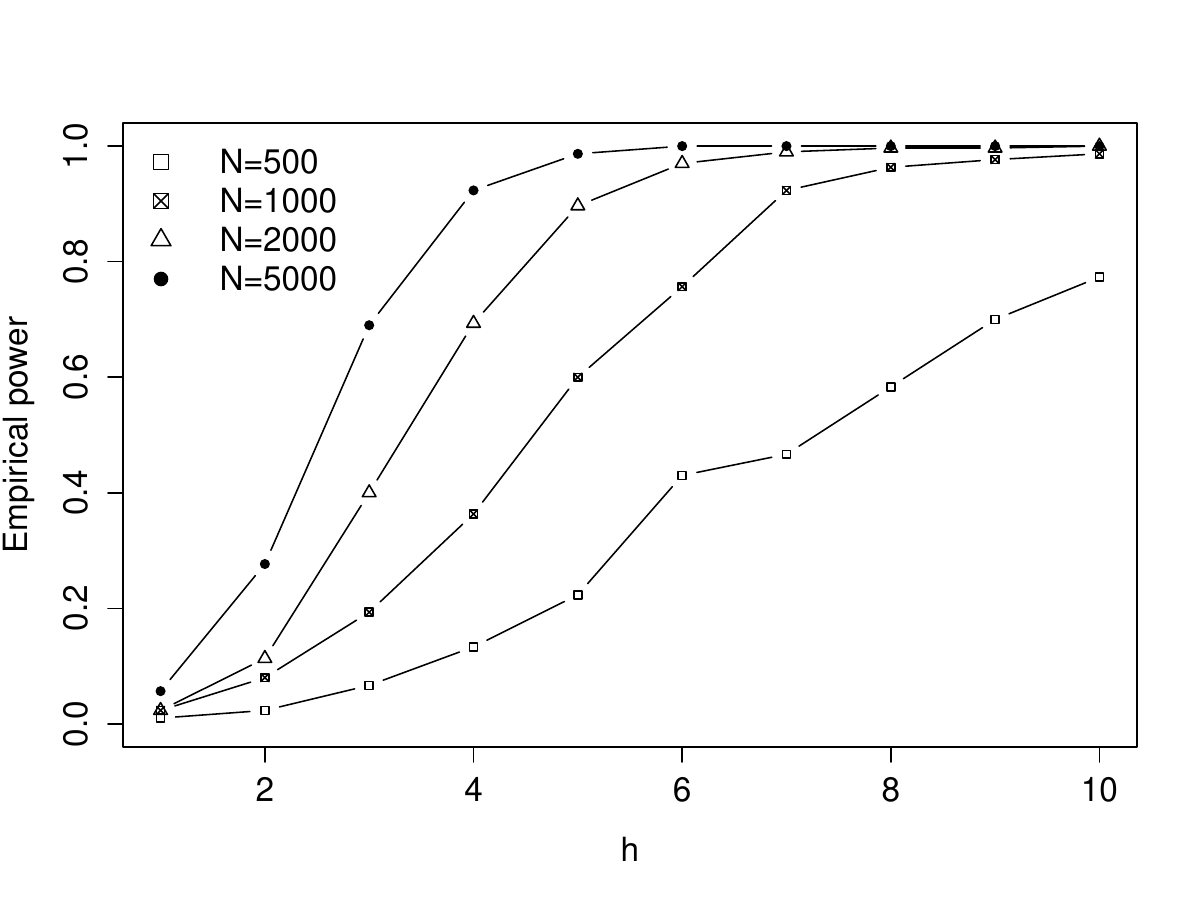}
\caption{Irregular setup}
\end{subfigure}
\caption{Empirical power in testing $(3 \times 3)$-minors for different local alternatives based on $300$ experiments. The computational budget parameters $N$ is varied as indicated. Local alternatives are generated as described in the text for regular and irregular setups with $(l,n)=(10,500)$ and level $\alpha=0.05$.}
\label{fig:power-2-factor}
\end{figure}

\end{appendix}
\end{document}